\newcommand*{\xor}{\oplus}
\newcommand*{\zo}[1]{$z_{n+1}$}
\newcommand{\bfs}{\ensuremath{\mathsf{BFnS}}}
\newcommand{\ROBDD}{\ensuremath{\mathsf{ROBDD}}}
\newcommand{\FDD}{\ensuremath{\mathsf{FDD}}}
\newcommand{\snf}{\ensuremath{\mathsf{SynNNF}}}
\newcommand{\eqnf}{\ensuremath{\mathsf{SAUNF}}}
\newcommand{\eqnfs}{\ensuremath{\mathsf{SAUNF}}}
\newcommand{\wDNNF}{\ensuremath{\mathsf{wDNNF}}}
\newcommand{\DNNF}{\ensuremath{\mathsf{DNNF}}}
\newcommand{\dDNNF}{\ensuremath{\mathsf{dDNNF}}}
\newcommand{\CNF}{\ensuremath{\mathsf{CNF}}}
\newcommand{\DNF}{\ensuremath{\mathsf{DNF}}}
\newcommand{\bigO}[1]{\ensuremath{\mathcal{O}\big({#1}\big)}}
\newcommand{\Fc}{\ensuremath{G^{-}}}
\newcommand{\Gc}{\ensuremath{G^{+}}}
\newtheorem{theorem}{Theorem}
\newtheorem{lemma}{Lemma}
\newtheorem{claim}{Claim}
\newtheorem{corollary}{Corollary}
\newtheorem{definition}{Definition}
\newtheorem{example}{Example}
\newcommand{\mset}[1]{\ensuremath{\mathsf{set}({#1})}}
\newcommand{\mseq}[1]{\ensuremath{\mathbf{{#1}}}}
\newcommand{\F}{\ensuremath{\bot}}
\newcommand{\T}{\ensuremath{\top}}
\newcommand{\ttrue}{\ensuremath{\mathsf{true}}}
\newcommand{\ffalse}{\ensuremath{\mathsf{false}}}
\newcommand{\Eup}[2]{\ensuremath{#1 \leftrightsquigarrow #2}}
\newcommand{\myP}{\ensuremath{\mathsf{P}}}
\newcommand{\myNP}{\ensuremath{\mathsf{NP}}}
\newcommand{\mycoNP}{\ensuremath{\mathsf{co}\mathsf{NP}}}
\newcommand{\myPSPoly}{\ensuremath{\mathsf{P}/\mathsf{Poly}}}
\newcommand{\R}[2]{R[#1,#2]}
\newcommand{\remove}[1]{}
\newcommand{\bpsi}{\ensuremath{\mathbf{\Psi}}}
\newcommand{\Genskolem}{SkGen}
\newcommand{\colora}{black}
\newcommand{\colorb}{black}
\newcommand{\mycolora}[1]{{\color{\colora}{#1}}}
\newcommand{\mycolorb}[1]{{\color{\colorb}{#1}}}
\newcommand{\colornega}{black}
\newcommand{\colornegb}{black}
\newcommand{\mycolornega}[1]{{\color{\colornega}{#1}}}
\newcommand{\xone}{\ensuremath{x_1}}
\newcommand{\semn}[1]{\ensuremath{\llbracket \varphi_{{#1}} \rrbracket}}
\newcommand{\xtwo}{\ensuremath{x_2}}
\newcommand{\tform}[2]{\ensuremath{\tau_{#1}^{#2} }}
\newcommand{\mycolornegb}[1]{{\color{\colornegb}{#1}}}
\newcommand{\getSubset}{GetSubset}
\newcommand{\getSAUNF}{GetSAUNF}
\newcommand{\lits}{\ensuremath{\mathsf{lits}}}
\newcommand{\XX}{\ensuremath{\mseq{X}}}
\newcommand{\II}{\ensuremath{\mseq{I}}}
\newcommand{\YY}{\ensuremath{\mseq{Y}}}
\newcommand{\discardpages}[1]{
  \xdef\discard@pages{#1}
  \AtBeginShipout{
    \renewcommand*{\do}[1]{
      \ifnum\value{page}=##1\relax%
        \AtBeginShipoutDiscard
        \gdef\do####1{}
      \fi%
    }%
    \expandafter\docsvlist\expandafter{\discard@pages}
  }%
}
\begin{document}
\title{A Normal Form Characterization for Efficient Boolean Skolem Function Synthesis}
\author{\IEEEauthorblockN{Preey Shah, Aman Bansal, S. Akshay and Supratik Chakraborty \thanks{This work was partly supported by
DST/CEFIPRA/INRIA project EQuaVE, SERB Matrices grant MTR/2018/000744 and MHRD IMPRINT-1 grant (Project 6537) of Government of India.}}\IEEEauthorblockA{Department of Computer Science and Engineering\\ Indian Institute of Technology Bombay, Mumbai, India.\\Email: \{preeyshah, aman0456b\}@gmail.com; \{akshayss, supratik\}@cse.iitb.ac.in}
}
\IEEEoverridecommandlockouts
\maketitle
\begin{abstract}
Boolean Skolem function synthesis concerns synthesizing outputs as Boolean functions of inputs such that a relational specification between inputs and outputs is satisfied. This problem, also known as Boolean functional synthesis, has several applications, including design of safe controllers for autonomous systems, certified QBF solving, cryptanalysis etc. Recently, complexity theoretic hardness results have been shown for the problem, although several algorithms proposed in the literature are known to work well in practice. This dichotomy between theoretical hardness and practical efficacy has motivated research on normal forms of specification representation that guarantee efficient synthesis, thus partially explaining the efficacy of some of these algorithms.

In this paper we go one step further and ask if there exists a normal form representation of the specification that precisely characterizes ``efficient" synthesis. We present a normal form called SAUNF that answers this question affirmatively.  Specifically, a specification is polynomial time synthesizable iff it can be compiled to SAUNF in polynomial time. Additionally, a specification admits a polynomial-sized functional solution iff there exists a semantically equivalent polynomial-sized SAUNF representation. SAUNF is exponentially more succinct than well-established normal forms like BDDs and DNNFs, used in the context of AI problems, and strictly subsumes other more recently proposed forms like SynNNF. It enjoys compositional properties that are similar to those of DNNF. Thus, SAUNF provides the right trade-off in knowledge representation for Boolean functional synthesis.
\end{abstract}

\section{Introduction}\label{sec:intro}
The history of Skolem functions can be traced all the way back to the 1920's when Thoralf Skolem provided a simplified proof of the celebrated L\"{o}wenheim Skolem theorem in first order logic. A key step in the proof showed that any first order logic formula can be converted into Skolem normal form, that has no existential quantifiers, while preserving satisfiability. This process, called Skolemization, involves replacing existentially quantified variables by terms constructed out of new function symbols, called Skolem functions.  Skolemization has been an immensely influential technique in logic, and is now used routinely in many applications, viz. automated theorem proving. While it suffices for some applications to merely know that desired Skolem functions exist, others require us to efficiently synthesize such Skolem functions.

Algorithmic synthesis of Skolem functions has been studied extensively in the Boolean setting. Given disjoint sequences of Boolean variables $\mseq{I}= (i_1, \ldots i_n)$ and $\mseq{X}= (x_1, \ldots x_m)$, representing inputs and outputs respectively of a system, 
 and given a Boolean formula $\varphi(\mseq{X}, \mseq{I})$ specifying a desired relation between the system inputs and outputs, the {\em Boolean Skolem function synthesis} ($\bfs$) problem asks us to synthesize a sequence of formulas $\bpsi(\mseq{I})=(\psi_1(\mseq{I}),\ldots,\psi_m(\mseq{I}))$ that can be substituted for $\mseq{X}$ to satisfy the specification, i.e., $\forall \mseq{I}\,\big(\varphi(\bpsi(\mseq{I}), \mseq{I}\big) \Leftrightarrow \exists \mseq{X}\ \varphi(\mseq{X}, \mseq{I})\big)$. The formulas in $\bpsi$ indeed represent Boolean Skolem functions for $\mseq{X}$ in $\exists \mseq{X}\varphi(\mseq{X}, \mseq{I})$\footnote{We are conflating functions and formulas here for simplicity, the distinction will be made clear later.}.

The above problem, also referred to as \emph{Boolean functional synthesis} in the literature, has several applications; we will just mention two here. Skolem functions (and their counterparts, called Herbrand functions) can be thought of as ''certificates" that help us independently verify the results of satisfiability checking for Quantified Boolean Formulas, as done in~\cite{jiang}.
QBF-satisfiability solving is used today in diverse applications, from planning to program repair to reactive synthesis and the like~\cite{survey}.  Having certificates not only helps in verifying correctness of QBF-satisfiability results, but also has other benefits like providing a feasible plan in a planning problem. Yet another application of Skolem functions is motivated by cryptanalysis.
Consider a system with a single $2n$-bit unsigned integer input $\mseq{I}$, and two $n$-bit unsigned integer outputs $\mseq{X}_1$ and $\mseq{X}_2$.
Suppose the relational specification is given as $\varphi_{\mathit{fact}}( \mseq{X}_1, \mseq{X}_2, \mseq{I}) \equiv ((\mseq{I} = \mseq{X}_1\times_{[n]} \mseq{X}_2) \wedge (\mseq{X}_1\neq \mathbf{1}) \wedge (\mseq{X}_2\neq \mathbf{1}))$, where $\times_{[n]}$ denotes $n$-bit unsigned integer multiplication and $\mathbf{1}$ denotes an $n$-bit representation of the integer $1$.
This specification can be represented as a Boolean formula of size $\mathcal{O}(n^2)$ over the variables in  $\mseq{I}$, $\mseq{X}_1$ and $\mseq{X}_2$.
Finding Skolem functions for $\mseq{X}_1$ and $\mseq{X}_2$ in terms of $\mseq{I}$ effectively asks us to solve the ($n$-bit) factorization problem.
Note that if $\mseq{I}$ represents a prime number, there are no values of  $\mseq{X}_1,\mseq{X}_2$ that satisfy the specification.
Hence the specification is technically {\em unrealizable};  yet, it is of significant interest (e.g. in cryptanalysis) to synthesize Skolem functions for $\mseq{X}_1, \mseq{X}_2$ that can be evaluated efficiently.
Note that it is an open question whether there are polynomial time algorithms or even non-uniform polynomial sized circuits for integer factorization. 

Given its significance, the Boolean Skolem function synthesis problem has received considerable attention over the last two decades, with a lot of work focussed towards design of practically efficient algorithms~\cite{Rabe15,bierre,rabe2,jiang2,rsynth,tacas2017,bafsyn:fmcad2018,fmcad2015:skolem,KMPS10,cav18,cav20}.
These algorithms, using techniques ranging from CEGAR to decision tree learning, empirically work well on large collections of benchmarks, but fail remarkably for some small benchmarks.
Further, and somewhat surprisingly, each tool seems to work well on a different set of benchmarks, often incomparable across tools.
What is common among the approaches, however, is that it is difficult to predict reliably the set of benchmarks, i.e.,  the class of formulas, on which a particular algorithm will be efficient (other than simple cases or heuristic guesses).
On a related note, a theoretical study undertaken in~\cite{cav18} showed that Boolean Skolem function synthesis requires super-polynomial space and time unless some well-regarded complexity-theoretic conjectures are falsified.
In fact,~\cite{cav18} also showed that under some weaker assumptions, there cannot exist even sub-exponential algorithms for this problem. 

This leads to a curious dichotomy of theoretical worst-case hardness vs practical (sometimes unreasonable) efficiency.
To resolve this dichotomy, researchers have searched for structure in the input specification that can result in provably efficient synthesis.
It turns out that the representation used for input specification and Skolem functions indeed has a bearing on the complexity of synthesis.
For example, if the specification is given as a {\ROBDD}~\cite{bryant1986} with \emph{input-first variable ordering} (see~\cite{rsynth} for details), there exists a polynomial-time algorithm that generates Skolem functions as {\ROBDD}s~\cite{rsynth}. In~\cite{fmcad19}, a new normal form for specifications, called {\snf}, was proposed, which ensures polynomial-time synthesis, assuming both the specification and Skolem functions are represented as arbitrary Boolean circuits.
However, these earlier studies only provide sufficient but not necessary conditions for efficient Skolem function synthesis.
Significantly, it is not the case that every class of specifications that admit efficient Skolem function synthesis can be efficiently compiled to {\ROBDD}s with input-first variable ordering, or even to {\snf}.
Indeed, \cite{fmcad19} gives (counter-)examples of specifications that are not in {\snf} but admit efficient Skolem function synthesis.  

In this paper, we address the above dichotomy, by presenting  a normal form for Boolean circuits, called \eqnf\  (acronym for {\em Subset-And-Unrealizable Normal Form}), that characterizes polynomial-time and polynomial-sized Boolean Skolem function synthesis.
By a characterization,  we mean that for \emph{every} class $\mathcal{C}$ of circuits (i) Skolem functions can be synthesized in polynomial-time for specifications represented by circuits in $\mathcal{C}$ iff these circuits can be compiled to semantically equivalent ones in \eqnf\ in polynomial-time, and (ii) specifications represented by circuits in $\mathcal{C}$ admit polynomial-sized Skolem functions iff they can be compiled into polynomial-sized semantically equivalent circuits in \eqnf.  This notion
is made precise later in  Section~\ref{sec:prob-stmt}.
We also explore the proposed normal form in depth,  and present several interesting results. Our main contributions are the following. 
\begin{itemize}
\item We show that \eqnf\ is (often exponentially) more succinct and strictly subsumes several other sub-classes (viz. {\DNNF}, {\dDNNF}~\cite{ddnnf}, {\wDNNF}~\cite{cav18}, {\ROBDD}, {\snf}). 
\item We present a polynomial-time algorithm to synthesize polynomial-sized Skolem functions from specifications in $\eqnf$.
\item We study compositional properties of \eqnf\ including disjunction and conjunction operations.
\item We show that checking membership in \eqnf\ is Co-NP hard and is in the second level of the polynomial hierarchy.
\item We present a novel algorithm for compiling a Boolean relational specification in {\CNF} to {\eqnf}.
\end{itemize}
Finally, we show an interesting application of \eqnf.
Specifically, we show that in the context of the $n$-bit factorization problem mentioned earlier, there exist polynomial-sized \eqnf\ circuits relating specific bits of the input $\mseq{I}$ to the outputs $\mseq{X}_1$ and $\mseq{X}_2$.
While this does not solve the $n$-bit factorization problem,  it is worth noting that some of these bit relations are known to require exponentially large {\ROBDD}s~\cite{bryant91}, and sub-exponential sized
circuits using normal forms like {\DNNF}, {\dDNNF}, {\snf} are not yet known.

Normal forms for Boolean functions, or knowledge representation in general, have been investigated extensively over the last few decades~\cite{CD97,dnnf,dsharp,DM11}. While a problem compiled to a normal form may allow the problem to be solved efficiently, compilation to the normal form may not always be easy.
For instance, {\dDNNF} allows polynomial-time model counting, but converting to {\dDNNF} cannot always be done in polynomial time unless ${\myP}=\#{\myP}$.
Despite the worst-case complexity of the compilation process, research in normal forms offers several benefits, such as better understanding of compositionality and other structural properties, explanations for practical performance of algorithms (e.g., on benchmarks in a normal form that permits efficient analysis) etc.
Furthermore, the study of normal forms also feeds into research on normal form compilers, that have significant practical use.
For example, multiple {\dDNNF} compilers have been developed since the introduction of {\dDNNF} as a useful normal form. We also point out that different types of normal forms have been studied earlier. For example, syntactic or purely structural normal forms like {\CNF}, {\DNF}, {\DNNF} allow efficient membership checking, while semantic normal forms like {\dDNNF} require propositional satisfiability checks to determine membership.
The proposed normal form ({\eqnf}) falls in the latter category, but like {\dDNNF}, is worth studying for the good properties it exhibits.
 
 The remainder of this paper is organized as follows.  We start with preliminaries in Section~\ref{sec:prelim} and problem statement in Section~\ref{sec:prob-stmt}. In Section~\ref{sec:saunf}, we introduce \eqnf\ and compare it with other normal forms in Section~\ref{sec:saunf-reln}. In Section~\ref{sec:skolem-synthesis} we explain how Skolem functions can be efficiently computed from \eqnf\ specifications, and discuss compositionality properties in Section~\ref{sec:operations}.
 Next, in Section~\ref{sec:conversion}, we  describe an algorithm to compile a CNF formula to \eqnf. Finally, we show applications to $n$-bit factorization in Section~\ref{sec:appl} and conclude in Section~\ref{sec:concl}. 

 \section{Preliminaries}\label{sec:prelim}
 
Let $\mseq{V} = (v_1, \ldots, v_r)$ be a finite sequence of Boolean variables.  We use $\mset{\mseq{V}}$ to denote the underlying set of the sequence and $|\mseq{V}|$ to denote the length of the sequence.  A \emph{literal} $\ell$ over $\mseq{V}$ is either $v$ or $\neg v$, where $v \in \mset{\mseq{V}}$.  The set of all literals over $\mseq{V}$ is
denoted $\lits(\mseq{V})$. We use $\T$ and $\F$ to represent the Boolean constants $\ttrue$ and $\ffalse$ respectively. A Boolean \emph{formula} $\varphi$ over $\mseq{V}$ is defined by the grammar:
\begin{tabular}{rcl}
$\varphi$ & $::=$ & $\neg \varphi~\vert~ (\varphi) ~\vert~ \varphi \wedge \varphi ~\vert~ \varphi \vee \varphi ~\vert~ \T ~\vert~ \F ~\vert~ v_1 ~\vert \cdots \vert~ v_{r}$ 
\end{tabular}.
We write $\varphi(\mseq{V})$ to denote that the formula $\varphi$ is defined over the sequence of variables $\mseq{V}$.  Special cases of formulas include \emph{clauses} or disjunctions of literals, and \emph{cubes} or conjunctions of literals.  
A formula is in \emph{conjunctive normal form} (\CNF) if it is a conjunction of clauses.
Similarly, it is in \emph{disjunctive normal form} (DNF) if it is a disjunction of cubes. A Boolean \emph{function} $f(\mseq{V})$ is a mapping $\{\F,\T\}^{|\mseq{V}|} \rightarrow \{\F,\T\}$.
The semantics of the Boolean formula $\varphi(\mseq{V})$ is given by a Boolean function $\llbracket \varphi \rrbracket(\mseq{V}): \{\F,\T\}^{|\mseq{V}|} \rightarrow \{\F,\T\}$.  It is easy to see that every Boolean function $f$ corresponds to at least one Boolean formula $\varphi$ such that $\llbracket \varphi \rrbracket = f$.  

Let $\mseq{U}$ be a sub-sequence of $\mseq{V}$ (this includes the possibility $\mseq{U} = \mseq{V}$), and let $\mseq{V}\setminus\mseq{U}$ denote the sequence obtained by removing from $\mseq{V}$ all variables present in $\mseq{U}$.  An \emph{assignment} of $\mseq{U}$ is a mapping $\sigma: \mset{\mseq{U}} \rightarrow \{\F, \T\}$.  We use $\llbracket \varphi \rrbracket_{\sigma}$ to denote the Boolean function $\{\F,\T\}^{|\mseq{V}\setminus \mseq{U}|} \rightarrow \{\F,\T\}$ obtained by substituting $\sigma(v_j)$ for every variable $v_j \in \mset{\mseq{U}}$ in $\llbracket \varphi \rrbracket$.  We say that the formula $\varphi(\mseq{V})$ \emph{reduces} to the formula $\varphi'(\mseq{V}\setminus\mseq{U})$ under the assignment $\sigma$ of $\mseq{U}$ iff $\llbracket \varphi' \rrbracket = \llbracket \varphi \rrbracket_{\sigma}$.  We say that $\sigma$ \emph{satisfies} $\varphi$ if $\llbracket \varphi \rrbracket_\sigma$ always evaluates to $\T$.

We choose to represent both Boolean functions and Boolean formulas (modulo semantic equivalence) by Boolean circuits.  For purposes of this paper, a Boolean circuit (or simply a circuit) is a rooted directed acyclic graph (DAG) $G$ in which nodes with incoming edges, also called \emph{internal nodes}, are labeled by $\lor$, $\land$ and $\neg$ operators, and nodes with no incoming edges, also called \emph{leaves}, are labeled either by variables in $\mseq{V}$ or by constants in $\{\F, \T\}$.  Every internal node labeled $\wedge$ or $\vee$ has incoming edges from exactly two children, while every internal node labeled $\neg$
has an incoming edge from exactly one child.  In order to ensure that a circuit doesn't have superfluous nodes, we require all nodes in a circuit to be descendants of the root. The size of a circuit $G$, denoted $|G|$, is the number of nodes in $G$.  A circuit $G$ represents a Boolean formula $\varphi_G$ (alternatively, a Boolean function $\llbracket \varphi_G \rrbracket$, if $G$ is used to represent a Boolean function) defined as follows:
(i) if $G$ consists of a single leaf labeled $\lambda$, then $\varphi_G = \lambda$; (ii) if the root of $G$ is labeled $\mathsf{op}\in \{\wedge, \vee\}$ and if the sub-circuits rooted at its children are $G_1$ and $G_2$, then $\varphi_G = \varphi_{G_1} ~\mathsf{op}~ \varphi_{G_2}$; (iii) if the root of $G$ is labeled $\neg$ and if the sub-circuit rooted at its (only) child is $H$, then $\varphi_G = \neg \varphi_H$. 

A Boolean formula is said to be in \emph{negation normal form} (or NNF) if the application of $\neg$ is restricted to only the variables.  Motivated by this, a circuit in which every $\neg$ labeled node has a leaf labeled by a variable as its child is said to be an \emph{NNF circuit}.   It is well-known that every Boolean formula is semantically equivalent to a formula in NNF.  Since we wish to reason about Boolean formulas/functions modulo semantic equivalence, it suffices to restrict our attention to NNF circuits. \emph{For convenience of exposition, all circuits in the remainder of the paper are assumed to be in NNF, unless stated otherwise}. It is easy to see that an arbitrary Boolean circuit $G$ can be converted to an NNF circuit $G'$ such that $\llbracket \varphi_G \rrbracket = \llbracket \varphi_{G'} \rrbracket$, and $|\varphi_{G'}| \leq 2\times|\varphi_G|$.  For notational convenience, we treat a $\neg$ labeled node with a child labeled $v$ in a NNF circuit, as a new leaf labeled $\neg v$.  Thus, an NNF circuit can be viewed as a rooted DAG with $\land$- and $\lor$-labeled internal nodes and leaves labeled by literals over $\mseq{V}$.  Note that popular representations of Boolean functions, viz. lists of (implicitly conjoined) clauses, lists of (implicitly disjoined) cubes, and-inverter graphs~\cite{aiger}, ROBDDs~\cite{bryant1986}, DNNF/dDNNF circuits~\cite{dnnf,ddnnf} etc. can all be translated to NNF circuits in linear time.  Figure~\ref{fig:eqNNF} shows an example of an NNF circuit.  In this figure, the annotations $G$, $G_1$ and $G_2$ represent the (sub-)circuits rooted at the nodes adjacent to the annotations.   The leaves are designated $L_0$ through $L_{15}$ from left to right.  As is the case in this figure, multiple leaves of a circuit may have the same literal label.  

Let $L$ be a subset of leaves of circuit $G$. 
We say $L$ is \emph{literal-consistent} in $G$ if every leaf in $L$ is labeled by the same literal. For a literal-consistent set $L$ of leaves in $G$, and for $b \in \{\F,\T\}$, we use $G\mid_{L:b}$ to denote the circuit obtained by re-labeling each leaf of $G$ in the set $L$ with $b$.  For a literal $\ell$ over $\mseq{V}$, we use the term $\ell$-\emph{leaves} of $G$ to denote the set of all leaves of $G$ labeled $\ell$.  For a set of distinct literals $\{\ell_1, \ldots, \ell_r\}$  and (possibly same) labels $b_1, \ldots, b_r$, we abuse notation and use $G\mid_{\ell_1=b_1, \ldots, \ell_r=b_r}$ to denote the circuit obtained by re-labeling  all $\ell_j$-leaves of $G$ by $b_j$, for all $j \in \{1, \ldots, r\}$.  Note that since $\ell$ and $\neg \ell$ are different literals, the notation $G\mid_{\ell=b, \neg\ell=b}$ is meaningful (and useful), and represents the circuit obtained by re-labeling all $\ell$-leaves and $\neg \ell$-leaves of $G$ by $b$.

Let $\mseq{I} = (i_1, \ldots, i_n)$ and $\mseq{X} = (x_1, \ldots, x_m)$ be disjoint sequences of Boolean variables representing inputs and outputs, respectively, of a hypothetical system.  For clarity of exposition, we use "system inputs" to refer to $\mseq{I}$, and "system outputs" to refer to $\mseq{X}$.  Consider a circuit $G$ with leaves labeled by $\lits(\mseq{X})$ and $\lits(\mseq{I})$. The formula $\varphi_G(\mseq{X}, \mseq{I})$ represents a relational specification over the system inputs $\mseq{I}$ and system outputs $\mseq{X}$.  Given $G$, the \emph{Boolean Skolem Function Synthesis} or $\bfs$ problem requires us to find a sequence of Boolean formulas $\bpsi(\mseq{I}) = \big(\psi_1(\mseq{I}), \ldots, \psi_m(\mseq{I})\big)$ such that  $\forall \mseq{I}\,\big(\varphi_G(\bpsi(\mseq{I}), \mseq{I}\big) \Leftrightarrow \exists \mseq{X}\ \varphi_G(\mseq{X}, \mseq{I})\big)$.
As seen earlier, this is an important problem with diverse applications. We call $\psi_j(\mseq{I})$ a \textit{Skolem function}\footnote{Technically, $\llbracket \psi_j \rrbracket$ is the Boolean Skolem function for $x_j$ in $\varphi_G$. However, since we represent both Boolean functions and formulas as circuits, we use $\psi_j$ and $\llbracket \psi_j \rrbracket$ interchangeably for Boolean Skolem functions, to keep the notation simple.} 
for $x_j$ in $\varphi_G(\mseq{X}, \mseq{I})$, and the sequence (or vector) of all such Skolem functions for $x_1, \ldots, x_m$  a \emph{Skolem function vector} for $\mseq{X}$ in $\varphi_G(\mseq{X}, \mseq{I})$.  Since we have chosen to represent all Boolean formulas and functions as circuits, we require each $\psi_j(\mseq{I})$ to be presented as a circuit.
\begin{example}
Let $\mseq{X} = (x_1, x_2)$ and $\mseq{I} = (i)$.  Let $G$ be the circuit shown in Figure~\ref{fig:eqNNF}.  Then $\varphi_G(\mseq{X}, \mseq{I})$ is a relational specification over $\mseq{I}$ and $\mseq{X}$, and one (of possibly many) Skolem function vectors for $\mseq{X}$ in $\varphi_G$ is $\bpsi(\mseq{I}) = \big(\psi_1(\mseq{I}), \psi_2(\mseq{I})\big)$, where
$\psi_1(\mseq{I}) = \neg i = \psi_2(\mseq{I})$. Indeed, it can be verified that $\forall \mseq{I}\, \big(\varphi_G(\bpsi(\mseq{I}), \mseq{I}) \Leftrightarrow \exists \mseq{X} \varphi_G(\mseq{X}, \mseq{I})\big)$.
\end{example}

    \begin{figure*}
        \begin{center}
    \scalebox{0.7}{
    \begin{tikzpicture}[
        edge from parent/.style={draw,latex-},
        var/.style={draw,circle, minimum size=1cm, black},
        op/.style={draw,circle, minimum size=0.75cm, black},
        level distance=1.5cm,
        level 1/.style={sibling distance=10cm},
        level 2/.style={sibling distance=5cm},
        level 3/.style={sibling distance=2.5cm},
        level 4/.style={sibling distance=1.25cm}]
    \node[op,label=right:{~~$G$}] {$\vee$}
    child {node[op,label=left:{$G_1$~~}] {$\wedge$}
        child {node[op] {$\vee$}
            child {node[op] {$\wedge$}
                child {node[var,label=below:{$L_0$}] {$i$}}
                child {node[var,label=below:{$L_1$}] {$\mycolornega{\neg x_1}$}}
            }
            child {node[op] {$\wedge$}
                child {node[var,label=below:{$L_2$}] {$\neg{i}$}}
                child {node[var,label=below:{$L_3$}] {${x_1}$}}
            }
        }
        child {node[op] {$\vee$}
            child {node[op] {$\wedge$}
                child {node[var,label=below:{$L_4$}] {$i$}}
                child {node[var,label=below:{$L_5$}] {$\mycolornegb{\neg{x_2}}$}}
            }
            child {node[op] {$\wedge$}
                child {node[var,label=below:{$L_6$}] {$\neg{i}$}}
                child {node[var,label=below:{$L_7$}] {$x_2$}}
            }
        }
    }
    child {node[op,label=right:{~~$G_2$}] {$\wedge$}
        child {node[op] {$\vee$}
            child {node[op] {$\wedge$}
                child {node[var,label=below:{$L_8$}] {$i$}}
                child {node[var,label=below:{$L_9$}] {$\neg{x_2}$}}
            }
            child {node[op] {$\wedge$}
                child {node[var,label=below:{$L_{10}$}]
                {\mycolora{$x_1$}}}
                child {node[var,label=below:{$L_{11}$}] {$\neg x_2$}}
            }
        }
        child {node[op] {$\vee$}
            child {node[op] {$\wedge$}
                child {node[var,label=below:{$L_{12}$}] {$i$}}
                child {node[var,label=below:{$L_{13}$}] {\mycolorb{$x_2$}}}
            }
            child {node[op] {$\wedge$}
                child {node[var,label=below:{$L_{14}$}] {$\neg{x_1}$}}
                child {node[var,label=below:{$L_{15}$}] {$\neg{x_2}$}}
            }
        }
    };
    \end{tikzpicture}
    }
    \end{center}
\caption{Example of an NNF circuit $G$, which is in {\eqnf} w.r.t. $\mset{\mseq{X}} = \{x_1, x_2\}$ and $(\{L_3\}, \{L_7\}, \{L_5\}, \{L_1\})$ }
        \label{fig:eqNNF}
    \end{figure*}

\section{Problem Statement}\label{sec:prob-stmt}
Earlier work~\cite{cav18} has established (conditional) time and space lower bounds for $\bfs$; therefore it is unlikely that efficient algorithms exist for solving this problem in general.  Yet, several recent works~\cite{cav18,rsynth,rabe2,rabe3,cav20,fmsd20} have shown that $\bfs$ indeed admits practically efficient solutions for several non-trivial benchmarks.  This motivates us to ask the following question, where we are interested in Boolean circuit representations of relational specifications and Skolem functions.

\noindent \emph{Does there exist a class, say $\mathcal{C}^\star$, of circuits such that the following hold?
\begin{enumerate}
    \item[P0:] For every circuit $G$, there is a semantically equivalent circuit $G^\star \in \mathcal{C}^\star$, i.e. $\llbracket \varphi_G \rrbracket = \llbracket \varphi_{G^\star} \rrbracket$. In other words, $\mathcal{C}^\star$ is not semantically constraining.
    \item[P1:] $\bfs$ is solvable in polynomial-time for the class $\mathcal{C}^\star$.
    \item[P2:] For every class $\mathcal{C}$ of circuits, 
    \begin{itemize}
        \item[P2a:] $\bfs$ is solvable in polynomial-time for the class $\mathcal{C}$ iff circuits in $\mathcal{C}$ can be compiled to semantically equivalent ones in $\mathcal{C}^\star$ in polynomial-time.
        \item[P2b:] Relational specifications represented by circuits in $\mathcal{C}$ admit polynomial-sized Skolem function vectors iff circuits in $\mathcal{C}$ admit polynomial-sized semantically equivalent circuits in $\mathcal{C}^\star$.
    \end{itemize}
\end{enumerate}
}

We answer the above question positively in this paper, effectively providing a circuit normal form characterization of efficient Boolean Skolem function synthesis.    In light of our characterization, the hardness results of~\cite{cav18} translate to the hardness of computing $G^\star \in \mathcal{C}^\star$  such that $\llbracket \varphi_{G^\star} \rrbracket = \llbracket \varphi_G \rrbracket$.

\section{A Normal Form for Synthesis}\label{sec:saunf}
Let $G$ be a circuit with leaves labeled by  $\lits(\mseq{I})$ and $\lits(\mseq{X})$.  Let $\ell$ be a literal labeling a leaf of $G$, and let $v_\ell$ be the underlying variable of $\ell$.  \emph{Throughout this section, we assume that $w, w'$ are fresh variables not in $\mseq{I}$ or $\mseq{X}$}.  
\begin{definition}\label{def:l-unreal}
We say that $\ell$ is \emph{$\land$-realizable} in $G$ iff there is an assignment $\sigma: (\mset{\mseq{I}} \cup \mset{\mseq{X}})\setminus \{v_\ell\} \rightarrow \{\F,\T\}$ such that $\llbracket G\mid_{\ell= w, \neg \ell = w'} \rrbracket_{\sigma} = \llbracket (w \wedge w') \rrbracket$. Furthermore, we say that $\ell$ is \emph{$\land$-unrealizable} in $G$ iff it is not $\land$-realizable in $G$. 
\end{definition}
Intuitively, $\ell$ is $\land$-realizable in $G$ if $\varphi_G$ reduces to $w \wedge w'$ under some assignment of variables other than $v_{\ell}$, after $\ell$ and $\neg \ell$ are replaced by $w$ and $w'$ respectively in the leaves of $G$.  It is easy to see that \emph{if $\ell$ is $\land$-realizable (resp. $\land$-unrealizable) in $G$, then so is $\neg \ell$}.
\begin{example}
Consider circuit $G$ in Figure~\ref{fig:eqNNF}, and let $G_1$ and $G_2$ denote the sub-circuits rooted at the left and right child, respectively of the root node. Then $x_1$ is $\land$-realizable in $G_2$ and $G$ (use $\sigma(i) = \sigma(x_2) = \F$) but is $\land$-unrealizable in $G_{1}$.
\end{example}
We now extend the notion of $\land$-(un)realizability to that of sets of literal-consistent leaves.  If $S$ is the set of all $\ell$-leaves of $G$, the notion of $S$ being $\land$-realizable (resp. $\land$-unrealizable) in $G$ naturally coincides with that of literal $\ell$ being $\land$-realizable (resp. $\land$-unrealizable) in $G$.  However, if $S$ does not contain all $\ell$-leaves of $G$, we must specify what to do with leaves labeled $\ell$ but not in $S$. The following definition does exactly that.
\begin{definition}\label{def:set-realizability}
Let $S$ be a literal-consistent set of leaves of $G$, and let $\ell$ be the literal labeling each leaf in $S$. Let $S'$ be the set of all $\ell$-leaves of $G$.
We say that $S$ is $\land$-realizable (resp. $\land$-unrealizable) in $G$ if $\ell$ is $\land$-realizable (resp. $\land$-unrealizable) in $G\mid_{S'\setminus S: \F}$ 
\end{definition}
Thus, all $\ell$-leaves that are not in $S$ must be labeled $\F$ before we check whether $\ell$ is $\land$-realizable in the resulting circuit.
\begin{example}
Referring back to Figure~\ref{fig:eqNNF}, we wish to check the $\land$-(un)realizability of $S = \{L_3\}$ in $G$.  The literal labeling $L_3$ is $x_1$ and the set of all $x_1$-leaves is $S' = \{L_3, L_{10}\}$.  Hence $S' \setminus S = \{L_{10}\}$.  To check the $\land$-(un)realizability of $S$, we re-label $L_{10}$ with $\F$, and $L_3$ with a fresh variable $w$.  Additionally, all leaves labeled $\neg x_1$, i.e. $L_1$ and $L_{14}$ are re-labeled with a fresh variable $w'$.  

Let $G'$ denote the resulting circuit.  We now ask if there is an assignment $\sigma: \{i, x_2\} \rightarrow \{\F,\T\}$ such that $\llbracket \varphi_{G'} \rrbracket_\sigma = \llbracket w \wedge w' \rrbracket$.  From the circuit structure of $G'$, we can see that there is only one leaf, viz. $L_3$, labeled $w$.  Hence, in order to have $\llbracket \varphi_{G'} \rrbracket_\sigma = \llbracket w \wedge w' \rrbracket$, the assignment $\sigma$ must not mask the value of $L_3$ from ``propagating" up to the root of $G'$.  This implies that $L_2$ must be labeled $\T$, i.e. $\sigma(i) = \F$, and $\sigma(x_2) = \neg \sigma(i) = \T$.  With this $\sigma$, it is now easy to verify that  $\llbracket \varphi_{G'} \rrbracket_\sigma = \llbracket w \rrbracket \neq \llbracket w \wedge w' \rrbracket$.  Hence, there is no assignment of $x_2$ and $i$ that renders the formula represented by $G'$ semantically equivalent to $w \land w'$.  It follows that $S = \{L_3\}$ is $\land$-unrealizable in the circuit shown in  Figure~\ref{fig:eqNNF}.
A similar exercise shows that $\widehat{S} = \{L_{10}\}$ is $\land$-realizable in the same circuit (use $\sigma(i) = \sigma(x_2) = \F$).
\end{example}

Finally, we use the above definitions to introduce a new normal form for circuits that precisely characterizes efficient Boolean Skolem Function Synthesis.  We show in subsequent sections that this normal form defines a class $\mathcal{C}^\star$ of circuits that satisfies properties P0, P1 and P2 described in Section~\ref{sec:prob-stmt}.
\begin{definition}\label{def:saunf}
Let $G$ be a circuit with leaves labeled by  $\lits(\mseq{I})$ and $\lits(\mseq{X})$. Let $S = (S_1, S_2,.. S_k)$ be a non-empty sequence of subsets of leaves of $G$. We say that $G$ is in \emph{Subset And-Unrealizable Normal Form  ({\eqnf}, for short)} w.r.t. $\mset{\mseq{X}}$ and $S$ if the following hold: 
\begin{enumerate}
    \item $S_j \cap S_l = \emptyset$ for all distinct $j, l \in \{1, \ldots k\}$.
    \item For each $j \in \{1, \ldots k\}$,  all leaves in $S_j$ are labeled by the same literal over $\mseq{X}$.
    \item $S_1$ is $\land$-unrealizable in $G$.
    \item For each $j \in \{2 \ldots k\}$, $S_j$ is $\land$-unrealizable in $G\mid_{S_1:\T, S_2:\T \ldots S_{j-1}:\T}$.
    \item $\llbracket \varphi_{G\mid_{S_1:\T,S_2:\T\ldots S_k:\T}} \rrbracket$ is semantically independent of $\mseq{X}$, i.e. its value doesn't depend on the assignment of $\mseq{X}$.
\end{enumerate} 
\end{definition}
A few points about Definition~\ref{def:saunf} are worth noting.  
\begin{itemize}
    \item A circuit $G$ may be in {\eqnf} w.r.t. $\mset{\mseq{X}}$ and $S$, but not in {\eqnf} w.r.t. a different $\mset{\mseq{X}'}$ and/or $S'$. 
    \item Conditions $3$, $4$ and $5$ are semantic in nature.  Normal forms with such semantic conditions are not new.  For example, the widely used disjoint decomposable negation normal form ({\dDNNF}) uses a semantic condition in its definition (see ~\cite{ddnnf}). 
    \item $S_1 \cup \cdots S_k$ may not include all leaves of $G$, nor even all leaves labeled by a literal over $\mseq{X}$.
       \item While the use of $\T$ as labels for leaves in $S_1, S_2, \ldots$ in conditions 4 and 5 may seem arbitrary for now, we will soon see the significance of this in the synthesis of Boolean Skolem functions.
\end{itemize}

\begin{example}
Consider the circuit $G$ in Figure \ref{fig:eqNNF} again, with $\mseq{I} = (i)$ and  $\mseq{X} =(x_1, x_2)$.  Let $S = (\{L_3\}, \{L_7\}, \{L_5\}, \{L_1\})$ be a sequence of (singleton) subsets of leaves. As seen above, $\{L_3\}$ is $\land$-unrealizable in $G$.  It can similarly be verified that $\{L_7\}$ is $\land$-unrealizable in $G\mid_{\{L_3\}:\T}$, $\{L_5\}$ is $\land$-unrealizable in $G\mid_{\{L_3\}:\T,\{L_7\}:\T}$ and $\{L_1\}$ is $\land$-unrealizable in $G\mid_{\{L_3\}:\T,\{L_7\}:\T, \{L_5\}:\T}$.  Finally, the function represented by $G\mid_{\{L_3\}:\T,\{L_7\}:\T, \{L_5\}:\T, \{L_1\}:\T}$ is semantically equivalent to $\T$, and hence is independent of $\mseq{X}$.  Therefore, the circuit $G$ is in {\eqnf} w.r.t. $\mset{\mseq{X}} = \{x_1, x_2\}$ and $S = (\{L_3\}, \{L_7\}, \{L_5\}, \{L_1\})$.  However, $G$ is not in {\eqnf} w.r.t. $\{x_1, x_2\}$ and $S' = (\{L_{10}\}, \{L_7\}, \{L_5\}, \{L_1\})$, since we have seen earlier that $\{L_{10}\}$ is $\land$-realizable in $G$.
\end{example}

\section{Relation with other normal forms}\label{sec:saunf-reln}
Several normal forms for Boolean circuits studied in the literature, viz. {\ROBDD}~\cite{bryant1986}, {\FDD}~\cite{bryant91}, {\DNNF}~\cite{dnnf}, {\dDNNF}~\cite{ddnnf}, {\wDNNF}~\cite{cav18}, {\snf}~\cite{fmcad19}, admit efficient Boolean Skolem function synthesis, and satisfy properties P0 and P1 in our problem statement (see Section~\ref{sec:prob-stmt}).
However, \emph{none of these are known to satisfy property P2 in our problem statement, thereby failing to provide a characterization of efficient Boolean Skolem function synthesis}. 
In contrast, as we show in this paper, the class of {\eqnf} circuits satisfies all the properties mentioned in our problem statement.  

Among the various alternative normal forms, we discuss {\snf}~\cite{fmcad19} first.
We say that a circuit normal form (or class of circuits) $\mathcal{N}_1$ is exponentially more succinct than another normal form $\mathcal{N}_2$ if 
(i) for every circuit $G_2 \in \mathcal{N}_2$, there exists a circuit $G_1 \in \mathcal{N}_1$ such that $|G_1| \leq |G_2|$ and $\llbracket \varphi_{G_1} \rrbracket ~=~ \llbracket \varphi_{G_2} \rrbracket$, and (ii) there is a circuit $G_1 \in \mathcal{N}_1$ such that every circuit $G_2 \in \mathcal{N}_2$ with $\llbracket \varphi_{G_1} \rrbracket ~=~ \llbracket \varphi_{G_2} \rrbracket$ has $|G_2| \in 2^{\bigO{|G_1|}}$.
The notion of super-polynomial succinctness is similarly defined.
The authors of~\cite{fmcad19} showed a conditional succinctness result for {\snf}, namely {\snf} is super-polynomially more succinct than {\DNNF}~\cite{dnnf} and {\dDNNF}~\cite{ddnnf}, unless some long-standing complexity theoretic conjectures are falsified.
We show the following stronger result for {\eqnf}.
\begin{lemma}\label{lemma:exp-sep}
{\eqnf} is unconditionally exponentially more succinct than {\DNNF} and {\dDNNF}. 
\end{lemma}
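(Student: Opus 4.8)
The plan is to establish the two halves of the succinctness relation separately: first the ``at least as succinct'' half, showing that every \DNNF\ (and hence every \dDNNF) circuit is already in \eqnf\ w.r.t.\ its input/output partition, so that one may take $G_1 = G_2$; and second the strict exponential gap, by exhibiting an explicit family of specifications with linear-size \eqnf\ circuits but only exponential-size \DNNF\ (hence \dDNNF) representations.

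For the first half, the key technical claim is: \emph{in any \DNNF\ circuit $G$, every literal $\ell$ is $\land$-unrealizable}. I would prove this using decomposability. Fix the underlying variable $v_\ell$ and any assignment $\sigma$ of the remaining variables, and consider the function $h(w,w')$ computed by $G\mid_{\ell=w,\neg\ell=w'}$ under $\sigma$. Since all leaves carrying variable $v_\ell$ (i.e.\ the $w$- and $w'$-leaves) share that variable, decomposability forces them to lie under at most one child of every $\land$-node; thus at each $\land$-node at most one child depends on $\{w,w'\}$ while the other evaluates to a constant under $\sigma$. A bottom-up induction then shows that the only functions of $(w,w')$ realizable at any node are $\F,\T,w,w',w\vee w'$ --- built from the leaf functions using $\vee$ and conjunction-with-a-constant, but never a genuine conjunction of two non-constant subfunctions --- so $h$ can never equal $w\wedge w'$. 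Combining this with the easy fact that relabelling leaves to $\T$ preserves decomposability, I take $S=(S_1,\dots,S_k)$ to be the groups of all $\mseq{X}$-labelled leaves, one group per distinct $\mseq{X}$-literal: conditions~1--2 of Definition~\ref{def:saunf} are immediate, conditions~3--4 follow because each $S_j$ is exactly the set of all $\ell_j$-leaves in the (still \DNNF) conditioned circuit so the claim applies, and condition~5 holds since setting all $\mseq{X}$-leaves to $\T$ removes every $\mseq{X}$-variable. As $\dDNNF \subseteq \DNNF$, the same construction covers \dDNNF.

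For the strict separation I would use a function $f_n(\mseq{I})$ that admits a linear-size circuit but requires $2^{\Omega(n)}$-size \DNNF\ --- such explicit families are known unconditionally (e.g.\ Tseitin formulas over bounded-degree expanders, whose \DNNF\ lower bound follows from communication-complexity arguments, while their \CNF\ is of size $\bigO{n}$). Introduce a fresh output variable $x$, set $\mseq{X}=(x)$, and take $\varphi_n = x \vee f_n(\mseq{I})$, represented by the circuit $G_n = x \vee G_{f_n}$ where $G_{f_n}$ is a linear-size circuit for $f_n$. Then $G_n$ is in \eqnf\ with witness $S=(\{x\text{-leaf}\})$: the sole $x$-leaf becomes $w$ and no $\neg x$-leaf exists, so $w\vee G_{f_n}$ cannot reduce to $w\wedge w'$ (it is independent of $w'$), giving $\land$-unrealizability, while $G_n\mid_{x:\T}\equiv\T$ is $\mseq{X}$-independent. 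Hence $|G_n| = \bigO{n}$. Conversely, any \DNNF\ for $\varphi_n$, conditioned by $x=\F$, becomes a \DNNF\ for $f_n$ of no larger size, forcing size $2^{\Omega(n)} = 2^{\Omega(|G_n|)}$; since every \dDNNF\ is a \DNNF, the same bound holds for \dDNNF.

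I expect the main obstacle to be the second half, specifically pinning down a \emph{provably} (unconditionally) hard-for-\DNNF\ function that simultaneously has a small circuit --- this is exactly what upgrades the known \emph{conditional} \snf-vs-\DNNF\ separation to an unconditional one, and the argument hinges on an explicit \DNNF\ lower bound imported from the literature. The first-half structural claim is routine once decomposability is invoked; the only delicate point there is the degenerate case of a \DNNF\ circuit with no $\mseq{X}$-leaves (an $\mseq{X}$-independent specification), which must be handled by noting that such specifications have trivial Skolem functions and admit an equally small \eqnf\ representation directly.
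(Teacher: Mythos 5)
Your proposal is correct, and it reaches the lemma by a genuinely different route than the paper. The paper's proof is a one-shot import: Proposition~11 of \cite{Bova16} supplies a family $JS_r$ with $\bigO{r^2}$-size \CNF{} representations in which every literal occurrence is \emph{negative}, together with an unconditional $2^{\Omega(r^2)}$ lower bound on \DNNF{} (hence \dDNNF) size; since no positive literal appears, $G\mid_{\ell=w,\neg\ell=w'}$ can never depend on both $w$ and $w'$, so no literal-consistent subset of leaves is $\land$-realizable and the \CNF{} circuit is already in \eqnf{} --- no padding needed. (The containment half of the succinctness definition is not argued inside that proof at all; the paper gets it indirectly from \DNNF{}~$\subseteq$~\snf{}~\cite{fmcad19} and Lemma~\ref{lem:subsume}.) You do both halves yourself: your decomposability induction (at each $\land$-node at most one child sees $v_\ell$, so only $\F,\T,w,w',w\vee w'$ arise) is a clean, self-contained proof that every \DNNF{} is in \eqnf{}, bypassing \snf{} entirely; and your padding trick $x\vee f_n(\mseq{I})$ is more modular than the paper's construction, since it lifts \emph{any} unconditional circuit-vs-\DNNF{} gap to an \eqnf{}-vs-\DNNF{} gap (the lone $x$-leaf is trivially $\land$-unrealizable because no $\neg x$-leaf exists, $G_n\mid_{x:\T}$ is $\mseq{X}$-independent, and conditioning a \DNNF{} on $x=\F$ preserves decomposability and size). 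What the paper's route buys is precisely the certified lower bound you flag as your main obstacle: your ``e.g.\ Tseitin formulas over bounded-degree expanders'' would still have to be verified against the \DNNF{} lower-bound literature, whereas $JS_r$ comes with the bound already proved in \cite{Bova16}. The fix is immediate --- feed $JS_r$ (small \CNF, hence small circuit) into your padded construction, or observe as the paper does that it needs no padding at all --- so this is a gap of citation, not of argument. Your handling of the degenerate $\mseq{X}$-independent case is also consistent with the paper's conventions (cf.\ the empty sequence returned in Algorithm~\ref{alg:getsaunf}).
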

\begin{proof}
We use a result from~\cite{Bova16} to prove the lemma. In Proposition 11 of \cite{Bova16}, a family of Boolean functions $\{JS_r \mid r \ge 2\}$ is defined. The formula  $JS_r$, defined on $\bigO{r^2}$ variables, asserts that for every triple $(v_j, v_k, v_l)$ of variables in a carefully constructed set $A_r$ of triples, at least one of $v_j, v_k$ or $v_l$ must be $\ffalse$.  It is shown in \cite{Bova16} that $|A_r| \in \bigO{r^2}$. Therefore, a \CNF formula representing $JS_r$ has $\bigO{r^2}$ clauses, with each clause having three negated variables as literals.  Since no non-negated variables appear as literals in the formula, a circuit representation of the \CNF\ formula cannot have any literal-consistent subset of leaves that is $\land$-realizable. This implies that $JS_r$ can be represented in {\eqnf} in size $\bigO{r^2}$.  It is also shown in~\cite{Bova16} that any {\DNNF} (and hence also {\dDNNF}) representation of $JS_r$ requires size $2^{\Omega\big(r^2\big)}$.  Therefore, {\eqnf} is unconditionally exponentially more succinct compared to {\DNNF} and {\dDNNF}.
\end{proof}
Next, we show that {\snf} is, in fact, a special case of {\eqnf}.  Towards this end, we recall the definition of {\snf} from~\cite{fmcad19}, re-cast in our terminology.
\begin{definition}\label{def:synnnf}
A circuit $G$ with leaves labeled by $\lits(\mseq{I})$ and $\lits(\mseq{X})$ is in {\snf} w.r.t. $\mseq{X}$ iff the following hold:
\begin{itemize}
    \item $x_1$ is $\land$-unrealizable in $G$.
    \item For $2 \le i \le |\mseq{X}|$, $x_i$ is $\land$-unrealizable in $G\mid_{x_1= \T, \neg{x_1} = \T, \ldots x_{i-1} = \T, \neg x_{i-1}= \T}$.
\end{itemize}
\end{definition}
The following lemma shows that {\eqnf} strictly subsumes {\snf}.
\begin{lemma}\label{lem:subsume}
Every circuit $G$ that is in {\snf} w.r.t. $\mseq{X}$ is also a {\eqnf} circuit 
w.r.t $\mset{\mseq{X}}$ and a sequence $S$ of $2\cdot|\mseq{X}|$ $\land$-unrealizable subsets of leaves. However, there exist {\eqnf} circuits that are not in {\snf}.
\end{lemma}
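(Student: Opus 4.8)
The plan is to prove the two halves separately. For the inclusion, given a {\snf} circuit $G$ over $\mseq{X}=(x_1,\ldots,x_m)$, I would exhibit the witnessing sequence $S=(S_1,\ldots,S_{2m})$ in which, for each $i$, $S_{2i-1}$ is the set of \emph{all} $x_i$-leaves of $G$ and $S_{2i}$ is the set of \emph{all} $\neg x_i$-leaves of $G$, so that the two polarities of each variable appear consecutively and in increasing order of $i$. Conditions~1 and~2 of Definition~\ref{def:saunf} are then immediate, since distinct variables and distinct polarities label disjoint, literal-consistent leaf sets. Condition~5 is also immediate: the circuit $G\mid_{S_1:\T,\ldots,S_{2m}:\T}$ relabels \emph{every} $\mseq{X}$-leaf of $G$ by $\T$, so its function is manifestly independent of $\mseq{X}$. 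All the content therefore sits in the $\land$-unrealizability conditions 3 and 4.

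Writing $H_j = G\mid_{S_1:\T,\ldots,S_{j-1}:\T}$ for the circuit against which $S_j$ is tested, I would discharge conditions 3 and 4 with two observations. (i) For an \emph{odd} step $S_{2i-1}$: since it collects all $x_i$-leaves and the relabelings forming $H_{2i-1}$ touch only strictly lower-indexed variables, $S_{2i-1}$ is still the set of all $x_i$-leaves of $H_{2i-1}$; hence the set $S'\setminus S$ of Definition~\ref{def:set-realizability} is empty, and $\land$-unrealizability of $S_{2i-1}$ coincides with $\land$-unrealizability of the \emph{literal} $x_i$ in $H_{2i-1}=G\mid_{x_1=\T,\neg x_1=\T,\ldots,x_{i-1}=\T,\neg x_{i-1}=\T}$ --- which is verbatim the {\snf} condition for $x_i$. (ii) For an \emph{even} step $S_{2i}$: in $H_{2i}$ all $x_i$-leaves have already been relabeled $\T$, so when Definition~\ref{def:l-unreal} tests $\neg x_i$ by putting $w$ on the $\neg x_i$-leaves and $w'$ on the $x_i$-leaves, the fresh variable $w'$ labels no leaf and cannot appear in the reduced formula; that formula can therefore never equal $w \wedge w'$, and $\neg x_i$ is \emph{automatically} $\land$-unrealizable. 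Observation (ii) --- interleaving the two polarities so that every even step is discharged for free --- is precisely the device that simulates the ``both-polarities-simultaneously'' relabeling of {\snf} by the ``one-subset-at-a-time'' regime of {\eqnf}, and I expect it to be the main conceptual step. The only loose end is a literal that labels no leaf of $G$, yielding an empty subset; such a subset is $\land$-unrealizable for the same ``$w$ never appears'' reason, so exactly $2m$ subsets are retained.

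For strictness I would reuse the running example: the circuit $G$ of Figure~\ref{fig:eqNNF}, already shown to be in {\eqnf} w.r.t. $\{x_1,x_2\}$ and $(\{L_3\},\{L_7\},\{L_5\},\{L_1\})$, and argue that it is in {\snf} for \emph{no} ordering of $(x_1,x_2)$. Since any {\snf} ordering must begin with a variable that is $\land$-unrealizable in $G$, it suffices to check that \emph{both} $x_1$ and $x_2$ are $\land$-realizable in $G$. The case of $x_1$ (witness $\sigma(i)=\sigma(x_2)=\F$) is already recorded in the earlier example; I would complete the picture by verifying that $x_2$ is $\land$-realizable as well, via $\sigma(i)=\sigma(x_1)=\T$, under which $G\mid_{x_2=w,\neg x_2=w'}$ reduces to $w\wedge w'$. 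With no variable available to serve as the first $\land$-unrealizable variable, $G$ lies outside {\snf} for every ordering, establishing that {\eqnf} strictly subsumes {\snf}. This second reduction is the only genuine computation --- routine, but it must be carried out to rule out both orderings.
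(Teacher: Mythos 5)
Your proposal is correct and follows essentially the same route as the paper: the same interleaved sequence $S_{2i-1}=\{x_i\text{-leaves}\}$, $S_{2i}=\{\neg x_i\text{-leaves}\}$ for the subsumption direction, and the same counterexample (the circuit of Figure~\ref{fig:eqNNF} with witnesses $\sigma(i)=\sigma(x_2)=\F$ for $x_1$ and $\sigma'(i)=\sigma'(x_1)=\T$ for $x_2$) for strictness. Your observations (i) and (ii) merely spell out the verification that the paper compresses into ``it can now be seen from Definition~\ref{def:synnnf} and Definition~\ref{def:saunf},'' and they are accurate.
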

\begin{proof}
Suppose a circuit $G$ is in {\snf} w.r.t. $\mseq{X}$, and let $|\mseq{X}| = r$.  We define a sequence of $2r$ literal-consistent subsets of leaves of $G$ as follows.  For each $j \in \{1, \ldots r\}$, we define $S_{2j-1}$ to be the set of $x_j$-leaves of $G$, and $S_{2j}$ to be the set $\neg x_j$-leaves of $G$.  It can now be seen from Definition~\ref{def:synnnf} and Definition~\ref{def:saunf} that $G$ is in {\eqnf} w.r.t. $\mset{\mseq{X}}$ and the sequence $S = (S_1, \ldots S_{2n})$ of subsets of literal-consistent leaves. This proves the first part of the lemma.

To show the second part, we must demonstrate a circuit that is in {\eqnf} but not in {\snf} for any permutation of the sequence of system outputs $\mseq{X}$.  We claim  that the circuit $G$ in Figure~\ref{fig:eqNNF}, already shown to be in {\eqnf}, suffices for this purpose.
This is because Definition~\ref{def:synnnf} entails that for $G$ to be in {\snf}, at least one literal over $\mseq{X}$ must be $\land$-unrealizable in $G$.  However, none of $x_1, \neg x_1, x_2, \neg x_2$ are $\land$-unrealizable in the circuit $G$ in Figure~\ref{fig:eqNNF}. Specifically, $\llbracket \varphi_{G\mid_{x_1=w,\neg x_1=w'}}  \rrbracket_\sigma = \llbracket w\wedge w' \rrbracket$ when $\sigma(i)= \sigma(x_2) = \F$, and $\llbracket \varphi_{G\mid_{x_2=w,\neg x_2=w'}} \rrbracket_{\sigma'} = \llbracket w\wedge w' \rrbracket$ when $\sigma'(i)= \sigma'(x_1) = \T$.  Therefore, the circuit in Figure~\ref{fig:eqNNF} is not in {\snf} w.r.t. any
permutation of $\mseq{X}$.
\end{proof}
It has been shown in~\cite{fmcad19} that every {\DNNF}, {\dDNNF} and {\wDNNF} circuit is also in {\snf}, and {\snf} is super-polynomially more succinct than {\wDNNF} unless ${\myP} = {\myNP}$.  Furthermore, every {\ROBDD} and {\FDD} can be converted to a {\DNNF} (hence {\snf}) circuit with at most linear blowup in size, although {\snf} can be exponentially more succinct than {\ROBDD} or {\FDD}~\cite{dnnf,fmcad19}.  By virtue of Lemma~\ref{lem:subsume}, we now have the following result.
\begin{corollary}\label{cor:subsume}
All subsumption and (conditional) succinctness results for {\snf} circuits hold for {\eqnf} circuits as well.
\end{corollary}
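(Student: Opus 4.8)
The plan is to derive the corollary purely by composing \Lemmaref{lem:subsume} with the facts about {\snf} already recalled in the surrounding text, so that no fresh combinatorial argument about $\land$-(un)realizability is required. The single lever is the containment {\snf} $\subseteq$ {\eqnf} furnished by \Lemmaref{lem:subsume}: every {\snf} circuit is, verbatim and without any change in size, also an {\eqnf} circuit (witnessed by the explicit sequence of $2\cdot|\mseq{X}|$ subsets). Consequently {\eqnf} inherits every compact representation that {\snf} can produce, which is exactly what both the subsumption claims and the ``small-side'' of the succinctness claims demand.

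For the subsumption results, I would simply chain containments. Every {\DNNF}, {\dDNNF} and {\wDNNF} circuit lies in {\snf}, hence in {\eqnf}; since the step {\snf} $\to$ {\eqnf} is size non-increasing, these containments remain size-preserving. For {\ROBDD} and {\FDD}, I would reuse the known linear-blowup conversion into {\DNNF} (hence {\snf}) and then append the zero-cost step into {\eqnf}, so the overall conversion stays linear. Thus {\eqnf} subsumes each of these classes with at most the same blowup that {\snf} already incurred.

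For the succinctness results, the task is to check that both clauses of the definition of ``exponentially (resp.\ super-polynomially) more succinct'' transfer. Clause (i) asks that for every circuit $G_2$ in the weaker class there exist an {\eqnf} circuit $G_1$ with $|G_1| \le |G_2|$ and $\llbracket \varphi_{G_1}\rrbracket = \llbracket \varphi_{G_2}\rrbracket$; since the weaker class is contained in {\snf} $\subseteq$ {\eqnf}, taking $G_1 = G_2$ works. Clause (ii) asks for a single {\eqnf} witness all of whose equivalents in the weaker class are exponentially (resp.\ super-polynomially) large; here I would take the {\snf} witness supplied by the corresponding separation of~\cite{fmcad19}, which by \Lemmaref{lem:subsume} is itself an {\eqnf} circuit of identical size, and observe that clause (ii) depends only on the represented function and on the weaker class, not on how the witness is classified on the {\eqnf} side. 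Hence each separation --- {\eqnf} over {\wDNNF} (unless ${\myP} = {\myNP}$) and {\eqnf} over {\ROBDD}/{\FDD} --- follows with its original conditional caveat intact.

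The argument is essentially bookkeeping, but the one point needing care is the asymmetry of succinctness: a naive appeal to ``{\eqnf} is the larger class'' could be misapplied, since clause (i) requires the \emph{richer} class {\eqnf} to furnish a representation no larger than one in the weaker class. This works precisely because the containment {\snf} $\subseteq$ {\eqnf} runs in the direction that lets {\eqnf} reuse {\snf}'s (and hence the weaker class's) compact representations, and because it is size non-increasing. Verifying this direction --- rather than the spurious reverse inclusion --- is the sole subtlety, after which the corollary is immediate.
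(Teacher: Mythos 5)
Your proposal is correct and follows essentially the same route as the paper: the paper derives the corollary directly from Lemma~\ref{lem:subsume} (the size-preserving containment {\snf} $\subseteq$ {\eqnf}) combined with the known {\snf} subsumption and succinctness results recalled just before the statement. Your write-up merely makes explicit the bookkeeping --- both clauses of the succinctness definition and the direction of the containment --- that the paper leaves implicit.
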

Since every Boolean specification can be represented as a {\snf} circuit~\cite{fmcad19}, it also follows from Lemma~\ref{lem:subsume} that property P0 in our problem statement (see Section~\ref{sec:prob-stmt}) holds for the class of {\eqnf} circuits.


\section{Efficient synthesis of Skolem functions from {\eqnfs} specifications}\label{sec:skolem-synthesis}
We now show how a Skolem function vector can be efficiently computed if the relational specification is given as a {\eqnf} circuit.  Informally, the process involves transforming a given {\eqnf} circuit $G$ with leaves labeled by $\lits(\mseq{I})$ and $\lits(\mseq{X})$ to a semantically different but related circuit $H$ with leaves labeled by $\lits(\mseq{I})$, $\lits(\mseq{X})$ and $\lits(\mseq{X}')$, where $\mseq{X}'$ is a sequence of fresh system outputs, also called \emph{auxiliary outputs}.  The transformation is done in a way such that a Skolem function vector for $(\mseq{X}, \mseq{X}')$ in $H$ can be found efficiently, and a projection of this Skolem function vector on the first $|\mseq{X}|$ components directly yields a Skolem function vector for $\mseq{X}$ in $G$. To formalize this notion, we begin with a few definitions.

\begin{definition}\label{def:equisynthproj}
[Equisynthesizable Under Projection] Let $G$ be a circuit representing a relational specification over system inputs $\mseq{I}$ and systems outputs $\mseq{X}$.  Let $H$ be another circuit representing a relational specification over $\mseq{I}$ and $(\mseq{X},\mseq{X}')$, where $\mseq{X}'$ is a fresh sequence of system outputs or auxiliary outputs. We say that $G$ is equisynthesizable to $H$ under projection, denoted $\Eup{G}{H}$, iff the following hold
\begin{itemize}
\item $\forall \mseq{I} \forall \mseq{X} \,\big(\varphi_G(\mseq{X}, \mseq{I}) ~\Rightarrow~ \exists \mseq{X}'\, \varphi_H(\mseq{X}, \mseq{X}', \mseq{I})\big)$
\item $\forall \mseq{I} \forall \mseq{X} \forall \mseq{X}' \big(\varphi_H(\mseq{X},\mseq{X}',\mseq{I}) ~\Rightarrow~ \varphi_G(\mseq{X},\mseq{I})\big)$
\end{itemize}
\end{definition}
It follows from Definition~\ref{def:equisynthproj} that \Eup{}{} defines a transitive relation on circuits representing relational specifications.   The following lemma is an easy consequence of Definition~\ref{def:equisynthproj}.
\begin{lemma}
If $\Eup{G}{H}$ holds and  $\big(\bpsi(\mseq{I}), \bpsi'(\mseq{I})\big)$
is a Skolem function vector for $(\mseq{X}, \mseq{X}')$ in $\varphi_H(\mseq{X}, \mseq{X}', \mseq{I})$, then $\bpsi(\mseq{I})$ is a Skolem
function vector for $\mseq{X}$ in $\varphi_G(\mseq{X}, \mseq{I})$.
\end{lemma}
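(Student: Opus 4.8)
The plan is to argue directly from Definition~\ref{def:equisynthproj}, unwinding what it means for $\big(\bpsi(\mseq{I}), \bpsi'(\mseq{I})\big)$ to be a Skolem function vector for $(\mseq{X}, \mseq{X}')$ in $\varphi_H$ and then showing the defining Skolem condition for $\bpsi(\mseq{I})$ with respect to $\varphi_G$. Recall that being a Skolem function vector for $(\mseq{X},\mseq{X}')$ in $\varphi_H$ means
\[
\forall \mseq{I}\,\Big(\varphi_H\big(\bpsi(\mseq{I}), \bpsi'(\mseq{I}), \mseq{I}\big) \Leftrightarrow \exists \mseq{X}\,\exists \mseq{X}'\, \varphi_H(\mseq{X}, \mseq{X}', \mseq{I})\Big),
\]
and the goal is to establish
\[
\forall \mseq{I}\,\Big(\varphi_G\big(\bpsi(\mseq{I}), \mseq{I}\big) \Leftrightarrow \exists \mseq{X}\, \varphi_G(\mseq{X}, \mseq{I})\Big).
\]
The key bridge is that, because $\Eup{G}{H}$ holds, the two existential projections coincide: $\exists \mseq{X}\, \varphi_G(\mseq{X}, \mseq{I})$ and $\exists \mseq{X}\,\exists \mseq{X}'\, \varphi_H(\mseq{X}, \mseq{X}', \mseq{I})$ are equivalent for every $\mseq{I}$. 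This follows by combining the two implications in the definition of $\Eup{G}{H}$: the first clause gives that any $\mseq{X}$ satisfying $\varphi_G$ can be extended by some $\mseq{X}'$ to satisfy $\varphi_H$, hence $\exists \mseq{X}\,\varphi_G \Rightarrow \exists \mseq{X}\exists \mseq{X}'\,\varphi_H$; the second clause gives that any $(\mseq{X},\mseq{X}')$ satisfying $\varphi_H$ has its $\mseq{X}$-part satisfying $\varphi_G$, hence the reverse implication.

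First I would fix an arbitrary assignment $\sigma$ to $\mseq{I}$ and prove the biconditional for $\varphi_G$ in both directions. For the forward direction, suppose $\varphi_G(\bpsi(\sigma), \sigma)$ holds. Then $\exists \mseq{X}\,\varphi_G(\mseq{X}, \sigma)$ holds trivially (witnessed by $\bpsi(\sigma)$ itself), which is exactly the right-hand side, so nothing more is needed here. For the converse, suppose $\exists \mseq{X}\,\varphi_G(\mseq{X}, \sigma)$ holds. By the equivalence of projections established above, $\exists \mseq{X}\exists \mseq{X}'\,\varphi_H(\mseq{X}, \mseq{X}', \sigma)$ holds as well. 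Since $\big(\bpsi, \bpsi'\big)$ is a Skolem function vector for $H$, its defining biconditional forces $\varphi_H\big(\bpsi(\sigma), \bpsi'(\sigma), \sigma\big)$ to hold. Finally, applying the second clause of Definition~\ref{def:equisynthproj} with $\mseq{X} = \bpsi(\sigma)$, $\mseq{X}' = \bpsi'(\sigma)$, we conclude $\varphi_G\big(\bpsi(\sigma), \sigma\big)$, which is the desired left-hand side.

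Since $\sigma$ was an arbitrary assignment to $\mseq{I}$, the biconditional holds universally, establishing that $\bpsi(\mseq{I})$ is a Skolem function vector for $\mseq{X}$ in $\varphi_G$. I expect the only subtle point to be the careful threading of the two implications of $\Eup{G}{H}$ through the Skolem biconditional for $H$: one must invoke the first clause only to argue the projections agree, and the second clause to transfer the satisfying assignment $\big(\bpsi(\sigma), \bpsi'(\sigma)\big)$ back down to $\varphi_G$. Everything else is routine quantifier manipulation, so I do not anticipate any genuine obstacle; the proof is essentially a one-line chase once the projection-equivalence observation is isolated.
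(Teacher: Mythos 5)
Your proposal is correct and follows essentially the same argument as the paper: the forward direction is trivial, and the converse is the chain $\exists \mseq{X}\,\varphi_G \Rightarrow \exists \mseq{X}\exists \mseq{X}'\,\varphi_H \Rightarrow \varphi_H(\bpsi(\mseq{I}),\bpsi'(\mseq{I}),\mseq{I}) \Rightarrow \varphi_G(\bpsi(\mseq{I}),\mseq{I})$, using the first clause of Definition~\ref{def:equisynthproj}, the Skolem biconditional for $H$, and the second clause, respectively. The only cosmetic difference is that you isolate the full projection equivalence up front, whereas the paper invokes only the one direction actually needed in the chain.
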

\begin{proof}
 Since $\forall \mseq{I}\,\big(\varphi_G(\bpsi(\mseq{I}), \mseq{I}) \Rightarrow \exists \mseq{X}\varphi_G(\mseq{X}, \mseq{I})\big)$ holds trivially, we only show below
 that $\forall \mseq{I}\,\big(\exists \mseq{X}\varphi_G(\mseq{X}, \mseq{I}) \Rightarrow \varphi_G(\bpsi(\mseq{I}), \mseq{I})\big)$.
 
 From Definition~\ref{def:equisynthproj} and from the definition of Skolem functions, we have
 $\forall \mseq{I}\,\big(\exists \mseq{X} \varphi_G(\mseq{X}, \mseq{I}) \Rightarrow$
 $\exists \mseq{X} \exists \mseq{X}' \varphi_H(\mseq{X}, \mseq{X}', \mseq{I}) \Rightarrow$
 $\varphi_H(\bpsi(\mseq{I}), \bpsi'(\mseq{I}), \mseq{I}) \Rightarrow$
 $\varphi_G(\bpsi(\mseq{I}), \mseq{I})$.
\end{proof}

\subsection{Role of auxiliary outputs}
 We now investigate how auxiliary outputs can be introduced in a principled manner, so that they help in generating Skolem functions. We start with a circuit $G$ with leaves labeled by $\lits(\mseq{I})$ and $\lits(\mseq{X})$. Let $G_1$ and $G_2$ be two sub-circuits of $G$ such that $G_1$ is not a sub-circuit of $G_2$ and vice versa.  For a fresh auxiliary variable $p \not\in \mset{\mseq{X}} \cup \mset{\mseq{I}}$ and for $j \in \{1, 2\}$, define a circuit transformation \tform{j}{p}  that replaces the sub-circuit $G_j$ in $G$ with the circuit representing $\varphi_{G_j} \wedge p$.  The definition of the circuit transformation $\tform{j}{\neg p}$ is similar.
\begin{lemma}\label{onedummyaddition}
 If $\varphi_{G_1} \wedge \varphi_{G_2}$ is unsatisfiable, then \Eup{G}{\tform{1}{p}(\tform{2}{\neg{p}}(G))}.
\end{lemma}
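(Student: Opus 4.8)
The plan is to prove the two implications in Definition~\ref{def:equisynthproj} directly, using the assumption that $\varphi_{G_1} \wedge \varphi_{G_2}$ is unsatisfiable. Let me write $H = \tform{1}{p}(\tform{2}{\neg p}(G))$. The transformation replaces $G_1$ by $G_1 \wedge p$ and $G_2$ by $G_2 \wedge \neg p$, where $p$ is the single fresh auxiliary variable (so $\mseq{X}' = (p)$). Since $G_1$ is not a sub-circuit of $G_2$ and vice versa, these two replacements act on disjoint portions of $G$ and do not interfere, so $\varphi_H$ is obtained from $\varphi_G$ by substituting $\varphi_{G_1} \wedge p$ for the occurrence of $\varphi_{G_1}$ and $\varphi_{G_2} \wedge \neg p$ for the occurrence of $\varphi_{G_2}$. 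The first thing I would establish is this structural observation that $\varphi_H$ agrees with $\varphi_G$ except at these two designated sub-formula positions.

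For the second bullet (the ``$\Rightarrow$'' from $H$ to $G$), I would argue by monotonicity of the surrounding context. Any satisfying assignment of $\varphi_H$ forces $\varphi_{G_1} \wedge p$ and $\varphi_{G_2} \wedge \neg p$ to take certain truth values at those positions; since $\varphi_{G_1} \wedge p \Rightarrow \varphi_{G_1}$ and $\varphi_{G_2} \wedge \neg p \Rightarrow \varphi_{G_2}$ hold pointwise, and NNF circuits are monotone in the truth values of their sub-circuits (there is no negation above an internal node in an NNF circuit), replacing these conjuncts by $\varphi_{G_1}$ and $\varphi_{G_2}$ can only make the root evaluate to $\T$ if it already did. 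Hence $\varphi_H(\mseq{X}, p, \mseq{I}) \Rightarrow \varphi_G(\mseq{X}, \mseq{I})$ for all assignments, giving the second condition (with $\mseq{X}' = (p)$).

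For the first bullet, I would fix $\mseq{I}, \mseq{X}$ with $\varphi_G(\mseq{X}, \mseq{I}) = \T$ and produce a value of $p$ making $\varphi_H = \T$. Here the unsatisfiability hypothesis is the crux: under the assignment $(\mseq{X}, \mseq{I})$, we cannot have both $\varphi_{G_1}$ and $\varphi_{G_2}$ true simultaneously. So at least one of them is false at those positions. If $\varphi_{G_1}$ is false, then $\varphi_{G_1} \wedge p$ equals $\varphi_{G_1}$ regardless of $p$, so I set $p = \T$, which makes $\varphi_{G_2} \wedge \neg p = \F = \varphi_{G_2}$ (since $\varphi_{G_2}$ is false too, or at worst we only weaken it)—I would choose $p$ to keep both substituted sub-formulas equal to their original values $\varphi_{G_1}, \varphi_{G_2}$. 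Concretely: choose $p = \T$ if $\varphi_{G_2}$ is false at this assignment, and $p = \F$ if $\varphi_{G_1}$ is false; unsatisfiability guarantees at least one case applies. With this choice both substituted conjuncts coincide with the originals, so $\varphi_H$ evaluates exactly as $\varphi_G$ does, namely to $\T$. This witnesses $\exists \mseq{X}'\,\varphi_H(\mseq{X}, \mseq{X}', \mseq{I})$.

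The main obstacle I anticipate is the careful handling of the NNF monotonicity argument and, more subtly, the fact that $G_1$ and $G_2$ may each occur only once but their \emph{sub-circuits} could be shared with other parts of $G$ via the DAG structure. I would need to check that replacing the node rooted at $G_j$ affects only the value flowing through that node and that the hypothesis ``$G_1$ is not a sub-circuit of $G_2$ and vice versa'' rules out one replacement site lying inside the other, so the two substitutions are genuinely independent. Once independence and monotonicity are pinned down, both implications follow from the simple case analysis on which of $\varphi_{G_1}, \varphi_{G_2}$ is falsified, and transitivity of \Eup{}{} is not even needed here since a single fresh variable $p$ suffices.
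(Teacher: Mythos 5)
Your proof is correct and follows essentially the same route as the paper's: the forward direction is the same case analysis on the truth values of $\varphi_{G_1},\varphi_{G_2}$ (with unsatisfiability ruling out the both-true case and dictating the choice of $p$), and the backward direction is the same monotonicity argument, which the paper phrases via $\llbracket \varphi_{H\mid_{p=\T,\neg p=\T}}\rrbracket = \llbracket\varphi_G\rrbracket$ and the monotonicity of $\land$/$\lor$ gates rather than via the pointwise implications $\varphi_{G_j}\wedge\ell \Rightarrow \varphi_{G_j}$. Your extra care about DAG sharing and non-nesting of the two replacement sites is a legitimate tightening that the paper glosses over, but it does not change the substance of the argument.
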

\begin{proof}
 Let $H$ denote the circuit $\tform{1}{p}(\tform{2}{\neg{p}}(G))$, and let $H_1$ and $H_2$ denote the newly introduced sub-circuits representing $p \wedge \varphi_{G_1}$ and $\neg p \wedge \varphi_{G_2}$ respectively in $H$.  We show below that the conditions for $\Eup{G}{H}$ (see Definition~\ref{def:equisynthproj}) are satisfied.
 
 Let $\sigma: \mset{\mseq{X}} \cup \mset{\mseq{I}} \rightarrow \{\F,\T\}$ be an assignment for which $\llbracket \varphi_G \rrbracket$ evaluates to $\T$. We have four cases to analyze depending on what $\semn{G_1}$ and $\semn{G_2}$ evaluate to under $\sigma$.
 \begin{itemize}
    \item $\semn{G_1} = \F = \semn{G_2}$: Then for any assignment to $p$, $\semn{H_1}$ and $\semn{H_2}$ also evaluate to $\F$, and hence $\semn{H}$ evaluates to the same value, viz. $\T$, as $\semn{G}$.
    \item If $\semn{G_1} = \T, \semn{G_2} = \F$, then with $p$ assigned $\T$, $\semn{H_1} = \T$ and $\semn{H_2} = \F$, and hence $\semn{H}$ evaluates to the same value, viz. $\T$, as $\semn{G}$.
    \item By a similar argument, if $\semn{G_1} = \F, \semn{G_2} = \T$, assigning $p$ to $\F$
    causes $\semn{H}$ to evaluate to $\T$.
    \item The case of $\semn{G_1} = \semn{G_2} = \T$ doesn't arise since $\varphi_{G_1} \wedge \varphi_{G_2}$ is unsatisfiable.
\end{itemize}
This shows that $\forall \mseq{I} \forall \mseq{X} \,\big(\varphi_G(\mseq{X}, \mseq{I}) \Rightarrow \exists p\, \varphi_H(\mseq{X}, p, \mseq{I})\big)$.

Consider any assignment $\sigma': \mset{\mseq{X}} \cup \{p\} \cup \mset{\mseq{I}} \rightarrow \{\F,\T\}$ that renders $\semn{H} = \T$.
 Let $\sigma: \mset{\mseq{X}} \cup \mset{\mseq{I}} \rightarrow \{\F,\T\}$ be the projection of $\sigma'$ on $\mset{\mseq{X}} \cup \mset{\mseq{I}}$.  Note that $\sigma'$ necessarily assigns one of $p$ or $\neg p$ to $\F$, while still rendering $\semn{H} = \T$. Therefore, since all internal gates in $H$ (i.e. $\land$ and $\lor$ gates) are monotone, $\sigma$ must render $\semn{H\mid_{p=\T,\neg p =\T}} = \T$ as well.  However, $\semn{H\mid_{p=\T,\neg p =\T}} = \semn{G}$ by definition.  Hence, $\sigma$ satisfies $\semn{G}(\mseq{X}, \mseq{I})$.  This shows that $\forall \mseq{I} \forall \mseq{X} \forall p\, \big(\varphi_H(\mseq{X}, p, \mseq{I}) \Rightarrow \varphi_G(\mseq{X},\mseq{I})\big)$.
 \end{proof}
The argument in the above proof can be easily generalized to prove the following.
\begin{lemma}\label{lemma:sharing}
Let $\mathcal{G}_1 = \{G_{1,1}, \ldots G_{1,s}\}$ and $\mathcal{G}_2 = \{G_{2,1}, \ldots G_{2,t}\}$ be two sets of
sub-circuits of $G$ such that (a) there are no distinct $G_{k,i}$ and $G_{l,j}$ where one is a sub-circuit of the other, and (b) $\bigvee_{i=1}^s \varphi_{G_{1,i}} \Rightarrow \bigwedge_{j=1}^t \neg \varphi_{G_{2,j}}$.  Let $\tform{\mathcal{G}_k}{p}$ (resp. $\tform{\mathcal{G}_k}{\neg p}$) denote the circuit transformation that replaces every sub-circuit $G_{k,i} \in \mathcal{G}_k$ with a subcircuit representing $\varphi_{G_{k,i}} \wedge p$ (resp. $\varphi_{G_{k,i}} \wedge \neg p$), where $p$ is a fresh variable.
Then $\Eup{G}{\tform{\mathcal{G}_1}{p}(\tform{\mathcal{G}_2}{\neg p}(G))}$. 
\end{lemma}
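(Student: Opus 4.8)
The plan is to follow the structure of the proof of \Lemmaref{onedummyaddition} almost verbatim, replacing the single pair $(G_1, G_2)$ by the two families $\mathcal{G}_1$ and $\mathcal{G}_2$ that now share one common auxiliary variable $p$. Write $H = \tform{\mathcal{G}_1}{p}(\tform{\mathcal{G}_2}{\neg p}(G))$. Condition (a) guarantees that no transformed sub-circuit is an ancestor or descendant of another, so the two transformations act at mutually incomparable nodes of the DAG; hence $H$ is well-defined, the order of application of the two transformations is immaterial, and the only effect of the transformation at the root of each $G_{k,i}$ is an inserted conjunction with $p$ (resp. $\neg p$). I would establish the two implications of Definition~\ref{def:equisynthproj} in turn.

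For the backward direction, $\forall \mseq{I}\,\forall \mseq{X}\,\forall p\,(\varphi_H \Rightarrow \varphi_G)$, I would reuse the monotonicity argument directly. Relabelling every $p$-leaf and $\neg p$-leaf of $H$ by $\T$ turns each inserted conjunction $\varphi_{G_{k,i}} \wedge p$ (or $\varphi_{G_{k,i}} \wedge \neg p$) back into $\varphi_{G_{k,i}}$, so $\semn{H\mid_{p=\T,\neg p=\T}} = \semn{G}$. Since all internal gates are monotone, any assignment satisfying $\varphi_H$ still satisfies $H\mid_{p=\T,\neg p=\T}$, and therefore satisfies $\varphi_G$.

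The forward direction, $\forall \mseq{I}\,\forall \mseq{X}\,(\varphi_G \Rightarrow \exists p\,\varphi_H)$, is where condition (b) does the work. Given an assignment $\sigma$ with $\semn{G}_\sigma = \T$, I would case-split on whether some $\varphi_{G_{1,i}}$ is true under $\sigma$. If some $G_{1,i}$ evaluates to $\T$, then (b) forces every $\varphi_{G_{2,j}}$ to evaluate to $\F$; choosing $p = \T$ then leaves each $\varphi_{G_{1,i}} \wedge p$ equal to $\varphi_{G_{1,i}}$ and each $\varphi_{G_{2,j}} \wedge \neg p$ equal to $\F$, which already is the value of $\varphi_{G_{2,j}}$ under $\sigma$. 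Otherwise every $\varphi_{G_{1,i}}$ is false, and choosing $p = \F$ leaves each $\varphi_{G_{1,i}} \wedge p$ equal to $\F = \varphi_{G_{1,i}}$ and each $\varphi_{G_{2,j}} \wedge \neg p$ equal to $\varphi_{G_{2,j}}$. In both cases the output value of every transformed sub-circuit under $(\sigma, p)$ matches its value under $\sigma$ in $G$, and since condition (a) ensures these sub-circuits do not nest, the rest of the circuit is evaluated identically on these matching outputs; thus $\semn{H}$ evaluates to the same value $\T$ as $\semn{G}$.

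The main obstacle, and the only place the generalization is not completely mechanical, is verifying that a single shared $p$ can simultaneously satisfy all transformed sub-circuits in $\mathcal{G}_1$ and $\mathcal{G}_2$. This is exactly what condition (b) provides: it rules out the conflicting situation in which some $G_{1,i}$ and some $G_{2,j}$ are simultaneously true, which would force $p$ to be both $\T$ and $\F$. Consequently the two cases above are exhaustive and non-conflicting, mirroring the way unsatisfiability of $\varphi_{G_1}\wedge\varphi_{G_2}$ eliminated the fourth case in the proof of \Lemmaref{onedummyaddition}.
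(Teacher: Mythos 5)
Your proof is correct and follows exactly the route the paper intends: the paper gives no separate proof of Lemma~\ref{lemma:sharing}, stating only that the argument of Lemma~\ref{onedummyaddition} generalizes, and your write-up is precisely that generalization (monotonicity of NNF gates for the $\varphi_H \Rightarrow \varphi_G$ direction, and a case split on whether some $\varphi_{G_{1,i}}$ holds, with condition (b) ruling out the conflicting case, for the other direction). You also correctly isolate the one non-mechanical point --- that a single shared $p$ suffices --- so nothing is missing.
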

A particularly easy application of Lemma~\ref{lemma:sharing} is
obtained by choosing any literal $\ell$ that labels leaves of $G$,
and by choosing $\mathcal{G}_1$ and $\mathcal{G}_2$ to be subsets of $\ell$-leaves and $\neg\ell$-leaves, respectively.
Note that if $L$ is a subset of $\ell$-leaves of $G$, then $\tform{L}{p}$ gives the same circuit as $G\mid_{L:p\wedge\ell}$.

For the following theorem, consider a relational specification over $\mseq{I}$ and $\mseq{X}$ specified by a circuit $G$. Let $\ell$ be a literal over $\mseq{X}$, $v_\ell$ be the
underlying variable in $\mseq{X}$, and let $S_\ell$ be the set of all $\ell$-leaves of $G$.
\begin{theorem}\label{main} 
Suppose $S \subseteq S_\ell$ is $\land$-unrealizable in $G$.  For a fresh auxiliary variable $p \not\in \mset{\mseq{X}} \cup \mset{\mseq{I}}$, let $E$ denote the circuit  $G\mid_{S_\ell\setminus S:(p \wedge \ell),\, S_{\neg \ell}:(\neg p \wedge \neg \ell)}$ and let $H$ denote the circuit $E\mid_{\ell=\T,\neg \ell=\T}$. Note that the literals labeling leaves of $G$ are from $\lits(\mseq{X})$ and $\lits(\mseq{I})$, those labeling leaves of $E$ are from $\lits(\mseq{X})$, $\lits(\mseq{I})$ and $\{p, \neg p\}$, while the literals labeling leaves of H are from $\lits(\mseq{X} \setminus (v_\ell))$, $\lits(\mseq{I})$ and $\{p, \neg p\}$. Then the following statements hold.
\begin{enumerate}
    \item $\exists v_\ell\, \varphi_G \iff \exists p\, \varphi_H \iff  \exists v_\ell\, \varphi_{G\mid_{S:\T}}$
    \item If $\bpsi_H(\mseq{I})$ is a Skolem function vector for $(\mseq{X}\setminus (v_\ell), p)$ in $H$, then the projection of $\bpsi_H$ on $\mseq{X} \setminus (v_\ell)$ augmented with the Skolem function $\varphi_{E\mid_{\ell=\T,\neg \ell=\F}}(\bpsi_H(\mseq{I}), \mseq{I})$ for $\ell$ gives a Skolem function vector $\bpsi_G(\mseq{I})$ for $\mseq{X}$ in $G$.
\end{enumerate}
\end{theorem}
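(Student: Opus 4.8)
The plan is to prove the two statements separately, reducing everything to a classification of the monotone two-input functions that a $\land$-unrealizable literal can induce, together with one application of Lemma~\ref{lemma:sharing}. For Statement~1 I would first record the relevant cofactors. Writing $\exists v_\ell\,\varphi_G = \semn{G\mid_{\ell=\T,\neg\ell=\F}} \vee \semn{G\mid_{\ell=\F,\neg\ell=\T}}$ for the two NNF cofactors of $\varphi_G$ w.r.t.\ $v_\ell$, and substituting $p=\T$ then $p=\F$ into $H = E\mid_{\ell=\T,\neg\ell=\T}$ — in which the leaves of $S_\ell\setminus S$ have become $p$ and those of $S_{\neg\ell}$ have become $\neg p$ — gives $\varphi_H\mid_{p=\T} = \varphi_{G\mid_{\ell=\T,\neg\ell=\F}}$ and $\varphi_H\mid_{p=\F} = \varphi_{G\mid_{S:\T,\,(S_\ell\setminus S):\F,\,\neg\ell:\T}}$. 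Since $G\mid_{S:\T,\ell=\T}=G\mid_{\ell=\T}$ and $G\mid_{S:\T,\ell=\F}=G\mid_{S:\T,(S_\ell\setminus S):\F}$, these two disjuncts are exactly the cofactors of $G\mid_{S:\T}$, so $\exists p\,\varphi_H = \exists v_\ell\,\varphi_{G\mid_{S:\T}}$ holds with no hypothesis. It then remains to show $\exists v_\ell\,\varphi_G = \exists v_\ell\,\varphi_{G\mid_{S:\T}}$. The direction $\Rightarrow$ is monotonicity (relabeling leaves to $\T$ in an NNF circuit only increases the function). For $\Leftarrow$, fix $\sigma$ on $(\mset{\mseq{X}}\cup\mset{\mseq{I}})\setminus\{v_\ell\}$ and consider the monotone map $f_\sigma(w,w') = \semn{G'\mid_{\ell=w,\neg\ell=w'}}_\sigma$ where $G' = G\mid_{(S_\ell\setminus S):\F}$. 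By Definition~\ref{def:set-realizability}, $S$ being $\land$-unrealizable in $G$ is precisely $\ell$ being $\land$-unrealizable in $G'$, hence $f_\sigma \ne w\wedge w'$ and so $f_\sigma\in\{\F,\T,w,w',w\vee w'\}$. Using $f_\sigma(\F,\T)=\semn{G\mid_{\ell=\F,\neg\ell=\T}}_\sigma$, $f_\sigma(\T,\F)\le\semn{G\mid_{\ell=\T,\neg\ell=\F}}_\sigma$ (monotonicity, as $S_\ell\setminus S$ moves from $\F$ to $\T$), and the fact that the $p=\F$ disjunct is $f_\sigma(\T,\T)$, a one-line case split over the five allowed $f_\sigma$ shows $f_\sigma(\T,\T)=\T$ forces $f_\sigma(\T,\F)\vee f_\sigma(\F,\T)=\T$, i.e.\ $\exists v_\ell\,\varphi_G$ at $\sigma$. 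This closes the three-way equivalence.

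For Statement~2 I would first obtain $\Eup{G}{E}$ from Lemma~\ref{lemma:sharing} by taking $\mathcal{G}_1 = S_\ell\setminus S$ (a subset of $\ell$-leaves) and $\mathcal{G}_2 = S_{\neg\ell}$: condition (b) then reads $\ell \Rightarrow \ell$ and holds, and the prescribed transformation is exactly $E$. By the lemma immediately following Definition~\ref{def:equisynthproj}, any Skolem vector for $(\mseq{X}, p)$ in $E$ projects, by dropping the $p$-component, to a Skolem vector for $\mseq{X}$ in $G$. Next I would show $\ell$ is $\land$-unrealizable in $E$ by a case split on $p$ mirroring the above: for $\sigma(p)=\F$, $\semn{E\mid_{\ell=w,\neg\ell=w'}}_\sigma$ is precisely the $f_\sigma$ of Statement~1 (hence $\ne w\wedge w'$), while for $\sigma(p)=\T$ it reduces to $\semn{G\mid_{\ell=w,\neg\ell=\F}}_\sigma$, a function of $w$ alone, which also cannot equal $w\wedge w'$. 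Re-running the monotone-function classification on $E$ then yields $\semn{H} = \semn{E\mid_{\ell=\T,\neg\ell=\T}} = \exists v_\ell\,\varphi_E$.

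Finally I would assemble the Skolem vector. Since $\bpsi_H$ is a Skolem vector for $(\mseq{X}\setminus\{v_\ell\}, p)$ in $H$ and $\semn{H}=\exists v_\ell\,\varphi_E$, plugging $\bpsi_H(\mseq{I})$ into $\exists v_\ell\,\varphi_E$ satisfies it whenever it is satisfiable. Now $\varphi_{E\mid_{\ell=\T,\neg\ell=\F}}$ is the positive cofactor of $\varphi_E$ w.r.t.\ $\ell$, and setting $\ell$ to its own positive cofactor is always a valid Skolem function for $\ell$ in $\exists v_\ell\,\varphi_E$ (self-substitution: a one-line check on whether the cofactor is $\T$). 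Hence $\bpsi_H$ together with $\psi_\ell=\varphi_{E\mid_{\ell=\T,\neg\ell=\F}}(\bpsi_H(\mseq{I}),\mseq{I})$ is a Skolem vector for $(\mseq{X}, p)$ in $E$; projecting out $p$ gives exactly the vector $\bpsi_G$ claimed for $\mseq{X}$ in $G$.

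I expect the main obstacle to be the two conversions of $\land$-unrealizability into usable form — establishing $f_\sigma\ne w\wedge w'$ on $G'$ for Statement~1 and $\ell$'s $\land$-unrealizability on $E$ for Statement~2 — together with the careful bookkeeping of which leaves ($S$, $S_\ell\setminus S$, $S_{\neg\ell}$) are set to $\F$, $\T$, $p$, $\neg p$, $w$ or $w'$ under each relabeling; the monotone-function case analysis itself is routine once the setup is correct.
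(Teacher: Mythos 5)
Your proof is correct. For Statement 2 it coincides with the paper's own argument: $\Eup{G}{E}$ via Lemma~\ref{lemma:sharing} (with $\mathcal{G}_1 = S_\ell\setminus S$ and $\mathcal{G}_2 = S_{\neg\ell}$), $\land$-unrealizability of $\ell$ in $E$ yielding $\varphi_H \Leftrightarrow \exists v_\ell\,\varphi_E$, self-substitution $\varphi_{E\mid_{\ell=\T,\neg\ell=\F}}(\bpsi_H(\mseq{I}),\mseq{I})$ as the Skolem function for $\ell$, and projection through the lemma following Definition~\ref{def:equisynthproj}. Where you genuinely differ is the routing of Statement 1: the paper chains $\exists v_\ell\,\varphi_G \iff \exists v_\ell\exists p\,\varphi_E$ (equisynthesizability) with $\exists v_\ell\,\varphi_E \iff \varphi_H$ (unrealizability of $\ell$ in $E$), and then renames $p$ to $v_\ell$ to reach $\exists v_\ell\,\varphi_{G\mid_{S:\T}}$; you never touch $E$ in this part, instead identifying the two $p$-cofactors of $H$ with the two $v_\ell$-cofactors of $G\mid_{S:\T}$ (the renaming step, made syntactic) and proving $\exists v_\ell\,\varphi_G \iff \exists v_\ell\,\varphi_{G\mid_{S:\T}}$ directly by classifying the monotone residual $f_\sigma$ of $G\mid_{(S_\ell\setminus S):\F}$. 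The mathematical content is the same in both routes---monotonicity of NNF circuits plus exclusion of $w\wedge w'$---so this is a reorganization rather than a new idea, but your version has two small virtues: it decouples Statement 1 from the equisynthesizability machinery (which is then needed only for Statement 2), and it supplies explicit proofs of two facts the paper compresses into single assertions, namely that $\ell$ is $\land$-unrealizable in $E$ (your case split on $\sigma(p)$) and that $\land$-unrealizability of $S$ in $G$ is, by Definition~\ref{def:set-realizability}, precisely a statement about the circuit $G\mid_{(S_\ell\setminus S):\F}$.
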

\begin{proof}
 By Lemma~\ref{lemma:sharing}, we have $\Eup{G}{E}$.  It then follows from Definition~\ref{def:equisynthproj} that 
$\exists v_{\ell}\, \varphi_G \iff \exists v_{\ell} \exists p\, \varphi_E$.
 Furthermore, since $S$ is $\land$-unrealizable in $G$, it follows from the definition of $E$ and from Definition~\ref{def:set-realizability} that $\ell$ is $\land$-unrealizable in $E$.
 From this, we will now show that $\exists v_\ell \varphi_E$ is equivalent to $\varphi_{{E|}_{\ell=\T,\neg \ell=\T}}$. In one direction, we observe that $\exists v_{\ell} \varphi$ is always equivalent to $\varphi_{{E|}_{\ell=\T,\neg \ell=\F}} \vee \varphi_{{E|}_{\ell=\F,\neg \ell=\T}}$. This, in turn, logically implies $\varphi_{{E|}_{\ell=\T,\neg \ell=\T}}$ as all internal gates in an NNF circuit are monotone. In the other direction, $\varphi_{{E|}_{\ell=\T,\neg \ell=\T}} \wedge \neg (\varphi_{{E|}_{\ell=\T,\neg \ell=\F}} \vee \varphi_{{E|}_{\ell=\bot,\neg \ell=\top}})$ is unsatisfiable if $\ell$ is $\land$-unrealizable in $E$ (follows from definition of $\land$-unrealizability).
 Therefore, we have $\exists v_\ell\, \varphi_E\iff \varphi_{E\mid_{\ell=\T,\neg \ell=\T}} \iff \varphi_H$.  Hence, $\exists v_\ell \exists p\, \varphi_E \iff \exists p\, \varphi_H$.  Finally, from the definitions of circuits $E$ and $H$, we have $\exists p\, \varphi_H \iff \exists p\, \varphi_{E|_{l = \T, \neg l = \T}}$ $\iff$ $\exists p\, \varphi_{G|_{S:\T,\, S_\ell\setminus S:p,\,S_{\neg l}:\neg p}}$. By renaming $p$ to $v_l$ in the last formula, we get $\exists v_l\, \varphi_{G|_{S:\T}}$.

Since $\exists v_\ell\, \varphi_E\iff \varphi_H$, a Skolem function vector for $(\mseq{X}\setminus (v_\ell), p)$ in $E$ is obtained from $\bpsi_H (\mseq{I})$.  The Skolem function for $\ell$ in $E$ is then given by $\varphi_{E|_{\ell=\T,\neg \ell=\F}}(\bpsi_H(\mseq{I}), \mseq{I}))$. To see why this works, note that $\varphi_{E}$ with $\bpsi_H(\mseq{I})$ substituted for $(\mseq{X}\setminus (v_\ell), p)$ represents a specification with a single system output $\ell$ and system inputs $\mseq{I}$. Let us call this $\phi(\ell, \mseq{I})$. 
Then, $\phi(\top,\mseq{I})$ serves as a Skolem function, say $\psi^\ell(\mseq{I})$, for ${\ell}$ in $\phi$, i.e., $\exists \ell\, \phi(\ell,\mseq{I})\iff\phi(\psi^\ell(\mseq{I}),\mseq{I})$. Indeed, suppose for some $\mseq{I}$, $\psi^\ell(\mseq{I})= \phi(\top,\mseq{I})= \top$. Then $\phi(\psi^\ell(\mseq{I}),\mseq{I})= \phi(\top,\mseq{I})= \top$. Conversely, if $\psi^\ell(\mseq{I})= \phi(\top,\mseq{I})= \bot$, we consider two cases: (a) if $\phi(\bot,\mseq{I})=\top$, then $\phi(\psi^\ell(\mseq{I}),\mseq{I})=\top$; (b) if $\phi(\bot,\mseq{I})=\bot$, then we have $\forall \ell\, \phi(\ell,\mseq{I})=\bot$. Therefore, in all cases, we have $\exists \ell\, \phi(\ell,\mseq{I})\iff\phi(\psi^\ell(\mseq{I}),\mseq{I})$. The above method of obtaining a Skolem function for a single system output is also called self-substitution~\cite{cav18,jiang,rsynth}.  

The second part of the theorem now follows from the observation that \Eup{G}{E}.
\end{proof}

\subsection{Generating Skolem functions from {\eqnf} circuits}
Theorem~\ref{main} suggests an efficient algorithm for generating a Skolem function vector from a specification given as a \eqnf\ circuit.  Algorithm~\ref{alg:skgen} presents the pseudo-code of algorithm $SkGen$.  The purpose of sub-routines used in $SkGen$ is explained in the comments. 

We illustrate the running of $SkGen$ by considering its execution on the circuit $G$ shown in Fig.~\ref{fig:eqNNF}. Here, $\mseq{X} = (\xone, \xtwo )$ and $\mseq{I}=(i)$.  As discussed earlier, we use $L_0$ through $L_{15}$ to denote the leaves of the circuit $G$ in left-to-right order. We have also seen earlier that $G$ is in \eqnf\ for the sequence of subsets of leaves $(S_1, S_2, S_3, S_4)$, where $S_1 = \{L_3\}$, $S_2 =\{L_7\}$, $S_3 =\{L_5\}$ and $S_4 =\{L_1\}$. 

\begin{algorithm}[t]
  \caption{\Genskolem($G, S, r$)}
    \label{alg:skgen}
        \KwIn{$G$: Relational spec in {\eqnf};\\  $\mathcal{S} = (S_1,S_2...S_k)$:  Sequence of $\land$-unrealizable subsets of $\lits(\mseq{X})$-labeled leaves of $C$;
        $r$: Recursion level}
    \KwOut{$\bpsi_G(\II)$: Skolem function vector for C}
    \uIf{ r = k+1}{
         $\bpsi_G(\mseq{I})$ := GetAnyFuncVec($|\mseq{X}|, \mseq{I}$)\;
         \tcp{Returns an $|\mseq{X}|-$dim vector of (arbitrary) functions of $\mseq{I}$}
        }
    \uElse{
    $\ell$ := Literal label of leaves in $S_r$\;
    $p_r$ := newOutputVar()
    \tcp*{$p_r$ is auxiliary output variable added at recursion level $r$}
    $E$ := GetCkt($G, S_r, \ell, p_r$)
    \tcp*{Replace all $\ell$-labeled leaves of $G$ other than those in $S_r$ by $\ell \wedge p_r$, and replace all $\neg \ell$-labeled leaves by $\neg \ell \wedge \neg p_r$}
    $S$ := GetNewSeq($S, r, \ell, p_r$)
    \tcp*{Replace $\ell$ by $p_r$ and $\neg \ell$ by $\neg p_r$ in all elements (leaves) of $S_j$ for $j > r$}
    $H$ := CPropSimp($E\mid_{\ell=\T,\neg\ell=\T}$)
    \tcp*{CPropSimp propagates constants and eliminates gates with constant outputs} 
        $\bpsi_H(\mseq{I}) = \Genskolem(H, S, r+1)$\; 
    
    $\psi_E^{\ell}(\mseq{I})$ := $\varphi_{E|_{\ell=\T,\neg \ell = \F}}(\bpsi_H(\mseq{I}),\mseq{I})$ 
    \tcp*{$\psi_E^{\ell}$ gives Skolem function for $\ell$ in $\varphi_E$} 
         {$\bpsi_G(\mseq{I}) = \big(\bpsi_H(\mseq{I}) \setminus (\psi_H^{p_r}), \psi_E^{\ell}(\mseq{I})\big)$}
    \tcp*{$\psi_H^{p_r}$ is Skolem function for $p_r$ in $\psi_H$}
    }
    \Return $\bpsi_G(\mseq{I})$\;
\end{algorithm}
As algorithm $SkGen$ proceeds, labels of different leaves of $G$ need to be updated.  For notational convenience, we use $G^{(r)}$, $H^{(r)}$ and $E^{(r)}$ to refer to the circuits $G$, $H$ and $E$ in the $r^{th}$ level of recursion of $SkGen$.   Table~\ref{tab:skgen-run} shows how $G^{(r)}, H^{(r)}$ and $E^{(r)}$ are obtained by replacing the labels of suitable leaves of $G$.  Each entry in this table lists which leaf labels of $G$ must be updated, where $L_{\{i,j,k\}}: f$ denotes updation of the label of each leaf in $\{L_i,L_j,L_k\}$ by $f$.  All leaves whose label updates are not specified are assumed to have the same labels as in $G$.
\begin{table}
\scriptsize
\begin{tabular}[t]{|p{0.5cm}|p{0.1\columnwidth}|p{0.3\columnwidth}|p{0.3\columnwidth}|}
\hline
 & $\mathbf{G^{(r)}}$ & $\mathbf{E^{(r)}}$ & $\mathbf{H^{(r)}}$ \\
 \hline
 $r$=$1$: & None  & $L_{\{1,14\}}$:$(\neg \xone \wedge \neg p_1)$, 
                        $L_{\{10\}}$:$(\xone \wedge p_1)$ 
                             & $L_{\{1,14\}}$:$\neg p_1$, 
                               $L_{\{3\}}$:$\T$, $L_{\{10\}}$:$p_1$ \\
  \hline        
 $r$=$2$: & Same as in $H^{(1)}$ 
              & $L_{\{1,14\}}$:$\neg p_1$, 
                 $L_{\{3\}}$:$\T$, $L_{\{10\}}$:$p_1$,
                 $L_{\{5,9,11,15\}}$:$\neg \xtwo$$\wedge$$\neg p_2$,
                 $L_{\{13\}}$:$\xtwo \wedge p_2$ 
                     & $L_{\{1,14\}}$:$\neg p_1$, 
                       $L_{\{3,7\}}$:$\T$, $L_{\{10\}}$:$p_1$,
                       $L_{\{5,9,11,15\}}$:$\neg p_2$,    
                       $L_{\{13\}}$:$p_2$ \\
 \hline
 $r$=$3$: & Same as in $H^{(2)}$ 
              & $L_{\{1,14\}}$:$\neg p_1$, 
                $L_{\{3,7\}}$:$\T$, $L_{\{10\}}$:$p_1$,
                $L_{\{9,11,15\}}$:$\neg p_2$$\wedge$$p_3$, 
                $L_{\{5\}}$:$\neg p_2$,
                $L_{\{13\}}$:$p_2$$\wedge$$\neg p_3$
                   & $L_{\{1,14\}}$:$\neg p_1$, 
                     $L_{\{3,5,7\}}$:$\T$, $L_{\{10\}}$:$p_1$,
                     $L_{\{9,11,15\}}$:$p_3$, 
                     $L_{\{13\}}$:$\neg p_3$ \\
\hline
 $r$=$4$: & Same as in $H^{(3)}$ 
             & $L_{\{1\}}$:$\neg p_1$, 
               $L_{\{14\}}$:$\neg p_1$$\wedge$$p_4$, 
               $L_{\{3,5,7\}}$:$\T$, $L_{\{10\}}$:$p_1$$\wedge$$\neg p_4$,
               $L_{\{9,11,15\}}$:$p_3$, 
               $L_{\{13\}}$:$\neg p_3$ 
                 & $L_{\{14\}}$:$p_4$, 
                   $L_{\{1,3,5,7\}}$:$\T$, $L_{\{10\}}$:$\neg p_4$,
                   $L_{\{9,11,15\}}$:$p_3$, 
                   $L_{\{13\}}$:$\neg p_3$ \\
\hline
\end{tabular}
\caption{Run of Algorithm~\ref{alg:skgen} on Fig.~\ref{fig:eqNNF}}
\label{tab:skgen-run}
\normalsize
\end{table}
It can be verified that $G$ with leaf labels updated as in the table entry corresponding to $H^{(4)}$ simplifies to $\T$ by constant propagation.  Hence $H^{(4)}$ is semantically independent of $\{x_1, x_2\}$. This is not a coincidence, but is guaranteed by the definition of \eqnf. Hence, at recursion level $5$ of $SkGen$, any vector of functions $\big(f_3(i), f_4(i)\big)$ can be returned in line $2$ of Algorithm~\ref{alg:skgen} as a Skolem function vector for $(p_3, p_4)$ in $G^{(5)} = H^{(4)}$. 

As the recursive calls return, we obtain $E^{(4)}(p_3 = f_3(i), p_4= f_4(i), p_1=\T, i)$ as Skolem function for $p_1$ in $E^{(4)}$. Call this function $f_1(i)$.  Next, we get $f_2(i) = E^{(3)}(p_1=f_1(i), p_3 = f_3(i), p_2=\T, i)$ as Skolem function for $p_2$ in $E^{(3)}$. Continuing further, we obtain $f_{\xtwo(i)} = E^{(2)}(p_1 = f_1(i), p_2=f_2(i), \xtwo=\T, i)$ as Skolem function for $b$ in $E^{(2)}$, and $f_{\xone(i)} = E^{(1)}(\xtwo=f_{\xtwo(i)}, p_1=f_1(i), \xone = \T, i)$ as Skolem function for $\xone$ in $E^{(1)}$. The final
return gives $\big(f_{\xone(i)}, f_{\xtwo(i)}\big)$ as a Skolem function vector for $(\xone, \xtwo)$ in $G$.  Note that different choices of $f_3(i), f_4(i)$ yield different Skolem function vectors of $G$, all of which are correct.

\begin{theorem}\label{thm:skolem}
Suppose Algorithm $SkGen$ is invoked with a circuit $G$ as input, that is in \eqnf\ w.r.t. $\mset{\XX}$ and a sequence $S$ of subsets of leaves. Assuming the vector returned in line $2$ of the algorithm can be constructed in time $\bigO{|S|^2\cdot|G|}$, the algorithm returns a Skolem function vector of size in $\bigO{|S|^2\cdot|G|}$ in time $\bigO{|S|^2\cdot|G|}$.
\end{theorem}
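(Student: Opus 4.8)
The plan is to establish correctness, the size bound, and the time bound together by induction on the recursion depth, with Theorem~\ref{main} providing the inductive step. Write $k = |S|$, and let $G^{(r)}$, $E^{(r)}$, $H^{(r)}$ denote the circuits manipulated at recursion level $r$, so that $G^{(1)} = G$ and $G^{(r+1)} = H^{(r)}$. The backbone of the argument is the structural invariant: at the start of level $r$, the circuit $G^{(r)}$ is again in \eqnf\ with respect to its current output variables and the tail sequence $(S_r, \ldots, S_k)$ carried by the algorithm, where the subset labels have been updated by the successive GetNewSeq calls. The base case $r = 1$ is exactly the hypothesis of the theorem. For the inductive step I would note that passing from $G^{(r)}$ to $G^{(r+1)} = H^{(r)}$ amounts to (i) the signed renaming $v_{\ell_r} \mapsto \pm p_r$ that sends every leaf built on $v_{\ell_r}$ outside $S_r$ to the fresh variable $p_r$ (respecting sign), followed by (ii) relabelling the leaves of $S_r$ by \T; that is, $G^{(r+1)}$ is obtained from $G^{(r)}\mid_{S_r:\T}$ by this renaming, and GetNewSeq applies the very same renaming to the labels of $S_{r+1}, \ldots, S_k$.

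The key fact to verify is that $\land$-(un)realizability in the sense of Definition~\ref{def:set-realizability} is invariant under any signed renaming, since such a renaming induces a bijection on assignments and commutes with negation. Granting this, and noting that setting leaves to \T\ commutes with the renaming, each of conditions~3--5 of Definition~\ref{def:saunf} for the tail $(S_{r+1}, \ldots, S_k)$ transfers verbatim from the corresponding condition for $G^{(r)}$, which proves the invariant. Correctness is then immediate: at every level $r \le k$ the invariant gives (condition~3) that $S_r$ is $\land$-unrealizable in $G^{(r)}$, so the hypothesis of Theorem~\ref{main} holds and part~2 of that theorem lifts the Skolem vector returned by $\Genskolem(H^{(r)}, \cdot, r+1)$ for $H^{(r)}$ to a correct one for $G^{(r)}$ via the projection together with the self-substituted function $\varphi_{E\mid_{\ell=\T,\neg\ell=\F}}(\bpsi_H(\mseq{I}), \mseq{I})$; at the deepest level $r = k+1$ the invariant degenerates to condition~5, so $\varphi_{G^{(k+1)}}$ is semantically independent of its outputs and the arbitrary vector produced in line~2 is a valid Skolem vector.

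For the size and time bounds I would represent all computed functions in a single shared DAG and set up an additive recurrence. Letting $s_r$ be the size of the vector returned at level $r$, the only circuitry freshly created at level $r$ is the structure of $E^{(r)}\mid_{\ell=\T,\neg\ell=\F}$, of size $\bigO{|G|}$, whose output-labelled leaves are merely wired by pointers into the existing DAG for $\bpsi_H$; because $\bpsi_H$ is shared and not copied, there is no multiplicative blow-up and $s_r \le s_{r+1} + \bigO{|G|}$. Unrolling over the $k = |S|$ levels yields $s_1 \le s_{k+1} + \bigO{|S|\cdot|G|}$, and since the line~2 vector is constructible in time $\bigO{|S|^2\cdot|G|}$ and hence of that size, $s_1 \in \bigO{|S|^2\cdot|G|}$. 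The timing analysis is identical in shape: locating the $\ell$- and $\neg\ell$-leaves, forming $E$ with GetCkt, relabelling the tail with GetNewSeq, running CPropSimp, and wiring the substitution each cost $\bigO{|G|}$ per level, so the recursive scaffolding costs $\bigO{|S|\cdot|G|}$; adding the $\bigO{|S|^2\cdot|G|}$ charged to line~2 gives the claimed $\bigO{|S|^2\cdot|G|}$ overall.

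The step I expect to be the main obstacle is the inductive preservation of the invariant, and within it the renaming-invariance of Definition~\ref{def:set-realizability}. The subtlety is that the literal processed at a later level can itself be a renamed image of a variable eliminated earlier (for instance, a subset originally labelled $\neg x_1$ is carried as $\neg p_1$ by the time it is processed), so I must argue that the cumulative transformation from $G\mid_{S_1:\T,\ldots,S_{r-1}:\T}$ to $G^{(r)}$ is a genuine negation-respecting bijection on the variables, and that GetNewSeq tracks precisely this relabelling so that the sets handed to the recursion still refer to the right leaves. Once this is pinned down, everything else is routine bookkeeping over the recursion.
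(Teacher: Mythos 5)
Your proposal is correct and follows essentially the same route as the paper's proof: induction over the recursion levels with Theorem~\ref{main} supplying the inductive step, the terminal case discharged by condition 5 of Definition~\ref{def:saunf}, and the complexity bound obtained from the per-level facts $|H| \leq |G|$ and $|E| \leq 2\cdot|G|$ together with wiring (rather than copying) previously built Skolem functions, yielding $\bigO{|S|^2\cdot|G|}$ overall. Your explicit invariant that each $G^{(r)}$ remains in \eqnf\ w.r.t.\ the renamed tail $(S_r,\ldots,S_k)$ --- justified by invariance of $\land$-(un)realizability under signed renaming --- is a refinement that makes rigorous what the paper leaves implicit when it applies Theorem~\ref{main} and Definition~\ref{def:saunf} at every level, and your additive size recurrence is just a tighter bookkeeping of the paper's connection-counting argument, not a different approach.
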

\begin{proof}
We show that Algorithm $SkGen$ generates a  Skolem function vector by an inductive application of Theorem~\ref{main}.
Specifically, for each level $i$ of recursion, if the Skolem function $\bpsi_H$ returned in line $9$ by the $i+1^{st}$ recursive
call of $SkGen$ is correct for $H$,
Theorem~\ref{main} ensures that the Skolem function $\bpsi_G$ computed in lines $10$ and $11$ of the $i^{th}$ recursive call is correct for $G$.

To see why the terminating case of this recursion yields correct Skolem functions, note that when the
recursion level is $k+1$ (lines $1$-$2$ of Algorithm~\ref{alg:skgen}), by Definition~\ref{def:saunf}, the function represented by $G$ is semantically independent of $\mseq{X}$.  Hence any Skolem function vector for $\mseq{X}$ suffices in line $2$ of Algorithm~\ref{alg:skgen}.  

Algorithm $SkGen$ has exactly $k+1$ recursive calls, and in each of the first $k$ calls, the steps in lines $4, 5, 6, 7$ and $8$ take time linear in $|G|$ and generate circuits $E$ and $H$ that are of size in $\mathcal{O}(|G|)$.  Indeed, the circuit $H$ in each recursion level $\le k$ is simply $G$ with the literal labels of some of its leaves replaced by other literals or by Boolean constants (possibly followed by simplification via constant propagation).  Therefore, $|H| \leq |G|$ in each recursion level $\le k$.  The circuit $E$ is similarly obtained by replacing some leaves of $G$
with Boolean constants, other literals or conjunctions of two literals.  Therefore, $|E| \leq 2\times|G|$ in
each recursion level $\le k$.  In the $k+1^{th}$ recursive call, line $2$ is executed, and as discussed above, we restrict it to take time in $\bigO{|S|^2\cdot|G|}$.  This also ensures that the size of the Skolem function vector returned in line $2$ is in $\bigO{|S|^2\cdot|G|}$. 

Once the recursive calls start returning, lines $10$ and $11$ of Algorithm $SkGen$ are executed.  Note that in line $10$, the Skolem function for $\ell$ is obtained by feeding into the inputs of circuit $E$ (as obtained in the current level of recursion) the outputs of Skolem functions computed in later (or higher) levels of the recursion.  We have already seen above that $|E|$ is at most $2\times|G|$ in each recursion level $\le k$.  A Skolem function computed at recursion level $j$ can potentially feed into $E$ at all recursion levels in $\{1,\ldots j-1\}$.  Therefore, a maximum of $\sum_{j=2}^{|S|+1} (j-1) \in \bigO{|S|^2}$ connections may need to be created between the output of a Skolem function generated at some recursion level and the input of $E$ at a lower level of recursion.  Therefore, constructing the entire Skolem function vector at recursion level $1$ requires time (and, hence space) in $\bigO{|S|^2\cdot|G|}$.
\end{proof}
In the above analysis, we assumed that the vector of functions used in line $2$ of Algorithm~\ref{alg:skgen} has size in $\bigO{|S|^2\cdot|G|}$.
Since an arbitrary vector of functions of $\mseq{I}$ suffices in line $2$, we can choose a $|\mseq{X}|$-dimensional constant function vector, say $(\F, \ldots \F)$  as the output of {\tt GetAnyFuncVec}.  Hence, the above assumption can \emph{always} be satisfied. Note that Theorem~\ref{thm:skolem} guarantees that $\eqnf$ enjoys property P1 of Section~\ref{sec:prob-stmt}.

\subsection{Generating {\eqnf} circuits from Skolem functions}
Next, we show that if we already know one (out of possibly many) Skolem function vector of a relational specification $G$ given as a circuit, we can easily derive a semantically equivalent circuit in \eqnf.
\begin{theorem}\label{lemma:structure}
Let $\bpsi(\mseq{I})$ be a Skolem function vector for $\mseq{X}$ in $\varphi_G(\XX,\II)$. Define $G'$ to be $G$ with all labels $x_i$ (resp. $\neg x_i$) in $\lits(\mseq{X})$ replaced by $\psi_i(\mseq{I})$ (resp. $\neg \psi_i(\mseq{I})$), i.e. $\varphi_{G'}(\mseq{I}) = \varphi_G\big(\bpsi(\mseq{I}), \mseq{I})$.  
Define $F$ to be the circuit representing the formula $\bigwedge_{i=1}^m \big((x_i\land \psi_i(\II)) \lor (\neg{x_i}\land \neg{\psi_i}(\II))\big)$, and  $H$ to be the circuit representing $\big(\varphi_{F}\lor \varphi_G\big) \land \varphi_{G'}$. Then $H$ is in \eqnf\ w.r.t. $\mset{\XX}$ and a sequence of literal-consistent subsets of leaves, and $\varphi_G(\mseq{X}, \mseq{I}) \iff \varphi_H(\mseq{X}, \mseq{I})$.
\end{theorem}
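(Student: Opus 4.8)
The plan is to prove the two assertions separately: the semantic equivalence $\varphi_G \iff \varphi_H$, and membership of $H$ in {\eqnf}. I would dispose of the equivalence first. Since $G'$ is obtained from $G$ by substituting each $\psi_i(\II)$ for $x_i$, the circuit $\varphi_{G'}$ has only $\II$-leaves, and because $\bpsi$ is a Skolem function vector we have that $\varphi_{G'}$ (which equals $\varphi_G(\bpsi(\II),\II)$) is semantically equal to $\exists \XX\,\varphi_G$. For the forward direction, if $\varphi_G(\XX,\II)$ holds then $\varphi_F \vee \varphi_G$ holds trivially and $\exists\XX\,\varphi_G$ holds, so $\varphi_{G'}$ holds and hence $\varphi_H$ holds. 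For the converse, suppose $\varphi_H = (\varphi_F\vee\varphi_G)\wedge\varphi_{G'}$ holds; if $\varphi_G$ holds we are done, and otherwise $\varphi_F$ must hold, which forces $\XX=\bpsi(\II)$, whence $\varphi_{G'}=\varphi_G(\bpsi(\II),\II)$ yields $\varphi_G(\XX,\II)$.

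For the normal-form claim I would exhibit the witnessing sequence explicitly. Because every $\psi_i$ is a formula over $\II$ alone, in $F$ the variable $x_i$ labels exactly one leaf and $\neg x_i$ labels exactly one leaf. I would take $S$ to be the sequence of these $2m$ singleton leaves of the $F$-part of $H$, listed in any order. Conditions~1 and~2 of Definition~\ref{def:saunf} are then immediate (disjoint singletons, each labelled by an $\XX$-literal). For condition~5, relabelling all these leaves by $\T$ turns each conjunct $(x_i\wedge\psi_i)\vee(\neg x_i\wedge\neg\psi_i)$ of $F$ into $\psi_i\vee\neg\psi_i=\T$; hence $\varphi_F$ becomes $\T$, so $\varphi_F\vee\varphi_G$ is $\T$ and the whole circuit is semantically equal to $\varphi_{G'}$, which is independent of $\XX$.

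The substance of the proof, and the step I expect to be the main obstacle, is conditions~3 and~4: each chosen leaf must be $\land$-unrealizable even after all earlier chosen leaves have been relabelled $\T$. Fix a current singleton, say the $x_i$-leaf of $F$, with the appropriate earlier conditioning applied. Following Definition~\ref{def:set-realizability}, I would relabel the remaining $x_i$-leaves---all of which sit in the $G$-part---by $\F$, then test realizability of the literal $x_i$ by replacing the $x_i$-leaf of $F$ with a fresh $w$ and every $\neg x_i$-leaf with $w'$. The argument rests on three localisation facts. First, $w$ now labels a single leaf, lying in the single conjunct of $F$ indexed by $i$; a short case split on the constant value of $\psi_i$ under an assignment $\sigma$ of the remaining variables shows that this conjunct collapses to one of $w,w',\T,\F$, and that $w$ can appear in the value of $\varphi_F$ under $\sigma$ only when the conjunct is $w$ and all others are $\T$, i.e. only when $\varphi_F$ reduces to exactly $w$. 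Second, the $G$-part, with its $x_i$-leaves fixed to $\F$ and its $\neg x_i$-leaves renamed $w'$, depends on the fresh variables only through $w'$ and, all internal gates being monotone, reduces under $\sigma$ to one of $\F,\T,w'$. Third, $\varphi_{G'}$ reduces to a constant under $\sigma$, and must be $\T$ for the whole formula to have any chance of being $w\wedge w'$. Combining these through the top structure $(\varphi_F\vee\varphi_G)\wedge\varphi_{G'}$, whenever $w$ survives the reduced formula is $w\vee(\text{one of }\F,\T,w')$, that is one of $w$, $\T$, or $w\vee w'$---never $w\wedge w'$. Hence $x_i$ is $\land$-unrealizable, so the leaf is $\land$-unrealizable. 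The symmetric case analysis (interchanging $x_i$ and $\neg x_i$) handles the $\neg x_i$-leaf, and since it is insensitive to which earlier $F$-leaves were set to $\T$, it covers conditions~3 and~4 uniformly for every element of $S$.

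The delicate point to get right is the localisation in the second and third facts: one must verify that after fixing every occurrence of the tested literal outside the chosen singleton to $\F$, the residual dependence of the $G$-part on the fresh variables is a purely monotone dependence on $w'$ (so that $w$ genuinely occurs in $F$ alone), and that conjoining with $\varphi_{G'}$ can only falsify the formula, never help produce $w\wedge w'$. Once this is in place, the observation that $w$ can survive only as $\varphi_F$ reducing to $w$ makes $w\wedge w'$ unreachable, which is precisely $\land$-unrealizability. I expect this uniform localisation argument---carried out identically for every conditioning prefix---to be the technically fussiest part, while the remaining checks are routine.
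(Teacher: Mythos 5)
Your proposal is correct and follows essentially the same route as the paper's proof: the same witnessing sequence of $2m$ literal-consistent sets of $F$-leaves, the same use of Definition~\ref{def:set-realizability} to force the $G$-part's same-literal leaves to $\F$ so that $w$ occurs only inside $F$, the same monotonicity argument showing the whole circuit can only reduce to $w$, $\T$, $w\vee w'$ or a constant, and the same collapse of $H$ to $\varphi_{G'}$ for condition~5. Your write-up is in fact somewhat more explicit than the paper's on the transfer of $\land$-unrealizability from $F$ to $H$ (which the paper states in one sentence) and avoids the paper's forward reference to Lemma~\ref{exist-quantification} by computing $\varphi_F\mid_{S:\T}\Leftrightarrow\T$ directly.
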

\begin{proof}
 We first show that $\varphi_G(\mseq{X}, \mseq{I}) \iff \varphi_H(\mseq{X}, \mseq{I})$.  This involves showing two implications.
 \begin{itemize}
     \item $\varphi_H(\XX,\II) \Rightarrow \varphi_G(\XX,\II):$ We know from the definitions that $\varphi_{G'}(\II) \Leftrightarrow \varphi_G(\bpsi(\II),\II)$ $\Rightarrow \forall \XX\, (\varphi_{F}(\XX,\II)\Rightarrow \varphi_G(\XX,\II))$.  The last implication follows from the observation that $\varphi_{F}(\XX,\II)$ simply asserts that $\bigwedge_{i=1}^m \big(x_i \Leftrightarrow \psi_i(\II)\big)$ holds, and hence $\varphi_G(\bpsi(\II), \II)  \Rightarrow  \forall \XX\, \big(\varphi_F(\XX,\II) \Rightarrow \varphi_G(\XX, \II)\big)$.  We also know from the definition of $H$ that  $\varphi_H(\XX,\II)\Rightarrow \varphi_{G'}(\II)$ and $\varphi_H(\XX,\II) \Rightarrow \big(\varphi_F(\XX,\II)\lor \varphi_G(\XX,\II)\big)$.  However, since $\varphi_{G'}(\II) \Rightarrow \forall \XX\, \big(\varphi_F(\XX,\II) \Rightarrow \varphi_G(\XX, \II)\big)$, it follows that $\varphi_H(\XX, \II) \Rightarrow \big(\varphi_{G'}(\II)\land \varphi_G(\XX,\II)\big)$.  Hence $\varphi_H(\XX, \II) \Rightarrow \varphi_G(\XX,\II)$. 
     \item $\varphi_G(\XX, \II) \Rightarrow \varphi_H(\XX,\II)$:  We know that $\varphi_G(\XX,\II)\Rightarrow$ $\exists \XX\, \varphi_G(\XX, \II)$  $\Leftrightarrow \varphi_G(\bpsi(\II), \II)$ $\Leftrightarrow \varphi_{G'}(\II)$ by definition. It follows that $\varphi_G(\XX,\II)\Rightarrow \big(\varphi_G(\XX,\II) \land \varphi_{G'}(\II)\big)$.  However, from the definition of $H$, we know that $\big(\varphi_G(\XX,\II) \land \varphi_{G'}(\II)\big) \Rightarrow \varphi_H(\XX,\II)$.  Hence, $\varphi_G(\XX,\II) \Rightarrow \varphi_H(\XX,\II)$. 
 \end{itemize}
 
To show that $H$ is in \eqnf\ w.r.t. $\mset{\XX}$ and a suitably defined sequence $S$ of subsets of leaves, we first observe that the circuit $F$ that naturally represents $\bigwedge_{i=1}^n \big((x_i\land \psi_i(\II)) \lor (\neg{x_i}\land \neg{\psi_i}(\II))\big)$ has a $\wedge \vee \wedge$-structure with leaves labeled by $\lits(\XX)$ and $\lits(\II)$.  It is easy to see from the structure of this circuit that for $i \in \{1, \ldots m\}$, literals $x_i$ (and $\neg x_i$) are $\land$-unrealizable in $F$.  Let $S = (S_1, S_2, \ldots S_{2m})$ be a sequence of subsets of leaves of $F$, where  $S_{2i-1}$ is the set of all $x_i$-labeled leaves of $F$, and $S_{2i}$ is the set of all $\neg x_i$-labeled leaves labeled of $F$.  Since the literals $x_i$ and $\neg x_i$ are  $\land$-unrealizable in $F$ for $i \in \{1, \ldots m\}$, we also have that  $S_j$ is $\land$-unrealizable in $F$ for every $j \in \{1, \ldots 2m\}$.  Finally, $F|_{S_1:\T, \ldots S_{2m}:\T}$ has no literal in $\lits(\XX)$ labeling any leaf. Hence, $F$ satisfies all the conditions of Definition~\ref{def:saunf}, and is in {\eqnf} w.r.t. $\mset{\XX}$ and $S$.

We now claim that the circuit, say $R$, representing $\big(\varphi_{F} \vee \varphi_G\big)\wedge \varphi_{G'}$ is in {\eqnf} w.r.t. $\mset{\XX}$ and the sequence $S$ of subsets of leaves of $F$ described above.  To see why this is so, observe that all subsets $S_j \in S$ are mutually disjoint and contain leaves labeled by $\lits(\XX)$.  Hence the first 2 conditions of Definition~\ref{def:saunf} are satisfied.  To see why conditions 3 and 4 of Definition~\ref{def:saunf} are satisfied, recall from Definition~\ref{def:set-realizability} that when checking $\land$-unrealizability of any $S_i \in S$ in $R$, all leaves of the sub-circuit $G$ (of the circuit $R$) that are labeled by the same literal as leaves in $S_i$, must be re-labeled $\F$.  This, coupled with the fact that $S_i$ is $\land$-unrealizable in $F$, ensures that $S_i$ is $\land$-unrealizable in $R$ as well.  Finally, as we will see in Section~\ref{sec:operations} (see Lemma~\ref{exist-quantification}), 
$\varphi_{F\mid_{x_1 = \T, \neg x_1 = \T, \ldots x_n = \T, \neg x_n = \T}}$ $\Leftrightarrow$ $\exists \XX\, \varphi_F(\XX, \II) $ $\Leftrightarrow \T$. The last equivalence follows from the definition of $F$; specifically $\varphi_F(\bpsi(\II), \II) = \T$ for all assignments of $\II$. Therefore $\varphi_{R|_{S_1:\T, \ldots S_{2m}:\T|}} \Leftrightarrow (\T \vee \varphi_G) \wedge \varphi_{G'}$
$\Leftrightarrow \varphi_{G'}$.
Since $\varphi_{G'}$ is semantically independent of $\XX$ by definition, condition 5 of Definition~\ref{def:saunf} is satisfied for $R$. Hence, $R$ is in \eqnf\ w.r.t. $\mset{\XX}$ and the sequence $S = (S_1, S_2,\ldots S_{2m})$.
\end{proof}
The proof of Theorem~\ref{lemma:structure} gives the following corollary.
\begin{corollary}\label{cor:eqnf-constr}
Given $G$, and a Skolem function vector $\bpsi(\II)$ for $\XX$ in $\varphi_G(\XX,\II)$, a {\eqnf} circuit $H$ semantically equivalent to $G$ can be constructed in $\bigO{|G|+|\bpsi|}$ time. Furthermore, $|H| \in \bigO{|G|+|\bpsi|}$.
\end{corollary}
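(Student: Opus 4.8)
The plan is to read off the size and running-time bounds directly from the construction already carried out in the proof of Theorem~\ref{lemma:structure}, the one point of care being that all circuits are stored as shared DAGs rather than trees. First I would build $G'$: for every leaf of $G$ labeled $x_i$ (resp.\ $\neg x_i$) I redirect its unique incoming edge to the root of a single shared circuit for $\psi_i(\II)$ (resp.\ for $\neg\psi_i(\II)$). Only one copy of each Skolem-function circuit is needed, shared across all leaves carrying the same literal, and the NNF form of each $\neg\psi_i$ is obtained from $\psi_i$ by a single linear-time De~Morgan pass (swapping $\land$ with $\lor$ and complementing each literal). Hence $|G'| \in \bigO{|G| + |\bpsi|}$, where $|\bpsi| = \sum_{i=1}^{m}|\psi_i|$, and $G'$ is produced in time linear in the same quantity.

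Next I would assemble $F$, which represents $\bigwedge_{i=1}^{m}\big((x_i \land \psi_i(\II)) \lor (\neg x_i \land \neg\psi_i(\II))\big)$. Reusing the shared $\psi_i$- and $\neg\psi_i$-circuits already created for $G'$, the circuit $F$ introduces only a constant number of new gates per conjunct together with a tree of $\land$-gates over the $m$ conjuncts, i.e.\ $\bigO{m}$ new internal nodes. Since each $\psi_i$ has at least one node we have $m \le |\bpsi|$, so $|F| \in \bigO{m + |\bpsi|} \subseteq \bigO{|G| + |\bpsi|}$, built in linear time. Finally $H$, which represents $(\varphi_F \lor \varphi_G)\land \varphi_{G'}$, is formed by placing a constant number of $\lor$- and $\land$-gates on top of the already-built $F$, $G$ and $G'$, so $|H| \in \bigO{|F| + |G| + |G'|} = \bigO{|G| + |\bpsi|}$, again in the same time bound. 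Correctness and membership in \eqnf\ require no new argument, as both are guaranteed by Theorem~\ref{lemma:structure}.

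The main obstacle is precisely the sharing: a single output variable $x_i$ may label many leaves of $G$, so physically duplicating $\psi_i$ at each such leaf would yield a multiplicative $\bigO{|G|\cdot|\bpsi|}$ bound instead of the additive one claimed. Keeping everything as a shared DAG, in which all $x_i$-leaves (and the occurrences of $\psi_i$ inside $F$) point to one common root, is exactly what makes both size and construction time additive; bounding $m$ by $|\bpsi|$ then absorbs any output variables that do not occur syntactically in $G$.
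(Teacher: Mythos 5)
Your proposal is correct and follows essentially the same route as the paper, which simply reads the size and time bounds off the construction $H = (\varphi_F \lor \varphi_G) \land \varphi_{G'}$ from Theorem~\ref{lemma:structure}. Your explicit treatment of DAG sharing (one shared copy of each $\psi_i$ and its De~Morgan dual, rather than a copy per $x_i$-leaf) is exactly the detail that makes the bound additive rather than multiplicative, and is implicit in the paper since circuits are defined as DAGs; the only nit is that leaves have \emph{outgoing} edges to their parents under the paper's edge orientation, possibly more than one, so it is those edges that get redirected to the root of the shared $\psi_i$ circuit.
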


Finally, Corollary~\ref{cor:eqnf-constr} and Theorem~\ref{thm:skolem} immediately yield the following theorem.
\begin{theorem}\label{thm:characterize}
For every class $\mathcal{C}$ of circuits representing relational specifications,
\begin{enumerate}
    \item Boolean Skolem function synthesis can be solved in polynomial-time for $\mathcal{C}$ iff every circuit in $\mathcal{C}$ can be compiled to a semantically equivalent {\eqnf} circuit in polynomial-time.
    \item A Skolem function vector of polynomial size exists for every specification in $\mathcal{C}$ iff every circuit in $\mathcal{C}$ can be compiled to a polynomial-sized semantically equivalent {\eqnf} circuit.
    \end{enumerate}                                                 
\end{theorem}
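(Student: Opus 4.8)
The plan is to derive Theorem~\ref{thm:characterize} as a direct combination of the two preceding results: Corollary~\ref{cor:eqnf-constr}, which turns a Skolem function vector into a semantically equivalent {\eqnf} circuit, and Theorem~\ref{thm:skolem}, which turns a {\eqnf} circuit into a Skolem function vector. Each of the two numbered claims is an ``iff'', so there are four implications to establish; they all share the same skeleton. Part~1 tracks running time while part~2 tracks only size, but the underlying two gadgets (Corollary~\ref{cor:eqnf-constr} and Theorem~\ref{thm:skolem}) supply both a time bound and a size bound, so the same construction serves both parts.

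For the forward direction of each part (efficient synthesis / small solution $\Rightarrow$ compilation), I would fix an arbitrary $G \in \mathcal{C}$. If $\bfs$ is solvable in polynomial time on $\mathcal{C}$ (part~1), running the assumed algorithm produces a Skolem function vector $\bpsi(\II)$ for $\XX$ in $\varphi_G$ with $|\bpsi|$ polynomial in $|G|$ and produced in polynomial time; if instead only a polynomial-sized vector is assumed to exist (part~2), I simply take such a $\bpsi$. In either case Corollary~\ref{cor:eqnf-constr} yields a semantically equivalent {\eqnf} circuit $H$ together with its witnessing sequence $S$, with $|H| \in \bigO{|G|+|\bpsi|}$ and construction time $\bigO{|G|+|\bpsi|}$. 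This is polynomial in exactly the sense required, establishing the forward implication of both parts.

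For the backward direction of each part (compilation $\Rightarrow$ synthesis / small solution), I would again fix $G \in \mathcal{C}$ and take the guaranteed semantically equivalent {\eqnf} circuit $H$ with its witness $S$. Invoking $\Genskolem(H, S, 1)$ and appealing to Theorem~\ref{thm:skolem} produces a Skolem function vector for $\XX$ in $\varphi_H$ of size $\bigO{|S|^2 \cdot |H|}$, computed in time $\bigO{|S|^2 \cdot |H|}$. Two observations then close the argument. First, since the $S_j$ are pairwise disjoint subsets of the leaves of $H$ (and empty ones may be discarded), we have $|S| \le |H|$, so the bound $\bigO{|S|^2 \cdot |H|}$ is polynomial in $|H|$ and hence in $|G|$. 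Second, because $\llbracket \varphi_H \rrbracket = \llbracket \varphi_G \rrbracket$ over the same variables $\XX, \II$, any Skolem function vector for $\XX$ in $\varphi_H$ is verbatim a Skolem function vector for $\XX$ in $\varphi_G$. Together these give polynomial-time synthesis (part~1) and a polynomial-sized vector (part~2).

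The one delicate point, and where I expect the real content to hide, is the meaning of ``compiled to a semantically equivalent {\eqnf} circuit.'' Since membership in {\eqnf} is a semantic property whose checking is already \mycoNP-hard, a bare circuit $H$ does not by itself hand us the witnessing sequence $S$ that $\Genskolem$ consumes. I would therefore be explicit that ``compilation to {\eqnf}'' is understood to produce the pair $(H, S)$ --- exactly as Corollary~\ref{cor:eqnf-constr} does in the forward direction --- so that the backward direction of part~1 can genuinely run $\Genskolem$ in polynomial time. For part~2 this subtlety is harmless: only the existence of $H$, and hence of \emph{some} witness $S$ with $|S| \le |H|$, is needed, and Theorem~\ref{thm:skolem} bounds the resulting vector's size regardless of how $S$ is found.
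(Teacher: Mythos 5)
Your proposal is correct and follows exactly the paper's route: the paper derives Theorem~\ref{thm:characterize} as an immediate consequence of Corollary~\ref{cor:eqnf-constr} (for the ``synthesis $\Rightarrow$ compilation'' directions) and Theorem~\ref{thm:skolem} (for the ``compilation $\Rightarrow$ synthesis'' directions), which is precisely your four-implication skeleton. Your explicit handling of the witness sequence $S$ --- noting that compilation should yield the pair $(H,S)$, as the construction in Theorem~\ref{lemma:structure} indeed does --- is a detail the paper leaves implicit, but it does not change the argument.
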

Thus, {\eqnf} satisfies properties P2a and P2b of Section~\ref{sec:prob-stmt}.
In other words, {\eqnf} truly \emph{characterizes} efficient Boolean functional synthesis.  Note that Theorem~\ref{thm:characterize} is significantly stronger than sufficient conditions for efficient synthesis given in~\cite{cav18,fmcad19}. 


\section{Operations on \eqnf}\label{sec:operations}
In this section, we discuss the application of basic operations like conjunction, disjunction and existential quantification of variables on formulas represented by \eqnf\ circuits, and also examine the complexity of checking if a given circuit is in \eqnf. Throughout the section, we assume that all specifications (circuits) are over system inputs $\mset{\mseq{I}}$ and system outputs $\mset{\mseq{X}}$ unless otherwise specified.  To reduce notational clutter, given circuits $G$ and $H$, we abuse notation and use $G \vee H$ (resp. $G \wedge H$) to denote the circuit consisting of an $\lor$- (resp. $\land$-)labeled root node with the roots of $G$ and $H$ as its children.
 \begin{lemma}\label{disjunction}
Suppose $G$ is in \eqnf\ w.r.t $\mset{\mseq{X}}$ and a sequence $S^G$, and $H$ is in \eqnf\ w.r.t. $\mset{\mseq{X}}$ and $S^H$. Then the circuit $G\lor H$ is in \eqnf\ w.r.t $\mset{\mseq{X}}$ and $(S^G, S^H)$.
\end{lemma}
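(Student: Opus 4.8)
The plan is to verify directly the five conditions of Definition~\ref{def:saunf} for the circuit $C = G \lor H$ with respect to $\mset{\XX}$ and the concatenated sequence $(S^G, S^H) = (S^G_1, \ldots, S^G_a, S^H_1, \ldots, S^H_b)$, where $a$ and $b$ are the lengths of $S^G$ and $S^H$. Following the convention stated just before the lemma, I read $C$ as a fresh $\lor$-root over disjoint sub-circuits $G$ and $H$, so the leaves of $G$ and of $H$ are distinct nodes. Under this reading, conditions 1 and 2 are immediate: each subset is literal-consistent and labeled by a literal over $\XX$ by hypothesis, and any two of the combined subsets are disjoint — within $S^G$ and within $S^H$ by hypothesis, and across the two blocks because they draw leaves from the disjoint sub-circuits. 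Condition 5 is equally short: $C\mid_{S^G_1:\T,\ldots,S^H_b:\T}$ equals $\big(G\mid_{\cdots}\big) \lor \big(H\mid_{\cdots}\big)$, and since each disjunct is $\XX$-independent by condition 5 for $G$ and for $H$ respectively, so is their disjunction.

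The heart of the argument, and the step I expect to be the main obstacle, is a single claim that handles conditions 3 and 4 uniformly:

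\textbf{Claim.} Let $A \lor B$ be a disjunction of circuits with disjoint leaves, let $\ell$ be a literal over $\XX$, and let $S$ be a set of $\ell$-leaves all lying in $A$. If $S$ is $\land$-unrealizable in $A$, then $S$ is $\land$-unrealizable in $A \lor B$.

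To prove the claim I would unwind Definitions~\ref{def:set-realizability} and~\ref{def:l-unreal}. Checking $\land$-realizability of $S$ in $A \lor B$ means forming the circuit in which every $\ell$-leaf outside $S$ is set to $\F$, every remaining $\ell$-leaf (exactly those in $S$) is relabeled $w$, and every $\neg\ell$-leaf is relabeled $w'$; call the two resulting disjuncts $A''$ and $B''$, so that we must decide whether some assignment $\sigma$ to the remaining variables makes $\llbracket A'' \lor B'' \rrbracket_\sigma = \llbracket w \land w' \rrbracket$. The crucial observation is that since $S \subseteq A$, every $\ell$-leaf of $B$ lies outside $S$ and is therefore set to $\F$; hence $B''$ contains no leaf labeled $w$, and $\llbracket B'' \rrbracket_\sigma$ is a function of $w'$ alone. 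Now $\llbracket A'' \lor B'' \rrbracket_\sigma = \llbracket A'' \rrbracket_\sigma \lor \llbracket B'' \rrbracket_\sigma$; evaluating at $w = \F$, where $w \land w'$ is $\F$, forces $\llbracket B'' \rrbracket_\sigma \equiv \F$, and the equation collapses to $\llbracket A'' \rrbracket_\sigma = \llbracket w \land w' \rrbracket$. But $A''$ is exactly the circuit obtained from $A$ when testing $\land$-realizability of $S$ in $A$, so this contradicts the hypothesis that $S$ is $\land$-unrealizable in $A$. Hence no such $\sigma$ exists, which proves the claim.

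With the claim in hand, conditions 3 and 4 follow level by level. For condition 3, $S_1 = S^G_1$ is $\land$-unrealizable in $G$ by hypothesis, so it is $\land$-unrealizable in $G \lor H$ by the claim (with $A = G$, $B = H$). For condition 4 I would split on whether the index falls in the $G$-block or the $H$-block. If $S_j = S^G_j$ with $j \le a$, then $C\mid_{S_1:\T,\ldots,S_{j-1}:\T} = \big(G\mid_{S^G_1:\T,\ldots,S^G_{j-1}:\T}\big) \lor H$, since the relabeled subsets all lie in $G$; disjointness ensures $S^G_j$ still consists of $\ell$-leaves of the left disjunct, which is $\land$-unrealizable there by condition 4 for $G$, so the claim (with $A = G\mid_{\cdots}$, $B = H$) gives the result. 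If $S_j = S^H_t$ with $t = j - a \ge 1$, then $C\mid_{S_1:\T,\ldots,S_{j-1}:\T} = \big(G\mid_{S^G_1:\T,\ldots,S^G_a:\T}\big) \lor \big(H\mid_{S^H_1:\T,\ldots,S^H_{t-1}:\T}\big)$, and $S^H_t$ is $\land$-unrealizable in the right disjunct by condition 3 or 4 for $H$; since $\lor$ is commutative, the claim applies with the two disjuncts swapped. This exhausts all indices, completing conditions 3 and 4 and hence the proof. The only place requiring real care is the claim itself — specifically, justifying that $B''$ has no $w$-leaf, which relies on $S$ living entirely in one disjunct, and that setting $w = \F$ legitimately isolates the contradiction.
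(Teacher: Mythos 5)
Your proof is correct and takes essentially the same approach as the paper's: direct verification of the conditions of Definition~\ref{def:saunf}, with the key observation that for a subset lying in one disjunct, all same-literal leaves of the other disjunct are relabeled $\F$ (by Definition~\ref{def:set-realizability}), so $\land$-unrealizability transfers across the $\lor$-root, and condition 5 follows since a disjunction of $\XX$-independent formulas is $\XX$-independent. Your explicit claim (forcing the relabeled second disjunct to be identically $\F$ by evaluating at $w=\F$) simply makes rigorous the step the paper states informally.
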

\begin{proof}
The proof follows from the claim that $(S^G, S^H)$ is a sequence of subsets of leaves of $G\lor H$ 
that satisfies the conditions of Definition~\ref{def:saunf}.  To see why this is so, note that by Definition~\ref{def:set-realizability}, when considering a subset, say $S^G_i$, of $\ell$-labeled leaves in $S^G$, all $\ell$-labeled leaves of $H$ must be re-labeled $\bot$.  Hence, $H$ can only contribute $\neg{\ell}$, and can, at worst, combine with $\ell$ contributed by $G$ at the $\lor$-labeled root of $G\lor H$. Since the set of $\ell$-labeled leaves in $S^G_i$ are already $\land$-unrealizable in $G$, we find that $S^G_i$ is $\land$-unrealizable in $G \lor H$ as well. By repeating this argument, we find that conditions 1, 2, 3 and 4 of Definition~\ref{def:saunf} are satisfied by the circuit $G \lor H$.  To see why condition 5 is also satisfied, observe that since $G$ (resp. $H$) is in {\eqnf} w.r.t $\mset{\XX}$ and $S^G$ (resp. $S^H$), when all subsets of leaves in $(S^G, S^H)$ are re-labeled to $\top$, the circuit $G \lor H$ represents the disjunction of two formulas, each of which is semantically independent of $\mseq{X}$.  Hence, $G \lor H$ is in {\eqnf} w.r.t $\mset{\mseq{X}}$ and $(S^G,S^H)$.
\end{proof}
Significantly, Lemma~\ref{disjunction} does not require any assumptions on the relation between ordering of subsets in $S^G$ and $S^H$.  Other popular normal forms do not enjoy this property.
For example,  disjoining two {\ROBDD}s constructed with different ordering of variables does not always yield an {\ROBDD} in polynomial-time.
Similarly, combining two \snf\ circuits with an $\lor$ gate may not result in a \snf\ circuit unless the ordering of output variables in both circuits are the same. This shows that disjunction is more efficiently computable in \eqnf\ than in {\ROBDD}s or even in \snf.

Next, we observe that existential quantification of \emph{all system outputs} is easy for \eqnf\ representations.
\begin{lemma}\label{exist-quantification}
Suppose $G$ is in \eqnf\ w.r.t $\mset{\mseq{X}}$ and a sequence $S$.  Let $L$ be the set of all leaves of $G$  that are labeled by a literal over $\mseq{X}$. Then $\exists \mseq{X}\, G \Leftrightarrow G\mid_{L:\top}$.
\end{lemma}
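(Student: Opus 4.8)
The plan is to prove the stated equivalence as two separate implications, reading $G\mid_{L:\top}$ as a formula over $\mseq{I}$. The forward implication $\exists\mseq{X}\,\varphi_G \Rightarrow \varphi_{G\mid_{L:\top}}$ is the easy one and uses only the monotonicity of NNF circuits, a tool already exploited in the proofs of Lemma~\ref{onedummyaddition} and Theorem~\ref{main}. Concretely, fix an input assignment and an assignment $\tau$ of $\mseq{X}$ under which $\varphi_G$ evaluates to $\top$. Every $\mseq{X}$-leaf of $G$ takes a value under $\tau$ that is dominated by $\top$, and since all $\land$- and $\lor$-gates are monotone, relabelling the whole of $L$ to $\top$ can only raise the value at the root. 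Hence $\varphi_{G\mid_{L:\top}}$ is $\top$ on that input, establishing the forward direction.

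For the backward implication I would peel off the \eqnf\ sequence $S=(S_1,\dots,S_k)$ one subset at a time using Theorem~\ref{main}. Write $G^{(0)}=G$ and $G^{(j)}=G\mid_{S_1:\top,\dots,S_j:\top}$, and let $\ell_j$ (with underlying variable $v_{\ell_j}\in\mset{\mseq{X}}$) be the literal labelling $S_j$. Since the $S_l$ are pairwise disjoint (condition~1 of Definition~\ref{def:saunf}), the leaves of $S_j$ still carry the label $\ell_j$ in $G^{(j-1)}$, and conditions~3 and~4 guarantee that $S_j$ is $\land$-unrealizable in $G^{(j-1)}$. Thus part~1 of Theorem~\ref{main}, applied to $G^{(j-1)}$ with the $\land$-unrealizable subset $S_j$, yields $\exists v_{\ell_j}\,\varphi_{G^{(j-1)}}\iff\exists v_{\ell_j}\,\varphi_{G^{(j)}}$; quantifying the remaining output variables on both sides preserves this, so $\exists\mseq{X}\,\varphi_{G^{(j-1)}}\iff\exists\mseq{X}\,\varphi_{G^{(j)}}$. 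Chaining the $k$ equivalences gives $\exists\mseq{X}\,\varphi_G\iff\exists\mseq{X}\,\varphi_{G^{(k)}}$, and condition~5 of Definition~\ref{def:saunf} makes $\varphi_{G^{(k)}}$ semantically independent of $\mseq{X}$, so $\exists\mseq{X}\,\varphi_{G^{(k)}}\iff\varphi_{G^{(k)}}$. Combining, $\exists\mseq{X}\,\varphi_G\iff\varphi_{G^{(k)}}$.

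The remaining task is to reconcile $\varphi_{G^{(k)}}$ with $\varphi_{G\mid_{L:\top}}$, and I expect this to be the main obstacle rather than the routine quantifier-peeling in the previous paragraph. Observe that $G\mid_{L:\top}=G^{(k)}\mid_{L':\top}$, where $L'=L\setminus(S_1\cup\cdots\cup S_k)$ is the (possibly non-empty) set of $\mseq{X}$-leaves not named in $S$; monotonicity again gives $\varphi_{G^{(k)}}\Rightarrow\varphi_{G\mid_{L:\top}}$, so the subtle direction is $\varphi_{G\mid_{L:\top}}\Rightarrow\varphi_{G^{(k)}}$, i.e.\ forcing the leftover leaves in $L'$ to $\top$ must not raise the value of the already $\mseq{X}$-independent circuit $G^{(k)}$. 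When $S$ names every $\mseq{X}$-leaf---as in the circuit $F$ arising in the proof of Theorem~\ref{lemma:structure}, where $L'=\emptyset$---this is immediate and the lemma follows at once. In the general case one cannot conclude it from condition~5 alone, since an $\mseq{X}$-independent monotone circuit can still change value when a subset of its $\mseq{X}$-leaves is forced to $\top$; closing this gap---presumably by using the way the $\land$-unrealizability of the $S_j$ constrains the leftover leaves, or by restricting attention to sequences $S$ that exhaust $L$---is exactly where I would focus the real effort.
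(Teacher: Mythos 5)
Your two completed steps are exactly the paper's intended argument: the paper's entire proof of Lemma~\ref{exist-quantification} is the one-liner ``Follows from Theorem~\ref{main}(1) and Definition~\ref{def:saunf}'', which unpacks to precisely your peeling of $S_1,\dots,S_k$ via the equivalence $\exists v_{\ell_j}\,\varphi_{G^{(j-1)}}\iff\exists v_{\ell_j}\,\varphi_{G^{(j)}}$ of Theorem~\ref{main}(1), followed by an appeal to condition~5, with monotonicity handling the easy direction. So you have not taken a different route; you have made the paper's route explicit --- and in doing so exposed the step it silently skips.

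The gap you flag in your last paragraph is not a failure of your imagination: it cannot be closed, because the lemma as stated is false when $S_1\cup\dots\cup S_k$ does not exhaust $L$. Take $\XX=(x,y)$, $\II=(i)$, and let $G$ be the NNF circuit $(x\wedge\neg x)\vee(\neg i\wedge y)$, with $S=(S_1)$ where $S_1$ is the unique $y$-leaf. Then $S_1$ is $\land$-unrealizable in $G$: since the sub-circuit $x\wedge\neg x$ is semantically $\F$ under every assignment of the \emph{variable} $x$, the function of $G\mid_{y=w,\neg y=w'}$ under any assignment of $\{x,i\}$ is either $\F$ or $w$, never $w\wedge w'$. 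Moreover $\varphi_{G\mid_{S_1:\T}}\Leftrightarrow\neg i$ is semantically independent of $\XX$, so all five conditions of Definition~\ref{def:saunf} hold and $G$ is in \eqnf\ w.r.t.\ $\{x,y\}$ and $S$. Yet $\exists\XX\,\varphi_G\Leftrightarrow\neg i$, while $\varphi_{G\mid_{L:\T}}=(\T\wedge\T)\vee(\neg i\wedge\T)\Leftrightarrow\T$. The statement becomes true under either of the repairs you anticipate: add the hypothesis that $S_1\cup\dots\cup S_k$ contains every leaf of $L$ (then $G^{(k)}=G\mid_{L:\T}$ and your peeling argument is a complete proof), or keep the hypotheses and weaken the conclusion to $\exists\XX\,\varphi_G\Leftrightarrow\varphi_{G\mid_{S_1:\T,\dots,S_k:\T}}$, which is well defined as a function of $\II$ by condition~5. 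Reassuringly, in the paper's only use of this lemma --- the proof of Theorem~\ref{lemma:structure}, where the sequence for the circuit $F$ consists of \emph{all} $x_i$-leaves and \emph{all} $\neg x_i$-leaves --- the stronger hypothesis holds, so the downstream results are unaffected by this correction.
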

\begin{proof}
Follows from Theorem~\ref{main}(1) and Definition~\ref{def:saunf}.
\end{proof}

Next, we move to the more difficult case of conjunction.
\begin{lemma}
Suppose $G$ is in \eqnf\ w.r.t $\mset{\mseq{X}}$ and a sequence $S^G$, and $H$ is in \eqnf\ w.r.t. $\mset{\mseq{X}}$ and sequence $S^H$. If there is no literal $\ell$ over $\mseq{X}$ such that $G$ has an $\ell$-labeled leaf and $H$ has a $\neg\ell$-labeled leaf, the circuit $G \land H$ is in \eqnf\ w.r.t. $\mset{\mseq{X}}$ and $(S^G,S^H)$. Otherwise, a \eqnf\ circuit semantically equivalent to $G\land H$ cannot be constructed in time polynomial in $|G|,|H|$ unless $\myP = \myNP$.  Further, such a circuit cannot have size polynomial in $|G|,|H|$ unless $\Pi_2^P = \Sigma_2^P$ (i.e. unless the polynomial hierarchy collapses to the second level). 
\end{lemma}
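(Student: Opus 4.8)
The plan is to prove the positive direction as a dual of the disjunction argument of Lemma~\ref{disjunction}, and to obtain the two hardness claims by combining the characterization in Theorem~\ref{thm:characterize} with a dual-rail encoding that expresses an arbitrary specification as a conjunction of two \eqnf\ circuits.

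\emph{Positive direction.} I would show that $(S^G,S^H)$ witnesses $G\land H\in\eqnf$. Conditions~1, 2 and~5 of Definition~\ref{def:saunf} are immediate: the subsets lie in disjoint sub-circuits and are literal-consistent over $\XX$, and setting all of $(S^G,S^H)$ to $\T$ yields $(G\mid_{S^G:\T})\land(H\mid_{S^H:\T})$, a conjunction of two formulas each independent of $\XX$. The crux is conditions~3 and~4. Consider checking $\land$-unrealizability of a subset $T$ of $\ell$-leaves lying in $G$ (after the earlier subsets in the sequence are set to $\T$). By Definition~\ref{def:set-realizability} every $\ell$-leaf outside $T$ is relabelled $\F$, the leaves of $T$ become $w$, and all $\neg\ell$-leaves become $w'$. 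Since $G$ has an $\ell$-leaf, the sign-consistency hypothesis forbids $H$ from having any $\neg\ell$-leaf; hence the $H$-part contains neither $w$ nor $w'$ and, under any assignment $\sigma$, evaluates to a constant. As the root is $\land$, the circuit reduces to the conjunction of that constant with the $G$-part, which can equal $w\land w'$ only if the constant is $\T$ and $T$ is $\land$-realizable in $G$, contradicting $G\in\eqnf$. The case $T\subseteq H$ is symmetric (the hypothesis is now applied to $H$), and relabelling earlier subsets to $\T$ only deletes leaves, so sign-consistency is preserved down the sequence. This is exactly dual to Lemma~\ref{disjunction}: at an $\lor$-root the other circuit can contribute only $\neg\ell$ and is harmless, whereas at an $\land$-root it could contribute $w'$---which is precisely what the hypothesis rules out.

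\emph{Hardness direction.} For the ``otherwise'' case I would reduce from general \bfs. By Theorem~\ref{thm:characterize}, a polynomial-time compiler from $G\land H$ into \eqnf\ would (via Algorithm~\ref{alg:skgen} and Theorem~\ref{thm:skolem}) give polynomial-time \bfs\ for conjunctions of \eqnf\ circuits, and a polynomial-size \eqnf\ equivalent would give polynomial-size Skolem functions for them; so it suffices to exhibit a hard family of such conjunctions. Given a CNF $\phi$ over the variables to be synthesized, I would dual-rail each such variable $x_k$: replace the literal $x_k$ by a fresh output $a_k$ and $\neg x_k$ by a fresh output $b_k$, obtaining $\phi'$. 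Then $\phi'$ has only positive occurrences of the fresh outputs, so every such literal is trivially $\land$-unrealizable and $\phi'\in\eqnf$. Let $\chi=\bigwedge_k\big((a_k\lor b_k)\land(\neg a_k\lor\neg b_k)\big)$ enforce $a_k=\neg b_k$; a short case analysis---under any assignment at most one of the two clauses of index $k$ can carry $w$ and the other $w'$, never both---shows every literal over $\{a_k,b_k\}$ is $\land$-unrealizable in $\chi$, so $\chi\in\eqnf$ as well. The pair $(\chi,\phi')$ lies in the ``otherwise'' case, since $\chi$ has a $\neg a_k$-leaf while $\phi'$ has an $a_k$-leaf. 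Identifying $x_k$ with $a_k$ under the constraint $\chi$ recovers $\phi$, so $\chi\land\phi'$ is the dual-rail image of $\phi$ and any Skolem vector for $(a_k,b_k)$ in $\chi\land\phi'$ projects, via $\psi^{x_k}:=\psi^{a_k}$, to a Skolem vector for $\phi$ of no larger size.

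Finally I would instantiate $\phi$ twice. Taking $\phi$ an arbitrary CNF with no system inputs gives $\exists(a_k,b_k)\,(\chi\land\phi')\Leftrightarrow\phi\ \text{is satisfiable}$; a polynomial-time \bfs\ solver (obtained from a polynomial-time \eqnf-compiler as above) would then decide SAT in polynomial time, forcing $\myP=\myNP$. For the second claim, I would instantiate $\phi$ with the family of specifications from~\cite{cav18} that provably lack polynomial-size Skolem function vectors unless $\Sigma_2^P=\Pi_2^P$; a polynomial-size \eqnf\ circuit for $\chi\land\phi'$ would, by Theorem~\ref{thm:skolem} and the size-preserving projection above, yield polynomial-size Skolem functions for that family, collapsing $\Sigma_2^P=\Pi_2^P$. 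I expect the main obstacle to be verifying that both $\chi$ and $\phi'$ are genuinely in \eqnf---especially the explicit $\land$-unrealizability check for $\chi$---and that the projection $\psi^{x_k}=\psi^{a_k}$ is a correct and non-size-increasing Skolem map; in the positive direction, the only subtlety is confirming that sign-consistency survives relabelling the earlier subsets to $\T$.
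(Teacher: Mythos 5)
Your proposal is correct and essentially mirrors the paper's own proof: your dual-rail pair $(\phi',\chi)$ is exactly the paper's positive form $G^{-}$ together with the complement-enforcing circuit $G^{+}$, your positive-direction argument (under the sign-consistency hypothesis the other conjunct can contribute neither $w$ nor $w'$, so $\land$-unrealizability is inherited) is the paper's, and your $\myP=\myNP$ reduction via Theorem~\ref{thm:skolem} is the same SAT reduction. The only divergence is the polynomial-size claim: the paper argues inline that polynomial-sized \eqnf\ circuits for such conjunctions would place SAT in $\myPSPoly$ and then invokes the Karp--Lipton theorem to collapse the hierarchy, whereas you delegate this step to the hard family of~\cite{cav18} --- a legitimate shortcut, since that cited lower bound rests on the same Karp--Lipton argument.
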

\begin{proof}
If there is no literal $\ell$ over $\mseq{X}$ such that $G$ has an $\ell$-labeled leaf and $H$ has a $\neg\ell$-labeled leaf, it is easy to see that a leaf of $G$ and a leaf of $H$ cannot participate together to make any literal in $\lits(\XX)$ $\land$-realizable in the circuit $G \wedge H$.  Since $G$ is in \eqnf\ w.r.t. $\mset{\mseq{X}}$ and $S^G$, and $H$ is in \eqnf\ w.r.t. $\mset{\mseq{X}}$ and $S^H$, it then follows that $G \land H$ is also in \eqnf\ w.r.t. $\mset{\mseq{X}}$ and the sequence $(S^G, S^H)$ (or alternatively, $(S^H, S^G)$).

The proof for the remainder of the lemma is more intricate. For this part of the discussion, we consider circuits $G$ in which the labels of all leaves are considered to be from $\lits(\XX)$. In other words $\II$ is assumed to be empty.  Let $G^-$ denote the circuit $G|_{\neg x_1=x'_1,\ldots \neg x_m=x'_m}$ where $\mseq{X'}=(x'_1,\ldots,x'_m)$ is a sequence of fresh variables. This is sometimes called the \emph{positive form} of the circuit. 

\begin{claim}\label{fc:syn}
For every circuit $G$, the circuit $G^{-}$
is in ${\eqnf}$ w.r.t. 
$\mset{\mseq{X}} \cup \mset{\mseq{X}'}$ for any sequence of literal-consistent subsets of leaves.
\end{claim}
\begin{proof}
There is no label of a leaf of $G^{-}$ whose negation is also the label of some other leaf of $G^{-}$ (since there are no negated output literals at all in the labels of leaves).  This eliminates the possibility of a subset of literal-consistent leaves being $\land$-realizable in $G^{-}$.
\end{proof}
Further, given $G$, let  $G^{+}$ be the circuit representing the formula
$\land_{i=1}^{n}((x_i' \land \neg{x_i})\lor(\neg{x_i'} \land x_i))$ where $x_i'$ are the variables introduced in $G^{-}$. Note that $\varphi_{G^{+}}$ $\Leftrightarrow$ $\land_{i=1}^{n}(x_i'\Leftrightarrow \neg{x_i})$
\begin{claim}\label{gc:syn}
For every circuit $G$, the circuit $G^{+}$ is in \eqnf\ w.r.t. $\mset{\mseq{X}} \cup \mset{\mseq{X'}}$ and the sequence of literal-consistent leaves $S = (S_1, \ldots S_{2n}, S_1', \ldots S_{2n}')$, where $S_{2i-1}$ (resp. $S_{2i-1}')$ is the set of all leaves labeled $x_i$ (resp. $x_i'$), and $S_{2i}$ (resp. $S_{2i}$') is the set of all leaves labeled $\neg x_i$ (resp. $\neg x_i'$) in $G^{+}$.
\end{claim}
\begin{proof}
It is easy to see that for the sequence $(S_1, S_2, \ldots S_{2n})$, the first 4 conditions of Definition~\ref{def:saunf} are satisfied. Moreover $\varphi_{G^{+}\mid_{S_1:\T, S_2:\T, \ldots S_{2n}:\T}} \Leftrightarrow \T$.  Therefore, condition 5 of Definition~\ref{def:saunf} is also satisfied, and $G^{+}$ is in \eqnf\ w.r.t. $\mset{\mseq{X}}\cup \mset{\mseq{X'}}$ and  $(S_1, \ldots S_{2n})$, and hence also w.r.t. the sequence $S = (S_1, \ldots S_{2n}, S_1', \ldots S_{2n}')$.
\end{proof}
It is easy to see  that $\varphi_{G^-}\land \varphi_{G^{+}}$ is equisatisfiable to $\varphi_G$. Now, consider an arbitrary instance of the Boolean satisfiability (SAT) problem, i.e., given a Boolean circuit $G$ over $\XX=(x_1,\ldots x_m)$, we must determine if $\varphi_G$ is satisfiable. We interpret $\varphi_G$ as a relational specification over system outputs $\mseq{X}$, with the system inputs $\mseq{I}$ being absent. 

  By definition, $|\Fc|$ and $|\Gc|$ are in $\bigO{|G|}$.  Using Claims~\ref{fc:syn} and~\ref{gc:syn}, each of these circuits is also in $\eqnf$ w.r.t. $\mset{\XX} \cup \mset{\XX'}$ and an appropriate sequence of subsets of leaves. Suppose there exists a polynomial-time algorithm $A$ that takes two \eqnf\  circuits as inputs and produces a {\eqnf}
    circuit semantically equivalent to the conjunction of the formulas represented by the
    two circuits.   We use algorithm $A$ to obtain a circuit $\widehat{G}$ that is semantically equivalent to $\llbracket \varphi_{\Fc} \wedge \varphi_{\Gc}\rrbracket$.  Clearly, $|\widehat{G}|$ must have size polynomial in $|\Fc|$ and $|\Gc|$, and therefore polynomial in $|G|$.  
    Since every \eqnf\ circuit yields Skolem functions for all outputs in time polynomial in the size of the circuit (see Theorem~\ref{thm:skolem}), we can compute a Skolem function for every output of $\widehat{G}$ in time polynomial in $|G|$.  Since there are no inputs, each of these Skolem functions must simplify to a Boolean constant.  From the definition of Skolem functions, we also know that $\widehat{G}$ is satisfiable iff the Skolem functions obtained above (Boolean constants for variables in $\mset{\mseq{X}} \cup \mset{\mseq{X'}}$) cause $\llbracket \widehat{G} \rrbracket$ to evaluate to $\T$.  In other words, we can determine if $\varphi_{\widehat{G}}$, and hence $\varphi_{\Fc} \wedge \varphi_{\Gc}$, is satisfiable in time polynomial in $|G|$.   Since $\varphi_{\Fc}\land \varphi_{\Gc}$ is equisatisfiable to $\varphi_G$, this effectively solves the Boolean satisfiability problem in polynomial time. Therefore, algorithm $A$ cannot run in polynomial time unless $\myP=\myNP$.
    
    Suppose for every two $\eqnf$ circuits, there exists a polynomial sized {\eqnf} circuit that is semantically equivalent to the conjunction of the formulas represented by the two circuits. Let $\widetilde{G}$ be the circuit obtained in this manner for $\llbracket \varphi_{\Fc} \wedge \varphi_{\Gc}\rrbracket$.
    By Theorem~\ref{thm:skolem}, Skolem functions synthesized from $\widetilde{G}$ must have size polynomial in $|G|$.  By the same argument as above, it now follows that Boolean satisfiability must be in {\myPSPoly}.  This implies that $\myNP \subseteq \myPSPoly$.  By Karp-Lipton Theorem, however, we know that this implies that the polynomial hierarchy collapses to the second level, i.e. $\Pi_2^P = \Sigma_2^P$.
\end{proof}

Finally, we ask how difficult it is to check if a given circuit $G$ is in {\eqnf}.
\begin{theorem}\label{thm:membership}
\begin{enumerate}
    \item Given $G$ and a sequence $S$ of literal-consistent disjoint subsets of leaves, checking if $G$ is in \eqnf\ w.r.t. $\mset{\mseq{X}}$ and $S$ is {\mycoNP}-complete.
    \item Given $G$, checking if $G$ is in \eqnf\ w.r.t. $\mset{\mseq{X}}$ and some (unspecified) sequence of subsets of leaves is {\mycoNP}-hard and in $\Sigma_2^P$.
\end{enumerate}
\end{theorem}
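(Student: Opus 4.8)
The plan is to treat the upper and lower bounds separately, reusing a single gadget for all the hardness. First I would observe that the two semantic tests inside Definition~\ref{def:saunf} are each in \myNP. For $\land$-realizability of a subset (conditions 3 and 4), note that once an assignment $\sigma$ to the variables other than $v_\ell$ is fixed, $G\mid_{\ell=w,\neg\ell=w'}$ collapses to a function of the two fresh variables $w,w'$ only; this function equals $\llbracket w\wedge w'\rrbracket$ iff it outputs $\F,\F,\F,\T$ on the four valuations of $(w,w')$, a polynomial-time check. Thus $\land$-realizability is in \myNP\ (guess $\sigma$, evaluate four times), so $\land$-unrealizability is in \mycoNP. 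Likewise the failure of condition 5 is in \myNP: guess an assignment to $\mseq{I}$ together with two assignments to $\mseq{X}$ and check that the formula takes different values. Since conditions 1--2 are polynomial-time syntactic checks, ``$G\notin\eqnf$ w.r.t.\ $S$'' is a polynomial-size disjunction of \myNP\ predicates, hence in \myNP; so part~1 membership is in \mycoNP. For part~2, a witnessing sequence consists of pairwise disjoint subsets of leaves and hence has size polynomial in $|G|$, so the problem is $\exists S\,(\text{of polynomial size})\,[\,G\in\eqnf\text{ w.r.t.\ }S\,]$, i.e.\ an existential over a \mycoNP\ predicate, which lies in $\Sigma_2^P$.

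For the lower bounds I would reduce from \textsc{unsat}. Given $\phi(\mseq{Y})$, set $\mseq{X}=(x)$, $\mseq{I}=\mseq{Y}$, and $G=\phi(\mseq{Y})\wedge x\wedge\neg x$, whose only output-labeled leaves are a single $x$-leaf $L_x$ and a single $\neg x$-leaf $L_{\neg x}$. The key computation is $G\mid_{x=w,\neg x=w'}=\phi(\mseq{Y})\wedge w\wedge w'$, so $\{L_x\}$ is $\land$-realizable iff $\phi$ is satisfiable, and dually for $\{L_{\neg x}\}$. For part~1 I would fix $S=(\{L_x\})$: conditions 1--2 hold trivially and condition 4 is vacuous; condition 3 is $\land$-unrealizability of $x$, which holds iff $\phi$ is unsatisfiable; and condition 5 asks whether $\llbracket\varphi_{G\mid_{\{L_x\}:\T}}\rrbracket=\llbracket\phi(\mseq{Y})\wedge\neg x\rrbracket$ is independent of $x$, which again holds iff $\phi\equiv\F$. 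Hence $G\in\eqnf$ w.r.t.\ $S$ iff $\phi$ is unsatisfiable, proving \mycoNP-hardness of part~1.

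The same $G$ settles part~2, because its only eligible subsets are $\{L_x\}$ and $\{L_{\neg x}\}$, giving just four candidate sequences. I would check each and observe that the \eqnf\ conditions always reduce to the single requirement that $\phi$ be unsatisfiable: for the two-element sequences, once one singleton is relabeled \T\ the remaining $\land$-unrealizability check (condition 4) and condition 5 become trivially true, while condition 3 still forces unsatisfiability. Therefore $G$ is in \eqnf\ w.r.t.\ \emph{some} sequence iff $\phi$ is unsatisfiable, yielding \mycoNP-hardness of part~2.

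The step I expect to be the main obstacle is exactly this collapse in part~2: one must engineer the gadget so tightly that the outer existential over sequences $S$ introduces no genuine $\Sigma_2^P$ difficulty, i.e.\ every admissible $S$ must yield the same \textsc{unsat} condition. Keeping the number of output-labeled leaves equal to two is what forces this. The other delicate point is the \myNP\ certificate for $\land$-realizability, whose correctness rests on the observation that after fixing $\sigma$ only the four valuations of $(w,w')$ need to be inspected.
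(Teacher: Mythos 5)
Your proposal is correct and takes essentially the same approach as the paper: the same guess-and-evaluate certificates for the complement problem (yielding \mycoNP{} membership for part~1 and, via a polynomially bounded existential over sequences, the $\Sigma_2^P$ bound for part~2), and the identical gadget $\phi \wedge x \wedge \neg x$ with a single output variable for both hardness claims. If anything, you are more careful than the paper's own write-up, which does not explicitly handle condition~5 in its \mycoNP{}-membership argument and dispatches part~2's hardness with ``follows similar lines,'' whereas you spell out the condition-5 certificate and verify all four candidate sequences for the gadget.
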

\begin{proof}
 First, we show that identifying whether a given circuit is in \eqnf\ for a given sequence of subsets of leaves is in coNP. This is equivalent to asking whether the complement problem, i.e. if the given circuit is \textit{not} in \eqnf\ for the given sequence of subsets of leaves, is in \myNP. We will define a non-deterministic polynomial-time Turing machine $M$ that solves this complement problem. The machine $M$ first checks whether the input circuit (say $G$) is in NNF.  This can be done by checking if each internal node of $G$ is labeled either $\land$ or $\lor$, and if all negations (if any) are on the labels of leaves.  Clearly, this check can be done in time polynomial in the size of $G$.  If the circuit is found to be not in NNF, the machine $M$ accepts, since $G$ cannot be in \eqnf\ in this case.  Otherwise (i.e. if $G$ is in NNF), the machine $M$ non-deterministically chooses a subset $S_i$ of literal-consistent leaves in the given sequence $S$ and executes the following operations.  
    
    Suppose the literal labeling leaves in the subset $S_i$ is $\ell_i$. The machine $M$ does the following: (a) it constructs $G^{\dagger} = G\mid_{S_1:\top\ldots S_{i-1}:\top}$, (b) sets all leaves of $G^{\dagger}$ that are not in $S_i$ but are labeled $\ell_i$ to $\bot$, (c) replaces all remaining labels $\ell$ (resp. $\neg \ell$) on leaves by $w$ (resp. $w'$), (d) guesses an assignment $\sigma$ to all variables other than $w$ and $w'$ labeling leaves in the resulting circuit,  and (e) checks if the resulting circuit represents the Boolean function $w \land w'$ for the assignment $\sigma$.   Note that after step (d), the resulting circuit represents a function of only $w$ and $w'$.  Hence the check in step (e) can be performed by setting $(w, w')$ to each of $(\top,\top), (\top,\bot), (\bot,\top)$ and $(\bot,\bot)$ and checking if the resulting circuit evaluates to $\top$, $\bot$, $\bot$ and $\bot$ respectively. Clearly, all the steps above can be done in time polynomial in $|G|$.  If after step (e), the resulting circuit is found to represent $w \land w'$, then machine $M$ accepts.  In this case, $G$ is not in \eqnf\ w.r.t. $\mset{\mseq{X}}$ and the given sequence $S$ of subsets of leaves.  Conversely, if the circuit is not in \eqnf\ w.r.t. $\mset{\mseq{X}}$ and the given sequence $S$, then there is a subset $S_i$ of leaves in $S$ that is $\land$-realizable in $G$ for the assignment described above. Hence the problem of identifying whether a circuit is \emph{not} in \eqnf\ for a given sequence of subsets of leaves is in \myNP.  Thus,  the problem of identifying whether $G$ is in \eqnf\ for a given sequence of subsets of leaves is in \mycoNP.\\
\indent Next, we show that the problem is co-NP hard. We reduce the problem of checking if a propositional formula represented by a \CNF\ circuit $G$ is unsatisfiable to identifying whether an appropriately constructed circuit is in \eqnf\ for a specific sequence of subsets of literal-consistent leaves. For this, we consider the specification  $G\land x\land \neg{x}$, where $x$ is the sole output of the specification, and the inputs are the variables labeling leaves of $G$. Since there is only one output variable, there are only two (equivalent) orderings of subsets of leaves labeled by output literals. It is easy to see that $x$ (equivalently, $\neg x$) is $\land$-realizable if and only if $G$ is satisfiable. Hence, identifying whether a problem is in \eqnf\ for a given sequence of subsets of leaves is coNP-hard.

To prove the second part of the theorem, we show that checking whether a given $G$ is in \eqnf\ w.r.t. some (unspecified) sequence of subsets of output literal-consistent leaves can be solved by a non-deterministic polynomial-time Turing machine $M$ with access to an {\myNP} oracle, i.e. the problem is in ${\myNP}^{\myNP}$.  Given a specification $G$ with system inputs $\mseq{I}$ and system outputs ${\mseq{X}}$ the machine $M$ does the following:
\begin{enumerate}
    \item[(i)] It guesses a sequence $S$ of disjoint subsets of literal-consistent leaves.
    \item[(ii)] It then reduces the problem of deciding whether $G$ is not in \eqnf\ w.r.t $\mset{\mseq{X}}$ and the sequence $S$ to checking the satisfiability of an appropriately constructed propositional formula $\varphi$.  This reduction is similar to what we discussed above in the proof of part (1).
    \item[(iii)] Finally, it feeds $\varphi$ to the \myNP oracle and accepts if and only if the \myNP oracle rejects.
\end{enumerate}
Therefore, $M$ accepts if and only if there is a sequence $S$ of subsets of output literal-consistent leaves for which $G$ is in \eqnf\ w.r.t. $\mset{\mseq{X}}$ and $S$. Hence, we have proved that our problem is contained in $\myNP^{\myNP}$, or equivalently in $\Sigma_2^P$. The proof that the problem is {\mycoNP}-hard uses arguments similar to the earlier {\mycoNP} hardness proof.
\end{proof}


\section{Conversion to \eqnf}\label{sec:conversion}
We now present an algorithm for compiling a circuit $G$ representing a {\CNF} formula over $\XX$ and $\II$ to a semantically equivalent circuit in {\eqnf}.  Algorithm {\tt GetSaunf} (see Algorithm~\ref{alg:getsaunf}) takes $G$ as input and produces a circuit $F$ and sequence $S$ of subset of leaves such that $\llbracket \varphi_G \rrbracket = \llbracket \varphi_F \rrbracket$ and $F$ is in {\eqnf} w.r.t. $\mset{\XX}$ and $S$.
Algorithm {\tt GetSAUNF} uses a routine named {\tt GetSubset} (shown in Algorithm~\ref{alg:getSubset}) to obtain an $\land$-unrealizable subset $U$ of leaves labeled by a chosen literal $\ell$.  This set is used to decompose the problem into two sub-problems: (i) a circuit $G'$ representing conjunction of all clauses that have $\ell$-labeled leaves in $U$ feeding them, and (ii) a circuit representing conjunction of all other clauses.  While $G'$ does not require any recursive application of {\tt GetSAUNF}, the second sub-problem is recursively solved using Shannon-style decomposition.  Finally, the sequence $S$ of subsets of leaves is obtained by suitably interleaving the subset $U$ and the sequences obtained from recursive applications of {\tt GetSAUNF}.  

\begin{algorithm}[t]
  \caption{\getSAUNF($G$)}
    \label{alg:getsaunf}
        \KwIn{$G$: Circuit representing a {\CNF} formula}
    \KwOut{$F$: \eqnf\ ckt semantically equivalent to $G$\\
    $S$: Sequence of literal-consistent subsets of leaves of $F$}
    
    \uIf{$\llbracket G \rrbracket$ is semantically independent of $\XX$}{\Return $\big(G, \emptyset\big)$\;}
    
    $\ell$ := ChooseLiteral($G$) \tcp*{Gives a literal over $\XX$ labeling a leaf in $G$}
    $U$ := \getSubset($G$, $\ell$)
    \tcp*{$U$ is an $\land$-unrealizable subset of $\ell$-labeled leaves in $G$}
    $G'$ := CktFromClausesWithLeaves($U$)
    \tcp*{$G'$ represents conjunction of clauses (in the {\CNF} formula represented by $G$) containing leaves in $U$}
    $D$ := $G\mid_{U:\T}$\;
    \uIf{$\llbracket D \rrbracket$ is semantically independent of $\ell$}{
    $(G_1, U_1)$ := \getSAUNF($D$)\;
    \Return $\big(G'\land G_1$, $\big(U, U_1\big)\big)$\;}
    
    $(G_1, U_1)$ := \getSAUNF($D_{|\ell=\T}$)\;
    $(G_2, U_2)$ := \getSAUNF($D_{|\ell=\F}$)\;
    $F$ := $G' \land ((\ell\land G_1)\lor (\neg{\ell}\land G_2))$\;
    $U_\ell$ := $\{ \ell$-leaf of $F$ in sub-circuit for $(\ell \land G_1)$ \}\;
    $U_{\neg\ell}$ := $\{\neg\ell$-leaf of $F$ in sub-circuit for $(\neg\ell \land G_2)$ \}\;
    \Return $\big(F, (U_\ell, U, U_{\neg\ell}, U_1, U_2)\big)$\;
\end{algorithm}

To understand how subroutine {\tt GetSAUNF} works, let $v_\ell$ denote the underlying variable of the literal $\ell$.  We use $D$ to represent the current view of the circuit (formula) from which we wish to extract the $\wedge$-unrealizable subset of leaves.  We also use $CurrS$ to denote a subset of clauses containing $\ell$ such that there exists an assignment $\sigma:\mset{\XX}\setminus\{v_\ell\} \cup \mset{\II} \rightarrow \{\F,\T\}$ for which all (and only) these clauses of the underlying {\CNF} formula evaluate to $\ell$, and $\ell$ is $\land-$realizable in $G$ under $\sigma$ (see Definition~\ref{def:l-unreal}).  Such an assignment $\sigma$ can be obtained by effectively finding a satisfying assignment of $\varphi_{D|_{\ell =\T, \neg\ell=\T}}$
$\wedge$ $\neg \varphi_{D|_{\ell=\T, \neg\ell=\F}}$ $\wedge$ $\neg \varphi_{D|_{\ell=\F, \neg\ell=\T}}$ $\wedge$ $\neg \varphi_{D|_{\ell=\F, \neg\ell=\F}}$.  To ensure that $CurrS$ in the current iteration does not include any such set obtained in previous iterations of the loop, we conjoin $D$ with all clauses in $CurrS$ after
dropping $\ell$.  The sets $CurrS$ obtained in each iteration of the repeat-until
loop are collected in $AllS$. 
Finally when $\ell$ becomes $\land$-unrealizable in circuit $D$, we obtain a satisfiable minimal hitting set (set cover) $HitS$ of $AllS$, i.e. a  subset of clauses that is jointly satisfiable with $\ell$ set to $\F$ and that includes a clause from every set in $AllS$.  Once $HitS$ is obtained, we exclude all $\ell$-leaves that appear in the clauses in $HitS$ to obtain a (maximal) $\land-$unrealizable subset of $\ell$-leaves, as required. Let us now formalize the correctness and complexity for this algorithm.

\begin{lemma}\label{lem:alg-getsaunf}
Algorithm {\tt GetSAUNF} returns a \eqnf\ circuit $F$ semantically equivalent to the input circuit $G$, with a worst-case running time exponential in $|G|$, and the worst-case size of $F$ also
exponential in $|G|$.
\end{lemma}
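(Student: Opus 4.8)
The plan is to prove the three assertions — semantic equivalence, membership in \eqnf, and the exponential time/size bounds — together, by induction on the number of output variables in $\mset{\XX}$ on which $\llbracket G\rrbracket$ depends. This is the natural termination measure: in the first recursive branch the argument $D=G\mid_{U:\T}$ is, by the guard of the conditional, semantically independent of $\ell$, and in the Shannon branch both $D_{|\ell=\T}$ and $D_{|\ell=\F}$ have $v_\ell$ eliminated, so every recursive call is on a circuit depending on strictly fewer output variables, and the recursion depth is at most $m=|\mset{\XX}|$.

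First I would settle semantic equivalence. Since $G$ is a \CNF\ circuit, writing $G'$ for the conjunction of the clauses that contain a leaf of $U$ and noting that $D=G\mid_{U:\T}$ is the conjunction of the remaining clauses (setting an $\ell$-leaf to $\T$ makes its clause vacuous, and each clause has at most one $U$-leaf), we have $\varphi_G \Leftrightarrow \varphi_{G'}\wedge\varphi_D$. In the first branch $\varphi_F=\varphi_{G'}\wedge\varphi_{G_1}$ and the inductive hypothesis gives $\varphi_{G_1}\Leftrightarrow\varphi_D$, so $\varphi_F\Leftrightarrow\varphi_G$. In the Shannon branch $\varphi_F=\varphi_{G'}\wedge\big((\ell\wedge\varphi_{G_1})\vee(\neg\ell\wedge\varphi_{G_2})\big)$; using $\varphi_{G_1}\Leftrightarrow\varphi_{D_{|\ell=\T}}$, $\varphi_{G_2}\Leftrightarrow\varphi_{D_{|\ell=\F}}$ and the Shannon expansion of $\varphi_D$ about $v_\ell$, the bracketed expression equals $\varphi_D$, so again $\varphi_F\Leftrightarrow\varphi_{G'}\wedge\varphi_D\Leftrightarrow\varphi_G$. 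The base case is immediate.

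Next I would establish membership in \eqnf, which rests on two facts. (a) \emph{Correctness of {\tt GetSubset}}: the returned $U$ is an $\land$-unrealizable subset of $\ell$-leaves of $G$; I would prove this by an assignment-exclusion argument — each iteration records in $AllS$ the set $CurrS$ of $\ell$-clauses that the current witness $\sigma$ falsifies except through $\ell$, conjoining $\mathtt{DisjoinWithoutLit}(CurrS,\ell)$ excludes $\sigma$ and forces termination, and for a hitting set $HitS$ of $AllS$, dropping $\ell$ from precisely the $HitS$-clauses makes any would-be witness of $\land$-realizability satisfy all recorded disjunctions, contradicting termination; hence $U$, the $\ell$-leaves outside $HitS$, is $\land$-unrealizable in $G$ by Definition~\ref{def:set-realizability}. (b) Given (a), I verify the five conditions of Definition~\ref{def:saunf} for the returned pair. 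Conditions~1 and~2 are structural. The tail subsets coming from $U_1$ and $U_2$ inherit $\land$-unrealizability from the inductive hypothesis on $G_1,G_2$, carried into $G_1\vee G_2$ by Lemma~\ref{disjunction}. Condition~5 follows by relabelling the subsets to $\T$ in the listed order: relabelling $U_\ell$ collapses $\ell\wedge G_1$ to $G_1$, relabelling $U$ collapses $G'$ to $\T$, and relabelling $U_{\neg\ell}$ collapses $\neg\ell\wedge G_2$ to $G_2$, leaving $\varphi_{G_1}\vee\varphi_{G_2}$, whose support is $\mset{\XX}$-free by the inductive hypothesis. The fresh singletons $U_\ell,U_{\neg\ell}$ are $\land$-unrealizable because, after the remaining $\ell$/$\neg\ell$-leaves are relabelled to $\F$ as Definition~\ref{def:set-realizability} prescribes and with $G_1,G_2$ being $v_\ell$-free (hence Boolean constants $A,B$ under any $\sigma$), the only $w$ sits in $w\wedge G_1$ and the only $w'$ in $w'\wedge G_2$, and $(w\wedge A)\vee(w'\wedge B)$ equals $w\wedge w'$ for no $A,B$.

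The step I expect to be the main obstacle is condition~4 for the leading ``old'' subset $U$ (and, symmetrically, $U_{\neg\ell}$): the $\land$-unrealizability of $U$ guaranteed inside $G$ by {\tt GetSubset} must persist in the \emph{relaxed} circuit $G'\wedge\big(G_1\vee(\neg\ell\wedge G_2)\big)$ obtained after relabelling $U_\ell$ to $\T$. The conjunction lemma of Section~\ref{sec:operations} does not apply, precisely because $G'$ carries $\ell$ while the branch $\neg\ell\wedge G_2$ carries $\neg\ell$; and the relabelling does not rebuild $\varphi_D$ but only its monotone over-approximation $\varphi_{D_{|\ell=\T}}\vee(\neg\ell\wedge\varphi_{D_{|\ell=\F}})$, so (a) cannot be quoted directly. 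I would discharge this by a direct analysis of a hypothetical witness $\sigma$: since only $G'$ can contribute $w$ and only the $D$-part can contribute $w'$, $\land$-realizability would force a $G'$-clause falsified-except-through-$\ell$ together with $\llbracket G_1\rrbracket_\sigma=\F$ and $\llbracket G_2\rrbracket_\sigma=\T$; translating these through $\varphi_{G_1}\Leftrightarrow\varphi_{D_{|\ell=\T}}$ and $\varphi_{G_2}\Leftrightarrow\varphi_{D_{|\ell=\F}}$ exhibits an assignment witnessing $\land$-realizability of $U$ already in $G$, contradicting (a). Finally, the complexity bounds are routine and account for the worst-case blow-up: the loop of {\tt GetSubset} discards at least one satisfying assignment per iteration, so it runs at most $2^{|\mset{\II}|+|\mset{\XX}|}$ times, while the binary branching of the Shannon step to depth $m$ produces a recursion tree of size $2^{\mathcal{O}(m)}$ whose nodes each perform work polynomial in $|G|$ besides one {\tt GetSubset} call; hence both the running time and the size $|F|$ are, in the worst case, exponential in $|G|$.
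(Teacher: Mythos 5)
Your proposal follows the paper's own proof in all of its essential content: the decomposition $\varphi_G \Leftrightarrow \varphi_{G'}\wedge\varphi_D$, the Shannon-expansion argument for semantic equivalence, the verification of the subset sequence $(U_\ell, U, U_{\neg\ell}, U_1, U_2)$ against Definition~\ref{def:saunf} with Lemma~\ref{disjunction} handling the tail, and---crucially---the witness-translation contradiction for what you call the ``main obstacle'' (a hypothetical $\sigma$ forcing $\varphi_{G'}$ to reduce to $\ell$, $\llbracket G_1\rrbracket_\sigma = \F$, $\llbracket G_2\rrbracket_\sigma = \T$, which transfers back to $\land$-realizability of $U$ in $G$) is exactly the paper's central argument. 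Your part~(a) on {\tt GetSubset} reproduces, correctly, what the paper states as a separate result (Lemma~\ref{lem:alg2}), and your complexity accounting reaches the same conclusion.

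The genuine gap is in your induction scaffolding. The measure ``number of output variables on which $\llbracket G\rrbracket$ depends'' does \emph{not} strictly decrease along recursive calls, for two reasons. First, semantic support is not monotone under the algorithm's operations: $D = G\mid_{U:\T}$ is a sub-conjunction of the clauses of $G$, and deleting clauses from a CNF can \emph{enlarge} semantic support (e.g.\ $(x\vee y)\wedge(x\vee\neg y) \equiv x$ depends only on $x$, but the single clause $(x \vee y)$ depends on $y$ too). Second, $\llbracket G\rrbracket$ need not depend on $v_\ell$ at all, since {\tt ChooseLiteral} is only required to return a literal \emph{labeling a leaf}. Concretely, take $G = (x_1\vee x_2)\wedge(\neg x_1 \vee x_2) \equiv x_2$ with $\ell = x_1$: running {\tt GetSubset} yields $AllS = \{\{(x_1\vee x_2)\}\}$, $HitS = \{(x_1\vee x_2)\}$, hence $U = \emptyset$; then $G' = \T$, $D = G$, the guard $\llbracket D\rrbracket$-independent-of-$\ell$ holds, and the first branch recurses on $G$ itself---your measure does not decrease, the depth-$\le m$ claim fails, and in fact the recursion as literally written never terminates on this input. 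Consequently the well-foundedness underpinning your appeals to the inductive hypothesis (for $D$, $G_1$, $G_2$) and your exponential \emph{upper} bound is unjustified as stated. The paper avoids this by arguing only partial correctness (induction over the recursion tree, tacitly assumed finite, with ``assuming the recursive calls return correct representations'') and by treating the complexity claim as a worst-case blow-up observation rather than a proved upper bound. To repair your write-up, either adopt that weaker partial-correctness framing, or induct on a measure that provably decreases (e.g.\ number of clauses plus number of output variables labeling leaves, noting that $U\neq\emptyset$ strictly removes clauses and the Shannon branch strictly removes $v_\ell$) together with an explicit treatment of the $U=\emptyset$, $\llbracket D\rrbracket$-independent corner case.
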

\begin{proof}
Since the circuit $G'$ represents a conjunction of a subset of clauses, each of which contains the literal $\ell$, the circuit $D = G|_{U:\T}$ obtained in line 6 of Algorithm {\tt GetSAUNF} simply represents the conjunction of all remaining clauses.  In the circuit $F$ constructed in line $11$ of Algorithm \ref{alg:getsaunf}, the set $U_\ell$ containing only the leaf $\ell$ is $\land-$unrealizable as it meets up $\neg{\ell}$ at an $\lor$ gate. In $F_{|U_\ell:\T}$, we can show that the set $U$ of $\ell-$leaves of $G'$ is $\land-$unrealizable as it was already $\land-$unrealizable in the circuit $G$. \\
\indent For $U$ to be $\land$-realizable in $F|_{U_\ell:\T}$, there must be an assignment $\sigma: \mset{\XX}\setminus\{v_\ell\} \cup \mset{\II} \rightarrow \{\F, \T\}$ such that $\llbracket \varphi_{G'}\rrbracket$ evaluates to $\ell$, $\llbracket \varphi_{G_1}\rrbracket$ to $\F$ and $\llbracket \varphi_{G_2} \rrbracket$ to $\T$.  However, if this were possible, then $U$ would have $\land$-realizable in $G$ (using the same assignment $\sigma$). However, this is a contradiction, since $U$ is $\land$-unrealizable in $G$.
\indent In $F|_{U_\ell:\T, U:\T}$, all $\ell$-leaves of $F$ have been re-labeled to $\T$, and therefore we can set the only $\neg{\ell}$-leaf (in set $U_{\neg\ell}$) to $\T$.  Finally, $F|_{U_\ell:\T, U:\T, U_{\neg\ell}:\T}$ = $G_1\lor G_2$, which is in \eqnf\ assuming the recursive calls return correct representations and using Lemma \ref{disjunction}. The base condition is when $G$ is independent of $X$, for which  it already returns the correct value. 

Note that the sequence of subsets returned in line 15 of Algorithm {\tt GetSAUNF} is $(U_\ell, U, U_{\neg\ell}, U_1, U_2)$, which corresponds to the sequence of setting leaves to $\T$ as discussed above.  The algorithm would be correct if it had returned $(U_\ell, U, U_{\neg\ell}, U_2, U_1)$ as the sequence of subsets of leaves as well.

In the worst-case, Algorithm {\tt GetSAUNF} can reduce to brute-force Shannon expansion if $U$ computed in line 4 of the algorithm always returns the empty set of leaves.  In this case,
both the running time and the size of $F$ can grow exponentially with $|G|$.
\end{proof}

\begin{algorithm}[t]
  \caption{\getSubset($G, \ell$)}
    \label{alg:getSubset}
        \KwIn{$G$: circuit representing {\CNF} formula, $\ell$: Literal}
    \KwOut{$T$: $\land-$unrealizable subset of $\ell$-leaves in $G$}
    $AllS$ := $CurrS$ := $\emptyset$\;
    $D$ := $G$\;
    \Repeat{$\ell$ is $\land$-unrealizable in $D$}
    {
      $\sigma$ := GetAssignment($D, \ell$)
      \tcp*{Assignment of vars except
      $v_\ell$ for which $\ell$ is $\land-$realizable in $D$}
      $CurrS$ := GetClausesEvaluatingToL($G, \sigma, \ell$)
      \tcp*{Set of all clauses (of  {\CNF} formula represented by $G$) containing $\ell$ that do not become $\T$ under $\sigma$}
      $AllS$ := $AllS \cup \{CurrS\}$\;
      $D$ := $D \wedge$ DisjoinWithoutLit($CurrS, \ell$)
      \tcp*{DisjoinWithoutLit($CurrS,\ell$) gives
      disjunction of clauses in $CurrS$ after dropping $\ell$}
  }
  $HitS$ := SatisfiableHittingSet($AllS$)\; 
  $T$ := Set of all $\ell$-leaves of $G$ that don't feed into any clause in $HitS$\;
  \Return $T$\;
\end{algorithm}

It remains to discuss the sub-routine {\tt GetSubset}.   The pseudo-code for this sub-routine is shown in Algorithm \ref{alg:getSubset}.  Let $v_\ell$ denote the underlying variable of the literal $\ell$.   $CurrS$ is a subset of clauses containing $\ell$ such that there exists an assignment $\sigma:\mset{\XX}\setminus\{v_\ell\} \cup \mset{\II} \rightarrow \{\F,\T\}$ for which all (and only) these clauses of the underlying {\CNF} formula evaluate to $\ell$, and $\ell$ is $\land-$realizable in $G$ under $\sigma$ (see Definition~\ref{def:l-unreal}).  Such an assignment $\sigma$ can be obtained by effectively finding a satisfying assignment of $\forall w \forall w'\, (\varphi_{D|_{\ell=w, \neg\ell= w'}} \Leftrightarrow (w \wedge w'))$, and therefore $\varphi_{D|_{\ell =\T, \neg\ell=\T}}$
$\wedge$ $\neg \varphi_{D|_{\ell=\T, \neg\ell=\F}}$ $\wedge$ $\neg \varphi_{D|_{\ell=\F, \neg\ell=\T}}$ $\wedge$ $\neg \varphi_{D|_{\ell=\F, \neg\ell=\F}}$.  To ensure that $CurrS$ in the current iteration does not include any such set obtained in previous iterations of the loop, we conjoin $D$ with the clause returned by DisjoinWithoutLit($CurrS, \ell$).  All sets $CurrS$ obtained as the repeat-until
loop iterates are collected in $AllS$. 
Finally when $\ell$ becomes $\land$-unrealizable in circuit $D$, we obtain a satisfiable minimal hitting set (set cover) $HitS$ of $AllS$, i.e. a  subset of clauses that is jointly satisfiable with $\ell$ set to $\F$ and that includes a clause from every set in $AllS$.  Given $AllS$, finding a minimal $HitS$ can be reduced to a MaxSAT problem.  Once $HitS$ is as obtained, we exclude all $\ell$-leaves that appear in the clauses in $HitS$ to obtain a (maximal) $\land-$unrealizable subset of $\ell$-leaves. With this, we can  state the correctness and complexity of Algorithm~\ref{alg:getSubset}.
\begin{lemma}\label{lem:alg2}
Algorithm {\tt GetSubset} returns a $\land-$unrealizable subset of $\ell$-leaves of $G$,  and takes  worst-case time exponential in $|G|$.
\end{lemma}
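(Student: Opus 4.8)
The plan is to prove the two assertions—correctness (the returned set $T$ is $\land$-unrealizable) and the exponential worst-case bound—separately, the first via a loop invariant feeding a hitting-set argument, the second by a counting argument over the sets $CurrS$. I would first make $\land$-realizability concrete for a \CNF\ circuit. Fix $\ell$ with underlying variable $v_\ell$, and let $\sigma$ range over assignments to $(\mset{\XX}\cup\mset{\II})\setminus\{v_\ell\}$. After relabeling $\ell$ by $w$ and $\neg\ell$ by $w'$, each $\ell$-clause reduces under $\sigma$ either to $\T$ (if some other literal is satisfied) or to $w$ (if every other literal is falsified), symmetrically for $\neg\ell$ and $w'$, while $v_\ell$-free clauses reduce to $\T$ or $\F$. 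Reading off Definition~\ref{def:l-unreal}, $\ell$ is $\land$-realizable with witness $\sigma$ exactly when (i) every $v_\ell$-free clause is satisfied, (ii) some $\ell$-clause reduces to $w$, and (iii) some $\neg\ell$-clause reduces to $w'$. For such $\sigma$ write $W(\sigma)$ for the set of $\ell$-clauses reducing to $w$; this is precisely what {\tt GetClausesEvaluatingToL} returns, so $CurrS = W(\sigma)$. Crucially, {\tt DisjoinWithoutLit}$(CurrS,\ell)=\bigvee_{C\in CurrS}(C\setminus\ell)$ is $v_\ell$-free, so conjoining it to $D$ changes neither the $\ell$- nor the $\neg\ell$-clauses, and $W(\cdot)$ is the same computed in $G$ or in any such $D$.

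Next I would establish the loop invariant. Because each pass conjoins $\bigvee_{C\in CurrS}(C\setminus\ell)$ to $D$, any later witness $\sigma'$ must satisfy that disjunction, i.e. some clause of $CurrS$ is satisfied by $\sigma'$ and hence lies outside $W(\sigma')$; equivalently $CurrS\not\subseteq W(\sigma')$. This already shows no set $CurrS$ can recur (a repeat would violate its own blocking clause), so the loop is well-defined, and it yields the key termination property: when the loop exits because $\ell$ is $\land$-unrealizable in $D$, \emph{every} $\sigma$ satisfying (i),(iii) with $W(\sigma)\neq\emptyset$ must falsify some blocking clause, i.e. must have $CurrS_i\subseteq W(\sigma)$ for some recorded $CurrS_i\in AllS$. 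The main obstacle I anticipate lies exactly here: pinning down this invariant rigorously—justifying that each $CurrS$, though computed from the original $G$, coincides with $W(\sigma)$ in the current $D$, and that the $v_\ell$-free blocking clauses forbid precisely the previously seen $w$-sets and nothing more—since the downstream contradiction hinges entirely on it.

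Then I would close correctness with the hitting set. By construction $HitS$ meets every $CurrS_i\in AllS$, and $T$ is the set of $\ell$-leaves not feeding a clause of $HitS$; by Definition~\ref{def:set-realizability}, testing $\land$-realizability of $T$ sets the remaining $\ell$-leaves (those in $HitS$-clauses) to $\F$, which deletes $\ell$ from each $HitS$-clause, turning it into a $v_\ell$-free clause a witness must satisfy. Suppose for contradiction $T$ were $\land$-realizable via $\sigma$. Then $\sigma$ satisfies (i),(iii); the stripped $HitS$-clauses are satisfied, so $W(\sigma)\cap HitS=\emptyset$; and some surviving $\ell$-clause reduces to $w$, so $W(\sigma)\neq\emptyset$. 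The termination property gives $CurrS_i\subseteq W(\sigma)$ for some $i$, but $HitS$ hits $CurrS_i$, so $\emptyset\neq HitS\cap CurrS_i\subseteq HitS\cap W(\sigma)$, contradicting $W(\sigma)\cap HitS=\emptyset$. Hence $T$ is $\land$-unrealizable. The degenerate case where $\ell$ is already $\land$-unrealizable in $G$ yields $AllS=\emptyset$, $HitS=\emptyset$, and $T$ equal to all $\ell$-leaves, which is vacuously correct; the extra demand that $HitS$ be jointly satisfiable with $\ell=\F$ is what makes $T$ \emph{maximal}, a strengthening beyond what the lemma requires.

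Finally, for the complexity bound I would note that the number of distinct sets $CurrS\subseteq\{\ell\text{-clauses}\}$ is at most $2^{k}\le 2^{|G|}$, where $k$ is the number of $\ell$-clauses; since no set recurs, the loop runs at most $2^{|G|}$ times. Each iteration obtains $\sigma$ by deciding satisfiability of $\varphi_{D|_{\ell=\T,\neg\ell=\T}}\wedge\neg\varphi_{D|_{\ell=\T,\neg\ell=\F}}\wedge\neg\varphi_{D|_{\ell=\F,\neg\ell=\T}}\wedge\neg\varphi_{D|_{\ell=\F,\neg\ell=\F}}$, a single \textsf{SAT}-style query solvable in time exponential in $|G|$, with the remaining steps polynomial; the concluding {\tt SatisfiableHittingSet} call reduces to \textsf{MaxSAT} and is likewise exponential. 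An exponential iteration count times an exponential per-iteration cost is still exponential in $|G|$, giving the claimed worst-case bound.
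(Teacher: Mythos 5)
Your proof is correct and follows essentially the same route as the paper's: the same contradiction argument showing that any realizability witness $\sigma$ for the returned set $T$ would have some recorded $CurrS_i$ contained in its set of non-satisfied $\ell$-clauses, which $HitS$ then hits, contradicting the exclusion of $HitS$-leaves from $T$; and the same counting of subsets of $\ell$-clauses for the exponential bound. Your version is in fact somewhat more rigorous than the paper's, since you explicitly justify (via the blocking-clause invariant and non-recurrence of $CurrS$) why the loop terminates within exponentially many iterations, a point the paper only asserts.
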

\begin{proof}
Suppose Algorithm {\tt GetSubset} returned an $\land$-realizable subset of $\ell$-leaves of $G$.  Let $\sigma$
be the corresponding assignment of $\mset{\XX}\setminus \{v_\ell\} \cup \mset{\II}$. The set $S$ of clauses of $\varphi_G$ that do not become $\T$ under $\sigma$ must then be either equal to or a superset of $CurrS$ in some iteration of the repeat-until loop of lines $3$-$7$.  Therefore, $HitS$ must include some clause from $S$.  This implies that $T$ contains at least one $\ell$-labeled leaf that feeds into a clause in $HitS$ -- a contradiction.

The worst-case running time is dominated by the product of the number of times the repeat-until loop of lines 3--8 iterates and the time required to obtain $\sigma$ (line 4) and check the loop termination condition (line 8).  The count of loop iterations can be as high as the count of all subsets of $\ell$-labeled leaves. This is exponential in $|G|$ in the worst-case. Computing $\sigma$ and checking the loop termination condition also require time exponential in $|G|$ in the worst-case.
Hence, the worst-case running time of Algorithm {\tt GetSubset} is exponential in $|G|$.
\end{proof}
Note that we can modify the loop termination condition in Algorithm {\tt GetSubset} by incorporating a timeout. In case a timeout happens, we conservatively return $\emptyset$ as $T$. This reduces the worst-case running time, providing a tradeoff between running time and precision of computation.


\section{Some interesting applications}\label{sec:appl}
In Section~\ref{sec:intro}, we  described the $n$-bit \emph{factorization problem} -- a problem of immense interest in cryptanalysis.
We now show some interesting partial results using \eqnf\ circuits. We start with a relational specification $R$ over inputs $\mseq{I}$ and outputs $(\mseq{X},\mseq{Y})$ defined by $(\mseq{X}\times_{[n]} \mseq{Y} = \mseq{I})$ where $\times_{[n]}$ denotes $n$-bit unsigned integer multiplication. This specification evaluates to $\ttrue$ iff the given product relation holds, where $\mseq{X}, \mseq{Y}$ are $n$-bit output vectors and $\mseq{I}$ is a $2n$ bit input vector.  For $1\leq l \leq j \leq 2n$, we define a parametrized specification $\R{l}{j}$ over $\mseq{X}$, $\mseq{Y}$ and $\mseq{I}$  that evaluates to $\ttrue$ if and only if the bits from position $l$ to $j$ of $\mseq{X}\times_{[n]} \mseq{Y}$ match the corresponding bits of $\mseq{I}$.

Let $\mathbf{1}$ denote an $n$-bit representation of the integer $1$. If $\R{1}{2n}\wedge(\mseq{X} \neq {\bf 1}) \wedge (\mseq{Y}\neq {\bf 1})$ can be represented as a polynomial (in $n$) sized {\eqnf} circuit, our results show that Skolem functions of size polynomial in $n$ can be obtained for $n$-bit factorization with non-trivial (i.e. $\neq 1$) factors. This would have serious ramifications for cryptanalysis.  While we are not close to achieving such a result, our initial studies show some interesting results in trying to represent $\R{l}{j}$ in \eqnf.  Note that it is already known from~\cite{bryant91} that representing $\R{n}{n}$ requires exponentially large {\ROBDD}s, and sub-exponential representations using {\DNNF}, {\dDNNF}, {\wDNNF} or {\snf} are not known.   With \eqnf\ circuits however, we obtain a significant improvement.
\begin{theorem}\label{theorem:gen}
$For \ l\leq j\leq 2n, j-l<n, \R{l}{j}$ is representable by a polynomial (in $n$) sized {\eqnf} circuit.
\end{theorem}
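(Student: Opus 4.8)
The plan is to prove the theorem via the ``easy direction'' of our characterization: I will exhibit a \emph{polynomial-sized Skolem function vector} for $(\XX,\YY)$ in $\R{l}{j}$ and then appeal to Corollary~\ref{cor:eqnf-constr}, which turns any circuit $G$ together with a known Skolem function vector $\bpsi$ into a semantically equivalent {\eqnf} circuit of size $\bigO{|G|+|\bpsi|}$. Since $\R{l}{j}$ is naturally represented by a circuit $G$ of size $\bigO{n^2}$ (the schoolbook multiplier restricted to the output bits in positions $l$ through $j$, conjoined with a bitwise comparison against the corresponding bits of $\II$), it suffices to bound $|\bpsi|$ by a polynomial in $n$. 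The whole difficulty therefore collapses to: for every input $\II$, if some pair of $n$-bit numbers produces the required pattern $t=(i_l,\dots,i_j)$ in positions $[l,j]$ of their product, produce one such pair uniformly, and of small circuit size, from $t$. Note that $\R{l}{j}$ reads $\II$ only through $t$, and that for inputs whose $t$ is \emph{unrealizable} the {\bfs} condition holds vacuously, so the Skolem functions may output anything there.

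The construction is a ``shift-and-place'' argument, split by where the window $[l,j]$ sits. First, if $l\le n-1$, I fix $\XX=2^{l}$ so that the product equals $\YY$ shifted left by $l$; placing $t$ into bits $[0,j-l]$ of $\YY$ (legal since $j-l<n$) realizes \emph{every} target. Symmetrically, if $l\ge n$ but $j\le 2n-2$, I fix $\YY=2^{n-1}$ and place $t$ into bits $[l-n+1,\,j-n+1]$ of $\XX$; the bound $j\le 2n-2$ is exactly what keeps these positions inside the $n$-bit vector $\XX$. In both cases the factor carrying $t$ is a direct wiring of the input bits $i_l,\dots,i_j$ and the other factor is a constant, so $|\bpsi|$ is linear in $n$.

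The only remaining --- and genuinely delicate --- case is when the window touches the most significant bit, i.e.\ $j=2n-1$, which together with $j-l<n$ forces $l\ge n$; here no single power-of-two factor can reach bit $2n-1$ of the product. I fix $\XX=2^{n}-1$ and set $\YY=\big\lceil t\cdot 2^{l}/(2^{n}-1)\big\rceil$. The key numerical observation is that, as $\YY$ ranges over $[0,2^{n})$, the products $(2^{n}-1)\cdot\YY$ are consecutive multiples of $2^{n}-1$, hence spaced strictly less than $2^{n}\le 2^{l}$ apart and reaching up to $(2^{n}-1)^2$. The products whose top-window bits $[l,2n-1]$ equal $t$ form exactly the interval $[\,t2^{l},\,(t+1)2^{l})$, so whenever $t$ is realizable at all this interval contains one of those multiples; the chosen $\YY=\lceil t2^l/(2^n-1)\rceil\le 2^n-1<2^n$ then fits in $n$ bits and produces the desired pattern, while unrealizable $t$ need not be handled. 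Since division by the fixed constant $2^n-1$ has polynomial-sized circuits, $\YY$ is a polynomial-sized function of $t$, so again $|\bpsi|\in\poly(n)$.

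I expect this third case to be the main obstacle, both because it is where {\ROBDD}s and {\DNNF}-style forms blow up --- cf.\ the known exponential lower bound for $\R{n}{n}$ --- and because it is the only case that cannot be settled by a pure bit-shift, relying instead on the density of the multiples of $2^n-1$. The boundary target $t=2^{\,j-l+1}-1$ in particular must be checked against the true maximum product $(2^n-1)^2$, which is precisely what the ceiling-division bound above certifies. Once the polynomial Skolem vector is in hand for all three cases, Corollary~\ref{cor:eqnf-constr} delivers the polynomial-sized {\eqnf} circuit and closes the proof.
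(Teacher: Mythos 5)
Your proposal is correct, and its skeleton is the same as the paper's: exhibit a polynomial-sized Skolem function vector for $(\XX,\YY)$ in $\R{l}{j}$ and then invoke the Skolem-vector-to-\eqnf\ construction (you cite Corollary~\ref{cor:eqnf-constr}; the paper routes through Theorem~\ref{thm:characterize}, which rests on that same corollary). Your cases A and B coincide, up to your 0-indexed versus the paper's 1-indexed bit positions, with the paper's two cases: fix one factor to a power of two and wire the window bits of $\II$ directly into the other factor. The genuine difference is your third case, where the window reaches the most significant product bit ($j=2n$ in the paper's indexing, which with $j-l<n$ forces $l>n$). The paper does not treat this case separately, and in fact its second construction breaks there: taking $\XX=2^{n-1}$ and placing $\II_{l,j}$ into $\YY$ would require bit $j+1-n=n+1$ of $\YY$, which does not exist, and every product of the form $2^{n-1}\YY$ has its bit $2n$ equal to $\F$, so the paper's Skolem vector returns a wrong witness on realizable inputs whose top window bit is $\T$ (e.g., $n=2$, window $[3,4]$, target $(\II_3,\II_4)=(\F,\T)$ is realized by $3\times 3=9$, yet the paper's functions yield $\YY=0$ and product $0$). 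Your density argument --- fix $\XX=2^n-1$, set $\YY=\lceil t\cdot 2^{l}/(2^n-1)\rceil$ in your 0-indexed notation, and use that consecutive multiples of $2^n-1$ are spaced more finely than the window granularity while realizability caps $t\cdot 2^{l}$ by $(2^n-1)^2$ --- handles exactly this case, and it correctly exploits the fact that Skolem functions are unconstrained on unrealizable inputs. What each approach buys: the paper's pure bit-placement is simpler and makes every target realizable, so no realizability case-split is needed, but as written it only covers $j\le 2n-1$; your version needs the number-theoretic observation and a division-by-constant circuit, but it covers the full range claimed in the theorem statement and thereby closes a boundary gap in the paper's own argument.
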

\begin{proof}
By Theorem~\ref{thm:characterize}, a polynomial-sized {\eqnf} circuit can be generated from a polynomial-sized Skolem function vector.
Therefore, we focus on obtaining a polynomial-sized Skolem function vector for $\R{l}{j}, \ l\leq j\leq 2n, j-l<n$.
We use the notation $\mseq{X}_l$ to denote the $l^{th}$ least significant bit of $\mseq{X}$ and $\mseq{X}_{l,j}$ to denote the bit-slice of $\mseq{X}$ from $l$ to $j$ (both included).
Using similar notations for $\mseq{Y}$ and $\mseq{I}$, we consider two cases:
\begin{itemize}
\item $l\leq n$: A Skolem function vector for $(\mseq{X}, \mseq{Y})$ is given by
$\psi^{x_k} = \F$ (resp \T), if $k \neq l$ (resp. $k=l$), and $\psi^{y_k} = \II_{k+l-1}$ (resp. $\F$) if $k \le j+1-l$ (resp. $k > j+1-l$). 

  Intuitively, $\mseq{X}$ represents  $2^{l-1}$ and $\mseq{Y}=\mseq{I}_{l,j}$. 
\item $l>n$:  Following similar logic, the Skolem function vector is given by $\psi^{x_k} = \F$ (resp. $\T$) if $k \neq n$ (resp. $k = n$), and
$\psi^{y_k} = \II_{k+n-1}$ (resp. $\F$) if $l-n ~<~ k ~\le~ j+1-n$ (resp. otherwise).

  Intuitively, $\mseq{X}$ represents $2^{n-1}$ and $\mseq{Y}$ represents $2^{l-n}\times_{[n]} \mseq{I}_{l,j}$.
\end{itemize}
Having generated a polynomial-sized Skolem function vector for $\R{l}{j}, j-l<n$, we can generate a corresponding polynomial-sized {\eqnf} circuit using Theorem~\ref{thm:characterize}.
\end{proof}

Surprisingly, we can use Theorem \ref{theorem:gen} to also show that a restricted version of division has a polynomial sized \eqnf\  representation.  Consider the same relational specification $\mseq{X} \times_{[n]} \mseq{Y} =\mseq{I}$ considered earlier.  For the division problem, we treat $(\mseq{I},\mseq{Y})$ as system inputs and $\mseq{X}$ as system outputs, and write the relation as $\mseq{X} = \mseq{I}/\mseq{Y}$ and obtain the following theorem. 

\begin{theorem}
The relation $\mseq{X} = \mseq{I}/\mseq{Y}$, with inputs $\mseq{I},\mseq{Y}$ restricted to odd numbers (i.e. the relation evaluates to $\F$ if $\mseq{I}$ or $\mseq{Y}$ is even), is representable as a polynomial (in $n$) sized  \eqnf\ circuit.
\end{theorem}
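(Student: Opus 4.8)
The plan is to follow the template of Theorem~\ref{theorem:gen}: exhibit an explicit, polynomial-sized Skolem function vector for $\XX$ and then invoke Theorem~\ref{thm:characterize} to obtain a polynomial-sized {\eqnf} circuit. The crux is a number-theoretic observation: since $\YY$ is restricted to odd values, it is a unit in $\mathbb{Z}/2^n\mathbb{Z}$, so its inverse $\YY^{-1} \bmod 2^n$ exists and is unique. First I would establish uniqueness of the admissible output: whenever some $n$-bit $\XX$ satisfies $\XX \times_{[n]} \YY = \II$ (the full $2n$-bit product), reducing this equation modulo $2^n$ gives $\XX \cdot \YY \equiv \II \pmod{2^n}$, and multiplying by $\YY^{-1}$ forces $\XX \equiv \YY^{-1}\cdot \II \pmod{2^n}$. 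As $\XX$ is $n$-bit, this pins $\XX$ down to the single value $\YY^{-1}\cdot \II \bmod 2^n$, which depends only on $\YY$ and on the low $n$ bits of $\II$.

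This immediately suggests the candidate Skolem function $\psi(\II,\YY) = (\YY^{-1} \bmod 2^n)\cdot \II \bmod 2^n$, extended by an arbitrary constant when $\YY$ or $\II$ is even. I would verify its correctness by cases. If $\II$ or $\YY$ is even, the specification is identically $\F$, so both sides of the Skolem condition vanish and any output is admissible. If both are odd, then $\exists \XX\,\varphi$ holds iff the full product $\XX\cdot \YY = \II$ is solvable; by the uniqueness argument the only possible witness is exactly $\psi(\II,\YY)$, so $\exists \XX\,\varphi \iff \varphi(\psi(\II,\YY),\II,\YY)$, which is precisely the defining condition of a Skolem function. It remains to bound $|\psi|$. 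The truncated multiplication by $\II$ is just a low bit-slice of a product, computable by a circuit of size $\bigO{n^2}$ --- exactly the kind of bit-slice product whose tractability underlies Theorem~\ref{theorem:gen}. The inverse $\YY^{-1}\bmod 2^n$ I would compute by Hensel lifting (Newton iteration): starting from $z_0 = 1$ (the inverse of the odd $\YY$ modulo $2$) and iterating $z_{k+1} \leftarrow z_k(2 - z_k\YY)$, the $2$-adic precision doubles at each step, so $\lceil \log_2 n\rceil$ iterations suffice, each using a constant number of $\bmod\,2^n$ multiplications. This yields $|\psi| \in \bigO{n^2\log n}$, which is polynomial in $n$.

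Putting these together, $\psi$ is a polynomial-sized Skolem function vector for $\XX$ in the odd-restricted division specification, and Theorem~\ref{thm:characterize} converts it into a polynomial-sized semantically equivalent {\eqnf} circuit. I expect the main obstacle to be the polynomial-size argument for the modular inverse: one must make precise that iterated $\bmod\,2^n$ multiplication (Hensel lifting) stays polynomial-sized and converges correctly to $\YY^{-1}$, and cleanly tie these bit-slice multiplications back to the bit-slice product representability guaranteed by Theorem~\ref{theorem:gen} rather than re-deriving it. A secondary subtlety is to check that requiring the \emph{full} $2n$-bit product to equal $\II$ (strictly stronger than matching only its low $n$ bits) does not break correctness: it does not, because when a full-product solution exists it necessarily coincides with the mod-$2^n$ solution $\psi$, and when none exists both sides of the Skolem condition are false.
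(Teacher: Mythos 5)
Your proposal is correct and follows the same skeleton as the paper's proof: both rest on the key observation that an odd $\YY$ is coprime to $2^n$, so any solution of the full product equation is uniquely pinned down by the congruence $\XX\cdot\YY \equiv \II \pmod{2^n}$, and both then exhibit an explicit polynomial-sized Skolem function for $\XX$ and invoke Theorem~\ref{thm:characterize} to get the {\eqnf} circuit. Where you genuinely differ is in how that Skolem function is constructed. The paper never introduces $\YY^{-1}$ explicitly: it derives each output bit directly from the structure of multiplication via the recurrence $\psi^{\XX_1} = \T$ and $\XX_i = \II_i \oplus (\XX_{1,i-1}\times\YY_{1,i})_i$, which is the bit-serial unwinding of exactly the function $\YY^{-1}\,\II \bmod 2^n$ that you compute. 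You instead obtain $\YY^{-1} \bmod 2^n$ in closed form by Hensel lifting ($z_{k+1} = z_k(2 - z_k\YY)$, with $\lceil \log_2 n\rceil$ iterations since $1 - z_{k+1}\YY = (1-z_k\YY)^2$ doubles the $2$-adic precision) and then perform one truncated multiplication by $\II$. Both constructions are correct and polynomial; yours modularizes the number theory and yields a clean explicit bound of $\bigO{n^2\log n}$, at the cost of having to verify the Newton-iteration convergence, while the paper's bit recurrence is more elementary and self-contained, requiring nothing beyond the carry structure of multiplication. Your explicit case analysis for the degenerate situations (even inputs, and $\II$ not divisible by $\YY$, where both sides of the Skolem condition are false) is also sound and makes precise a point the paper treats only implicitly.
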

\begin{proof}
For notational convenience,  we use $\XX$, $\II$ and $\YY$ to denote both sequences of Boolean variables, and also the unsigned integers represented by the corresponding bit-vectors.  We use $\times$ instead of $\times_{[n]}$ to denote $n$-bit unsigned integer multiplication, and $+$ (resp. $-$) to denote $n$-bit unsigned integer addition (resp. subtraction).

We give below a polynomial-sized Skolem function vector for division, which can be used to obtain a \eqnf\ form by Theorem~\ref{thm:characterize}.  

 Suppose we have inputs $\II$ and $\YY$ and we have to find $\XX$ such that $\XX\times \YY = \II$. We first show that for odd valued inputs, if $(\XX\times \YY)mod\ 2^n= \II\ mod\ 2^n$ and if $\II$ is divisible by $\YY$ then $\XX\times \YY = \II$. Suppose there are two values $\XX^1, \XX^2$ such that $(\XX^1\times \YY)\ mod\ 2^n= \II\ mod\ 2^n$ and $(\XX^2\times \YY)\ mod\ 2^n= \II\ mod\ 2^n$. Then  $(\XX^1-\XX^2)\times \YY\equiv ~0\ mod\ 2^n$. However,  $\YY$ is odd and therefore co-prime to $2^n$; hence $(\XX^1-\XX^2)\equiv ~0\ mod\ 2^n$. Since, $\XX^1, \XX^2 < 2^n$, we must have $\XX^1= \XX^2$. Therefore, the generated $\XX$ from the Skolem function vector is correct if there exists a solution that matches the least significant $n$ bits of $\II$. Using the notation defined above, denote $\XX_l$ to be the $l^{th}$ least significant bit of $\XX$ and $\XX_{l,j}$ to denote the bit vector from $\XX_l$ to $\XX_j$ (both included). Now, since the inputs $\II$ and $\YY$ are odd, $\YY_1 = \T, \II_1 = \T$. Therefore $\psi^{\XX_1} = \T$. \\
 \indent Now note that $(\XX\times \YY)_i = (\XX_{1,i}\times \YY_{1,i})_i = (\XX_{1,i-1}\times \YY_{1,i} + \XX_i\cdot\YY_1\times \big(2^{i-1}\big)_i = \XX_i\cdot\YY_1\xor(\XX_{1,i-1}\times \YY_{1,i})_i$ (using the structure of multiplication), where "$\cdot$" denotes $1$-bit multiplication and $\xor$ denotes $1$-bit addition modulo $2$. Therefore, $\II_i = \XX_i\xor (\XX_{1,i-1}\times \YY_{1,i})_i$, or equivalently, $\XX_i = \II_i \xor (\XX_{1,i-1}\times \YY_{1,i})_i$. 
 
 It is now easy to see that once we obtain a Skolem function for $\psi^{\XX_1}$ to $\psi^{\XX_{i-1}}$ in this manner, we can recursively generate the skolem function for $\XX_i$, giving the entire Skolem function vector for $\XX$.
\end{proof}
Note that while we have used a specific Skolem function vector above, once  the \eqnf\ form is obtained, it can be used to generate other Skolem function vectors as well (from Algorithm ~\ref{alg:skgen}).

\section{Conclusion}\label{sec:concl}
In this paper, we presented a normal form for Boolean relational specifications that characterizes efficient Skolem function synthesis. This is a significantly stronger characterization than those used in earlier works.  \eqnfs\ is exponentially more succinct than {\DNNF}, {\dDNNF} while enjoying similar composability properties. It also strictly subsumes the recently proposed \snf.  As future work, we plan to improve the compilation algorithm and apply it to challenging benchmarks. It would also be interesting to see if similar characterizations or normal forms exist and are efficiently computable for Skolem functions for first order logic~\cite{AC21-arxiv}, i.e., beyond the propositional case that we treated in this work.

\bibliography{ref}
\bibliographystyle{IEEEtran}
\end{document}